\newtheorem{theorem}{Theorem}[section]
\newtheorem{lemma}[theorem]{Lemma}
\newtheorem{corollary}[theorem]{Corollary}
\newtheorem{proposition}[theorem]{Proposition}
\newcommand\tr{{{\operatorname{trace}}}}
\renewcommand{\L}{{\mathcal L}}
\newcommand{\eps}{\varepsilon}
\renewcommand{\j}{{x}}
\renewcommand{\k}{{y}}
\newcommand{\U}{{U}}
\newcommand{\V}{{V}}
\newcommand{\R}{\mathbb{R}}
\newcommand{\<}{\langle}
\renewcommand{\>}{\rangle}
\newcommand{\goto}{\rightarrow}
\renewcommand{\P}{\operatorname{\mathbb{P}}}
\newcommand{\E}{\operatorname{\mathbb{E}}}
\newcommand{\PO}{\mathcal{P}_\Omega}
\newcommand{\PT}{\mathcal{P}_T}
\newcommand{\PTp}{\mathcal{P}_{T^\perp}}
\newcommand{\QO}{\mathcal{Q}_\Omega}
\newcommand{\QT}{\mathcal{Q}_T}
\newcommand{\rmu}{r_\mu}
\newcommand{\vct}[1]{{#1}}
\newcommand{\mtx}[1]{{#1}}
\newcommand{\rank}{\operatorname{rank}}
\newcommand{\trace}{\operatorname{trace}}
\newcommand{\OpId}{\mathcal{I}}
\numberwithin{equation}{section}
\def \endprf{\hfill {\vrule height6pt width6pt depth0pt}\medskip}
\newenvironment{proof}{\noindent {\bf Proof} }{\endprf\par}
\title{The Power of Convex Relaxation:\\
Near-Optimal Matrix Completion}
\author{Emmanuel J. Cand\`es$^{\dagger}$ and Terence Tao$^{\sharp}$\\
  \vspace{-.1cm}\\
  $\dagger$ Applied and Computational Mathematics,
  Caltech, Pasadena, CA 91125\\
  \vspace{-.3cm}\\
  $\sharp$ Department of Mathematics, University of California,
  Los Angeles, CA 90095}
\date{\today}
\begin{document}

\maketitle

\vspace{-0.3in}

\begin{abstract}
  This paper is concerned with the problem of recovering an unknown
  matrix from a small fraction of its entries. This is known as the
  {\em matrix completion} problem, and comes up in a great number of
  applications, including the famous {\em Netflix Prize} and other
  similar questions in collaborative filtering.  In general, accurate
  recovery of a matrix from a small number of entries is impossible; but the
  knowledge that the unknown matrix has low rank radically changes
  this premise, making the search for solutions meaningful.

  This paper presents optimality results quantifying the minimum
  number of entries needed to recover a matrix of rank $r$ exactly by
  any method whatsoever (the information theoretic limit). More
  importantly, the paper shows that, under certain incoherence assumptions on the 
  singular vectors of the matrix, recovery is possible by solving a
  convenient convex program as soon as the number of entries is on the
  order of the information theoretic limit (up to logarithmic factors).  This convex program
  simply finds, among all matrices consistent with the observed
  entries, that with minimum nuclear norm. As an example, we show that
  on the order of $nr \log(n)$ samples are needed to recover a random
  $n \times n$ matrix of rank $r$ by any method, and to be sure,
  nuclear norm minimization succeeds as soon as the number of entries
  is of the form $nr \text{polylog}(n)$.
\end{abstract}

{\bf Keywords.}  Matrix completion, low-rank matrices, semidefinite
programming, duality in optimization, nuclear norm minimization,
random matrices and techniques from random matrix theory, free
probability. 

\maketitle

\section{Introduction}
\label{sec:intro}

\subsection{Motivation}
\label{sec:motivation}

Imagine we have an $n_1 \times n_2$ array of real\footnote{Much of the
  discussion below, as well as our main results, apply also to the
  case of complex matrix completion, with some minor adjustments in
  the absolute constants; but for simplicity we restrict attention to
  the real case.} numbers and that we are interested in knowing the
value of each of the $n_1 n_2$ entries in this array. Suppose,
however, that we only get to see a small number of the entries so that
most of the elements about which we wish information are simply
missing. Is it possible from the available entries to guess the many
entries that we have not seen?  This problem is now known as the {\em
  matrix completion} problem \cite{CR}, and comes up in a great number
of applications, including the famous {\em Netflix Prize} and other
similar questions in collaborative filtering \cite{Goldberg92}. In a
nutshell, collaborative filtering is the task of making automatic
predictions about the interests of a user by collecting taste
information from many users. Netflix is a commercial company
implementing collaborative filtering, and seeks to predict users'
movie preferences from just a few ratings per user. There are many
other such recommendation systems proposed by Amazon, Barnes and
Noble, and Apple Inc.~to name just a few. In each instance, we have a
partial list about a user's preferences for a few rated items, and
would like to predict his/her preferences for all items from this and
other information gleaned from many other users.

In mathematical terms, the problem may be posed as follows: we have a
data matrix $M \in \R^{n_1 \times n_2}$ which we would like to know as
precisely as possible. Unfortunately, the only information available
about $M$ is a sampled set of entries $M_{ij}$, $(i,j) \in \Omega$,
where $\Omega$ is a subset of the complete set of entries $[n_1]
\times [n_2]$. (Here and in the sequel, $[n]$ denotes the list
$\{1,\ldots,n\}$.) Clearly, this problem is ill-posed for there is no
way to guess the missing entries without making any assumption about
the matrix $M$.

An increasingly common assumption in the field is to suppose that the
unknown matrix $M$ has low rank or has approximately low rank. In a
recommendation system, this makes sense because often times, only a
few factors contribute to an individual's taste. In \cite{CR}, the
authors showed that this premise radically changes the problem, making
the search for solutions meaningful. Before reviewing these results,
we would like to emphasize that the problem of recovering a low-rank
matrix from a sample of its entries, and by extension from fewer
linear functionals about the matrix, comes up in many application
areas other than collaborative filtering. For instance, the completion
problem also arises in computer vision. There, many pixels may be
missing in digital images because of occlusion or tracking failures in
a video sequence. Recovering a scene and inferring camera motion from
a sequence of images is a matrix completion problem known as the
structure-from-motion problem \cite{Tomasi,ChenSuter}. Other examples
include system identification in control \cite{Mesbahi97}, multi-class
learning in data analysis \cite{Abernethy06, Argyriou07, Amit07},
global positioning---e.g. of sensors in a network---from partial
distance information \cite{TohYe,Singer1,Singer2}, remote sensing
applications in signal processing where we would like to infer a full
covariance matrix from partially observed correlations \cite{Brian},
and many statistical problems involving succinct factor models.

\subsection{Minimal sampling}
\label{sec:minimal}

This paper is concerned with the theoretical underpinnings of matrix
completion and more specifically in quantifying the minimum number of
entries needed to recover a matrix of rank $r$ exactly. This number
generally depends on the matrix we wish to recover. For simplicity,
assume that the unknown rank-$r$ matrix $M$ is $n \times n$. Then it
is not hard to see that matrix completion is impossible unless the
number of samples $m$ is at least $2nr - r^2$, as a matrix of rank $r$
depends on this many degrees of freedom. The singular value
decomposition (SVD)
\begin{equation}
  \label{eq:svd}
  \mtx{M} = \sum_{k \in [r]} \sigma_k \vct{u}_k \vct{v}_k^*,
\end{equation}
where $\sigma_1,\ldots,\sigma_r \geq 0$ are the singular values, and
the singular vectors $u_1,\ldots,u_r \in \R^{n_1} = \R^n$ and
$v_1,\ldots,v_r \in \R^{n_2} = \R^n$ are two sets of orthonormal
vectors, is useful to reveal these degrees of freedom. Informally, the
singular values $\sigma_1 \ge \ldots \ge \sigma_r$ depend on $r$
degrees of freedom, the left singular vectors $\vct{u}_k$ on $(n-1) +
(n-2) + \ldots + (n-r) = nr - r(r+1)/2$ degrees of freedom, and
similarly for the right singular vectors $\vct{v}_k$. If $m < 2nr -
r^2$, no matter which entries are available, there can be an infinite
number of matrices of rank at most $r$ with exactly the same entries,
and so exact matrix completion is impossible.  In fact, if the
observed locations are sampled at random, we will see later that the
minimum number of samples is better thought of as being on the order
of $nr \log n$ rather than $nr$ because of a coupon collector's
effect.

In this paper, we are interested in identifying large classes of
matrices which can provably be recovered by a tractable algorithm from
a number of samples approaching the above limit, i.e.~from about $nr
\log n$ samples. Before continuing, it is convenient to introduce some
notations that will be used throughout: let $\PO: \R^{n \times n} \to
\R^{n \times n}$ be the orthogonal projection onto the subspace of
matrices which vanish outside of $\Omega$ ($(i,j) \in \Omega$ if and
only if $M_{ij}$ is observed); that is, $\mtx{Y} = \PO(\mtx{X})$ is
defined as
\[
Y_{ij} = \begin{cases} X_{ij}, & (i,j) \in \Omega,\\
  0, & \text{otherwise}, \end{cases}
\]
so that the information about $M$ is given by $\PO(\mtx{M})$. The
matrix $M$ can be, in principle, recovered from $\PO(\mtx{M})$ if
it is the unique matrix of rank less or equal to $r$ consistent with
the data. In other words, if $M$ is the unique solution to
\begin{equation}
\label{eq:rank}
  \begin{array}{ll}
    \text{minimize}   & \quad \rank(X)\\
    \text{subject to} & \quad \PO(\mtx{X}) = \PO(\mtx{M}). 
 \end{array}
\end{equation}
Knowing when this happens is a delicate question which shall be
addressed later. For the moment, note that attempting recovery via
\eqref{eq:rank} is not practical as rank minimization is in general an NP-hard problem for
which there are no known algorithms capable of solving problems in practical time once, say, $n \geq 10$.

In \cite{CR}, it was proved 1) that matrix completion is not as
ill-posed as previously thought and 2) that exact matrix completion is
possible by convex programming. The authors of \cite{CR} proposed
recovering the unknown matrix by solving the nuclear norm minimization
problem
\begin{equation}
\label{eq:cvx2}
  \begin{array}{ll}
    \text{minimize}   & \quad \|\mtx{X}\|_*\\
    \text{subject to} & \quad \PO(\mtx{X}) = \PO(\mtx{M}), 
 \end{array}
\end{equation}
where the \emph{nuclear norm} $\| X \|_*$ of a matrix $X$ is defined as the sum of its singular values,
\begin{equation}
  \label{eq:nuclear}
  \|X\|_* := \sum_{i} \sigma_i(X).
\end{equation}
(The problem \eqref{eq:cvx2} is a semidefinite program
\cite{fazelRank}.) They proved that if $\Omega$ is sampled uniformly
at random among all subset of cardinality $m$ and $M$ obeys a low
coherence condition which we will review later, then with large
probability, the unique solution to \eqref{eq:cvx2} is exactly $M$,
provided that the number of samples obeys
\begin{equation}
  \label{eq:CR}
  m \ge C \, n^{6/5} r \, \log n  
\end{equation}
(to be completely exact, there is a restriction on the range of values
that $r$ can take on). 

In \eqref{eq:CR}, the number of samples per degree of freedom is
not logarithmic or polylogarithmic in the dimension, and one would
like to know whether better results approaching the $nr \log n$ limit
are possible. This paper provides a positive answer. In details, this
work develops many useful matrix models for which nuclear norm
minimization is guaranteed to succeed as soon as the number of entries
is of the form $nr \text{polylog}(n)$.

\subsection{Main results}
\label{sec:main}

A contribution of this paper is to develop simple hypotheses about the
matrix $M$ which makes it recoverable by semidefinite programming from
nearly minimally sampled entries. To state our assumptions, we recall
the SVD of $M$ \eqref{eq:svd} and denote by $P_U$ (resp.~$P_V$) the
orthogonal projections onto the column (resp.~row) space of $M$;
i.e.~the span of the left (resp.~right) singular vectors. Note that
\begin{equation}\label{puv}
P_U = \sum_{i \in [r]} u_i u_i^*; \quad P_V = \sum_{i \in [r]} v_i v_i^*.
\end{equation}
Next, define the matrix $E$ as
\begin{equation}
  \label{eq:E}
  E := \sum_{i \in [r]} u_i v_i^*.
\end{equation}
We observe that $E$ interacts well with $P_U$ and $P_V$, in particular obeying the identities
$$ P_U E = E = E P_V; \quad E^* E = P_V; \quad EE^* = P_U.$$
One can view $E$ as a sort of matrix-valued ``sign pattern'' for $M$ (compare \eqref{eq:E} with \eqref{eq:svd}), and is also closely related to the subgradient $\partial \| M \|_*$ of the nuclear norm at $M$ (see \eqref{eqn:subdiffNorm}). 

It is clear that some assumptions on the singular vectors $u_i, v_i$ (or on the spaces $U, V$) is needed in order to have a hope of efficient matrix completion.  For instance, if $u_1$ and $v_1$ are Kronecker delta functions at positions $i, j$ respectively, then the singular value $\sigma_1$ can only be recovered if one actually samples the $(i,j)$ coordinate, which is only likely if one is sampling a significant fraction of the entire matrix.  Thus we need the vectors $u_i,v_i$ to be ``spread out'' or ``incoherent'' in some sense.  In our arguments, it will be convenient to phrase this incoherence assumptions using the projection matrices $P_U, P_V$ and the sign pattern matrix $E$.  More precisely, our assumptions are as follows.

\begin{description}
\item[{A1}] There exists $\mu_1 > 0$ such that for all pairs $(a,a') \in
  [n_1] \times [n_1]$ and $(b,b') \in [n_2] \times [n_2]$,
\begin{subequations}
\label{eq:grp}
\begin{align}
\label{eq:block}
\Bigl|\<e_a, P_U e_{a'}\> - \frac{r}{n_1} 1_{a = a'}\Bigr| & \le \mu_1 \frac{\sqrt{r}}{n_1},\\
\label{eq:block2}
\Bigl|\<e_b, P_V e_{b'}\> - \frac{r}{n_2} 1_{b = b'}\Bigr| & \le
\mu_1 \frac{\sqrt{r}}{n_2}.
\end{align}
\end{subequations}
\item[{A2}] There exists $\mu_2 > 0$ such that for all $(a,b) \in [n_1]
  \times [n_2]$,
\begin{equation}\label{eab}
 |E_{ab}| \le \mu_2 \frac{\sqrt{r}}{\sqrt{n_1 n_2}}. 
\end{equation}
\end{description}
We will say that the matrix $M$ obey the {\em strong incoherence
  property} with parameter $\mu$ if one can take $\mu_1$ and $\mu_2$
both less than equal to $\mu$.  (This property is related to, but slightly different from, the \emph{incoherence property}, which will be discussed in Section \ref{inco}.)

\emph{Remark.}  Our assumptions only involve the singular vectors $u_1,\ldots,u_r,v_1,\ldots,v_r$ of $M$; the singular \emph{values} $\sigma_1,\ldots,\sigma_r$ are completely unconstrained.  This lack of dependence on the singular values is a consequence of the geometry of the nuclear norm (and in particular, the fact that the subgradient $\partial \| X \|_*$ of this norm is independent of the singular values, see \eqref{eqn:subdiffNorm}).

It is not hard to see that $\mu$ must be greater than 1.  For instance, \eqref{eab} implies
\[
r = \sum_{(a,b) \in [n_1] \times [n_2]} |E_{ab}|^2 \le \mu_2^2 \, r
\]
which forces $\mu_2 \geq 1$.  The Frobenius norm identities 
\[
r = \|P_U\|_F^2 = \sum_{a,a' \in [n_1]} |\< e_a, P_U e_{a'} \>|^2
\] 
and \eqref{eq:block}, \eqref{eq:block2} also place a similar lower
bound on $\mu_1$.

We will show that 1) matrices obeying the strong incoherence property
with a small value of the parameter $\mu$ can be recovered from fewer
entries and that 2) many matrices of interest obey the strong
incoherence property with a small $\mu$.  We will shortly develop
three models, the {\em uniformly bounded orthogonal model}, the {\em
  low-rank low-coherence model}, and the {\em random orthogonal model}
which all illustrate the point that if the singular vectors of $M$ are
``spread out'' in the sense that their amplitudes all have about the
same size, then the parameter $\mu$ is low. In some sense, ``most''
low-rank matrices obey the strong incoherence property with $\mu =
O(\sqrt{\log n})$, where $n = \max(n_1,n_2)$. Here, $O(\cdot)$ is the
standard asymptotic notation, which is reviewed in Section
\ref{sec:notation}.

Our first matrix completion result is as follows.

\begin{theorem}[Matrix completion I]\label{teo:main1}  
  Let $M \in \R^{n_1 \times n_2}$ be a fixed matrix of rank $r = O(1)$
  obeying the strong incoherence property with parameter $\mu$.  Write
  $n := \max(n_1,n_2)$.  Suppose we observe $m$ entries of $\mtx{M}$
  with locations sampled uniformly at random.  Then there is a
  positive numerical constant $C$ such that if
  \begin{equation}
    \label{eq:main1}
    m \ge C\, \mu^4 n  (\log n)^2,
  \end{equation}
  then $M$ is the unique solution to \eqref{eq:cvx2} with probability
  at least $1 - n^{-3}$.  In other words: with high probability,
  nuclear-norm minimization recovers all the entries of $\mtx{M}$ with
  no error.
\end{theorem}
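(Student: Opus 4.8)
The plan is to run the standard convex-duality / dual-certificate argument for nuclear-norm minimization, producing the certificate by a sample-splitting (``golfing'') construction. Write $\rho := m/(n_1 n_2)$, let $T$ be the tangent space to the rank-$r$ matrices at $M$ --- concretely $T = \{P_U Z + Z P_V - P_U Z P_V : Z \in \R^{n_1\times n_2}\}$, with orthogonal projection $\PT$ and $\PTp := \OpId - \PT$ --- and recall $E$ from \eqref{eq:E}, the canonical element of the subdifferential $\partial\|M\|_*$ lying in $T$. The first step is the sufficient condition: if $\PO$ is injective on $T$ and there is a $Y$ with $\PO Y = Y$, $\|\PT Y - E\|_F \le n^{-3}$ and $\|\PTp Y\| < 1/2$, then $M$ is the unique minimiser of \eqref{eq:cvx2}. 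Indeed, for feasible $Z \ne 0$ ($\PO Z = 0$) pick $W_0$ with $\PT W_0 = 0$, $\|W_0\| \le 1$ and $\langle W_0, \PTp Z\rangle = \|\PTp Z\|_*$; then $E + W_0 \in \partial\|M\|_*$ and, using $\langle Y, Z\rangle = 0$,
\[
\|M + Z\|_* \ \ge\ \|M\|_* + \bigl(1 - \|\PTp Y\| - \sqrt{2/\rho}\;\|\PT Y - E\|_F\bigr)\,\|\PTp Z\|_* ,
\]
where $\|\PT Z\|_F \le \sqrt{2/\rho}\,\|\PTp Z\|_F$ follows from the near-isometry of $\PO$ on $T$ (Step 2) together with $\PO\PT Z = -\PO\PTp Z$; injectivity of $\PO$ on $T$ forces $\PTp Z \ne 0$ (else $Z \in T$ and $\PO Z = 0$ give $Z = 0$), so the bracket is positive --- its last term is $o(1)$ under \eqref{eq:main1} --- and the minimum is strict.

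Step 2 is the near-isometry estimate: with probability at least $1 - n^{-4}$,
\[
\bigl\|\PT(\rho^{-1}\PO - \OpId)\PT\bigr\| \ \le\ \tfrac12 .
\]
Here $\rho^{-1}\PO = \sum_{(a,b)} \rho^{-1}\xiab\,\langle\eab,\cdot\rangle\,\eab$ with $\xiab$ essentially independent Bernoulli($\rho$), so $\PT(\rho^{-1}\PO - \OpId)\PT$ is a sum of independent, mean-zero, self-adjoint operators on $T$; each summand has norm and variance controlled by $\rho^{-1}\|\PT\eab\|_F^2$, and \eqref{eq:block}--\eqref{eq:block2} give $\|\PT\eab\|_F^2 \lesssim \mu_1 r/\min(n_1,n_2)$. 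A matrix Bernstein (equivalently noncommutative Khintchine) inequality then gives the bound once $m \gtrsim \mu_1^2\, n r \log n$; since $r = O(1)$ this is ensured by \eqref{eq:main1}, as is the analogous bound for any subsample of $\Omega$ of size $\asymp m/\log n$.

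For Step 3, partition $\Omega$ into $\ell := \lceil C_0\log n\rceil$ independent batches $\Omega_1,\dots,\Omega_\ell$, each of size $\asymp m/\ell$ with its own rate $\rho_k$ (a routine reduction between the uniform, Bernoulli and with-replacement sampling models). Put $W_0 := E$, $Y_0 := 0$, and iterate
\[
Y_k := Y_{k-1} + \rho_k^{-1}\mathcal{P}_{\Omega_k} W_{k-1}, \qquad W_k := E - \PT Y_k = \bigl(\PT - \rho_k^{-1}\PT\mathcal{P}_{\Omega_k}\PT\bigr)W_{k-1}\in T .
\]
Step 2 applied to each batch gives $\|W_k\|_F \le \tfrac12\|W_{k-1}\|_F$, so $\|\PT Y_\ell - E\|_F = \|W_\ell\|_F \le 2^{-\ell}\sqrt r \le n^{-3}$ for $C_0$ large; set $Y := Y_\ell$, which obeys $\PO Y = Y$. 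Since $\PTp W_{k-1} = 0$,
\[
\PTp Y = \sum_{k=1}^\ell \PTp\bigl(\rho_k^{-1}\mathcal{P}_{\Omega_k} - \OpId\bigr) W_{k-1} ,
\]
and the decisive fact is that $W_{k-1}$ depends on $\Omega_1,\dots,\Omega_{k-1}$ only, hence is independent of $\Omega_k$: conditionally on $W_{k-1}$ the $k$-th term is a sum of independent mean-zero matrices, and matrix Bernstein bounds its operator norm by $O(\log n)$ times $\|W_{k-1}\|_\infty\sqrt{n\ell/m}$ plus lower-order terms, provided one also tracks the entrywise norm $\|W_k\|_\infty$ and the row/column norms $\max_b\|W_k e_b\|_2$, $\max_a\|W_k^* e_a\|_2$. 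These are controlled by a parallel induction seeded by A2 --- which gives $\|E\|_\infty \le \mu_2\sqrt{r/(n_1 n_2)}$, see \eqref{eab} --- and by A1 for the row/column norms, and each also contracts geometrically with high probability when $m/\ell \gtrsim \mu^2 n r \log n$. Summing over $k$ yields $\|\PTp Y\| \le 1/2$ once $m \gtrsim \mu^4 n r (\log n)^2$, i.e.\ \eqref{eq:main1} for $r = O(1)$; a union bound over the $O(\log n)$ concentration events plus the event of Step 2 leaves probability at least $1 - n^{-3}$, and Step 1 concludes.

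I expect this last step --- the coupled induction for $\PTp Y$ --- to be the main obstacle: controlling the operator norm of the certificate forces one to propagate the Frobenius, entrywise, and row/column norms of the random matrices $W_k$ simultaneously, to show each contracts, and to balance the errors so the powers of $\mu$ and $\log n$ land at $\mu^4$ and $(\log n)^2$. A sample-splitting-free alternative is to take the least-squares certificate $Y = \rho^{-1}\PO\PT\sum_{j\ge 0}(\PT - \rho^{-1}\PT\PO\PT)^j E$ on the single sample $\Omega$ and to bound $\E\,\|\PTp Y\|^{2p}$ for large $p$ by a direct combinatorial estimate on sums over $\Omega$; this trades the splitting for an intricate moment computation, and I would expect the paper to follow one of these two routes.
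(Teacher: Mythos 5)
Your argument is correct in outline but takes a genuinely different route from the paper's. You build an \emph{inexact} dual certificate by the sample-splitting (``golfing'') construction: split $\Omega$ into $O(\log n)$ independent batches, iterate $W_k = (\PT - \rho_k^{-1}\PT\mathcal{P}_{\Omega_k}\PT)W_{k-1}$, and control $\|\PTp Y\|$ batch by batch via a matrix Bernstein inequality, conditioning each batch on the previous ones while propagating the entrywise and row/column norms of $W_k$. The paper instead builds the \emph{exact} least-squares certificate $Y = \PO\PT(\PT\PO\PT)^{-1}E$ on the full sample, expands $\PTp Y$ as the Neumann series $\sum_{k\geq 0}(-1)^k\PTp(\QO\PT)^k\QO E$, and bounds each $\|(\QO\QT)^k\QO E\|$ by the moment method: evaluating $\E\,\tr(A^*A)^j$ for $A = (\QO\QT)^k\QO E$ via a combinatorial enumeration of ``spider'' paths in $[n]\times[n]$, simplified by cancellation identities coming from $P_U^2 = P_U$ and $E^*E = P_V$. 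Your relaxed sufficient condition (tolerating $\|\PT Y - E\|_F$ small rather than zero) is sound as you derive it, and the golfing route is arguably more powerful: it delivers linear dependence on $r$ directly, whereas the moment estimate behind Theorem 1.1 gives only $m \geq C\mu^4 n r^2 (\log n)^2$ --- the paper's own linear-in-$r$ statement (Theorem 1.2) requires the substantially more delicate refined moment bound of Section 6. The price is the coupled induction you flag --- tracking $\|W_k\|_F$, $\|W_k\|_\infty$, and row/column norms simultaneously and showing each contracts --- which is the real technical content and which your sketch defers; the moment method avoids that bookkeeping at the cost of the path-counting. Your closing guess is right: the paper follows precisely the ``sample-splitting-free alternative'' you describe.
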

This result is noteworthy for two reasons. The first is that the
matrix model is deterministic and only needs the strong incoherence
assumption. The second is more substantial.  Consider the class of
bounded rank matrices obeying $\mu = O(1)$. We shall see that no
method whatsoever can recover those matrices unless the number of
entries obeys $m \ge c_0 \, n \log n$ for some positive numerical
constant $c_0$; this is the information theoretic limit. Thus Theorem
\ref{teo:main1} asserts that exact recovery by nuclear-norm
minimization occurs nearly as soon as it is information theoretically
possible. Indeed, if the number of samples is slightly larger, by a
logarithmic factor, than the information theoretic limit, then
\eqref{eq:cvx2} fills in the missing entries with no error.

We stated Theorem \ref{teo:main1} for bounded ranks, but our proof
gives a result for all values of $r$. Indeed, the argument will establish that the
recovery is exact with high probability provided that
\begin{equation}
m \ge C\, \mu^4 n r^2 (\log n)^2.
\label{eq:quadratic}
\end{equation}
When $r = O(1)$, this is Theorem \ref{teo:main1}. We will prove a stronger and
near-optimal result below (Theorem \ref{teo:main2}) in which we 
replace the quadratic dependence on $r$ with linear dependence. The reason why we state
Theorem \ref{teo:main1} first is that its proof is somewhat simpler than
that of Theorem \ref{teo:main2}, and we hope that it will provide the
reader with a useful lead-in to the claims and proof of our main
result. 
\begin{theorem}[Matrix completion II]
\label{teo:main2}
Under the same hypotheses as in Theorem \ref{teo:main1}, there is a
numerical constant $C$ such that if
  \begin{equation}
    \label{eq:main2}
    m \ge C\, \mu^2 nr \log^6 n,
  \end{equation}
  $M$ is the unique solution to \eqref{eq:cvx2} with probability at
  least $1 - n^{-3}$.
\end{theorem}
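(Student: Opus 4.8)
\medskip\noindent\emph{Proof strategy.}\quad The plan is to exhibit an (approximate) dual certificate for $M$, with the supporting large-deviation estimates pushed to their near-optimal, linear-in-$r$ form. Set $T := \{P_U Z + Z P_V : Z \in \R^{n_1\times n_2}\}$ --- the tangent space to the rank-$r$ manifold at $M$ --- so that $\PT(Z) = P_U Z + Z P_V - P_U Z P_V$ and $E \in T$, and let $\rho := m/(n_1 n_2)$. After the standard reduction from the uniform-$m$-subset model to the Bernoulli model (in which each entry is observed independently with probability $\rho$; the two are equivalent for exact-recovery purposes up to constants), $\PO = \sum_{(a,b)} \delta_{ab}\, \langle e_a e_b^*,\,\cdot\,\rangle\, e_a e_b^*$ with the $\delta_{ab}$ i.i.d.\ Bernoulli($\rho$). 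The first step is the deterministic duality lemma: $M$ is the \emph{unique} solution of \eqref{eq:cvx2} provided (i) $\|\PT\PO\PT - \rho\,\PT\| \le \rho/2$ as an operator on $\R^{n_1\times n_2}$ (so that $\PO$ is injective on $T$), and (ii) there is a matrix $Y$ with $Y = \PO(Y)$, $\PT(Y) = E$, and $\|\PTp(Y)\| < 1$ (spectral norm). Indeed, using the description $\partial\|M\|_* = \{E + W : \PT(W) = 0,\ \|W\|\le 1\}$, for any nonzero feasible perturbation $Z$ (so $\PO(Z) = 0$, whence $\langle Y,Z\rangle = 0$) one has $\|M+Z\|_* \ge \|M\|_* + \langle E + W, Z\rangle$ for all admissible $W$; choosing $W$ to essentially attain $\|\PTp(Z)\|_*$ and invoking (i) to exclude $\PT(Z)=0$ forces $\|M+Z\|_* > \|M\|_*$.

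For (i): the super-operator $\PT\PO\PT - \rho\,\PT = \sum_{(a,b)}(\delta_{ab}-\rho)\,\PT(e_a e_b^*)\langle\PT(e_a e_b^*),\,\cdot\,\rangle$ is a sum of independent, self-adjoint, mean-zero terms whose operator norms equal $\|\PT(e_a e_b^*)\|_F^2 = \langle e_a, P_U e_a\rangle + \langle e_b, P_V e_b\rangle - \langle e_a, P_U e_a\rangle\langle e_b, P_V e_b\rangle = O(\mu r/\min(n_1,n_2))$ by \eqref{eq:block}--\eqref{eq:block2}, while $\sum_{(a,b)}\PT(e_ae_b^*)\langle\PT(e_ae_b^*),\,\cdot\,\rangle = \PT$. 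A matrix Bernstein (Chernoff) inequality then gives (i) with probability at least $1-n^{-4}$ once $m \gtrsim \mu\, n r\log n$, where $n = \max(n_1,n_2)$.

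Granting (i), $\PT\rho^{-1}\PO\PT$ is invertible on $T$ with $\|\PT - \PT\rho^{-1}\PO\PT\| \le 1/2$, so we may set
\[
Y := \rho^{-1}\PO\PT\,(\PT\rho^{-1}\PO\PT)^{-1}E,
\]
which satisfies $Y = \PO(Y)$ and $\PT(Y) = E$ by construction. Writing $W_k := (\PT - \PT\rho^{-1}\PO\PT)^k E \in T$ (so $W_0 = E$) and expanding the inverse in a Neumann series, $\PTp\PT = 0$ gives
\[
\PTp(Y) = \sum_{k \ge 0} \PTp\bigl(\rho^{-1}\PO - \OpId\bigr) W_k, \qquad\text{hence}\qquad \|\PTp(Y)\| \le \sum_{k\ge0}\bigl\|(\rho^{-1}\PO - \OpId) W_k\bigr\|.
\]
To bound the right-hand side below $1$ we need (a) the contraction $\|W_k\|_F \le 2^{-k}\|E\|_F = 2^{-k}\sqrt r$, immediate from (i), together with an analogous geometric decay of $W_k$ in a coherence-sensitive norm $\|\cdot\|_{\mathrm{w}}$ combining the entrywise $\ell^\infty$ norm with the largest $\ell^2$ norm of any row and of any column (suitably rescaled, so that $\|E\|_{\mathrm{w}} = O(\mu\sqrt r)$ by \eqref{eab} and \eqref{eq:block}); and (b) the key operator-norm estimate $\|(\rho^{-1}\PO - \OpId)F\| \lesssim \sqrt{n\log n / m}\;\|F\|_{\mathrm{w}}$ for $F \in T$. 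Combining (a) and (b), the series telescopes to $\|\PTp(Y)\| \lesssim \sqrt{\mu^2 n r\log n / m} < 1$ for $C$ large enough. (In fact one truncates the Neumann series after $k_0 = O(\log n)$ terms --- the tail being disposed of crudely via the Frobenius bound of step two --- and the logarithmic losses incurred along the way, together with the failure probabilities one must union over, accumulate to the stated $\log^6 n$ rather than a single log.)

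The real obstacle is establishing (a) and (b) --- and more precisely their \emph{joint} validity for the specific matrices $W_k$ --- at the near-optimal sample size $m \gtrsim \mu^2 n r$. The subtlety is that each $W_k$ is itself a function of the random set $\Omega$, so (b) cannot simply be quoted for a fixed $F$; instead one must control the single random matrix $\PTp(\rho^{-1}\PO - \OpId)(\PT - \PT\rho^{-1}\PO\PT)^k E$ directly. A crude route --- applying matrix Bernstein termwise, or union-bounding (b) over an $\varepsilon$-net of the unit ball of $T$ --- is lossy and yields only the quadratic-in-$r$ bound \eqref{eq:quadratic} behind Theorem \ref{teo:main1}. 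Achieving the linear dependence requires estimating high moments of these random matrices --- i.e.\ bounding $\E\,\trace\bigl[\bigl(N N^*\bigr)^p\bigr]$ for $N = \PTp(\rho^{-1}\PO - \OpId)(\PT - \PT\rho^{-1}\PO\PT)^k E$ --- by expanding into sums over long tuples of index pairs $(a,b)$, organizing the tuples by the partition they induce, and bounding the resulting combinatorial sums using \emph{only} the incoherence estimates A1--A2; this enumeration is the technical heart of the argument, and is where the random-matrix / free-probability flavor enters. (An alternative that sidesteps the dependence issue is a ``golfing'' construction: build $Y$ additively from $O(\log n)$ \emph{independent} batches $\Omega_1,\dots,\Omega_{k_0}$ of observations, so that $W_{k-1}$ is independent of $\Omega_k$ and (b) applies conditionally as a fixed-$F$ bound; but even then, obtaining (a) and (b) with the sharp $\mu^2 n r$ scaling still rests on the same delicate estimates.) The remaining ingredients --- the model reduction, the duality lemma, and the bookkeeping of the series --- are routine by comparison.
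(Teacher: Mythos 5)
Your high-level scaffolding matches the paper almost exactly: the same duality lemma, the same injectivity condition via the operator-norm deviation bound on $\PT\PO\PT$, the same explicit certificate $Y = \PO\PT(\PT\PO\PT)^{-1}E$, and the same Neumann expansion of $\PTp(Y)$ into the series $\sum_k (-1)^k\PTp(\QO\PT)^k\QO(E)$ with $\QO = p^{-1}\PO - \OpId$. You also correctly diagnose that the heart of the linear-in-$r$ improvement is a high-moment (trace-power) estimate for these random matrices, obtained by expanding into combinatorial sums over index tuples and using only \textbf{A1}--\textbf{A2}. All of that is faithful to the paper.

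However, the proposal stops at exactly the point where the paper's proof actually begins. Theorem \ref{teo:main2} differs from Theorem \ref{teo:main1} only in the moment bound it invokes --- Theorem \ref{teo:moment2} versus Theorem \ref{teo:moment1} --- and you give no argument for that bound, only a one-sentence description of its flavor (``organizing the tuples by the partition they induce\dots this enumeration is the technical heart''). The paper's Section \ref{sec:moment2} is several pages: it first splits $\PT = \QT + \rho'\OpId$ (the ``centering'' step, Lemma \ref{teo:equiv}) so that the coefficients $c_{ab,a'b'}$ of $\QT$ are uniformly of size $O(\mu\sqrt r/n)$ rather than $O(r/n)$ on the diagonal; then, crucially, it does \emph{not} just take absolute values in the moment expansion but exploits the algebraic cancellation identities $Q_U^2 = (1-2\rho)Q_U - \rho(1-\rho)I$, $Q_U E = (1-\rho)E$, $EE^* = Q_U + \rho I$ (equations \eqref{proj-id}--\eqref{proj-i3d}) through a two-level strong induction on $|J|+|K|$ and $|\L_U\cap\L_V|$ to collapse low-multiplicity rows and columns. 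Taking absolute values directly --- which is what your sketch implicitly amounts to --- gives only the $r^2$ bound of Theorem \ref{teo:moment1}. Without the centering step and the cancellation-driven induction, the linear-in-$r$ claim does not follow. So there is a genuine gap: the one estimate that distinguishes this theorem from the previous one is left unproven.

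A secondary point: your proposed ``(a) plus (b)'' route --- a geometric contraction of $W_k$ in a coherence-weighted norm $\|\cdot\|_{\mathrm w}$ (mixing $\ell^\infty$ with maximal row/column $\ell^2$ norms), combined with a $\|\QO F\| \lesssim \sqrt{n\log n/m}\,\|F\|_{\mathrm w}$ bound, closed by golfing to defeat the dependence of $W_k$ on $\Omega$ --- is in fact a viable alternative, but it is \emph{not} what this paper does, and as you yourself note it does not resolve the dependence problem by itself. The paper meets that problem head-on by computing $\E\,\trace(A^*A)^j$ exactly as a sum over ``spiders'' and using strong admissibility (each visited pair $(a,b)$ appears at least twice) to control the number of nonvanishing terms. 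If you want your write-up to be a proof, you must supply one of these two closings; at present it is an annotated outline of what the proof would have to accomplish.
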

This result is general and nonasymptotic. 

The proof of Theorems \ref{teo:main1}, \ref{teo:main2} will occupy the bulk of the paper, starting at Section \ref{sec:strategy}.

\subsection{A surprise}
\label{sec:surprise}

We find it unexpected that nuclear norm-minimization works so well,
for reasons we now pause to discuss. For simplicity, consider matrices
with a strong incoherence parameter $\mu$ polylogarithmic in the
dimension. We know that for the rank minimization program
\eqref{eq:rank} to succeed, or equivalently for the problem to be well
posed, the number of samples must exceed a constant times $nr \log
n$. However, Theorem \ref{teo:main2} proves that the convex relaxation
is rigorously exact nearly as soon as our problem has a unique
low-rank solution.  The surprise here is that admittedly, there is
{\em a priori} no good reason to suspect that convex relaxation might
work so well. There is {\em a priori} no good reason to suspect that
the gap between what combinatorial and convex optimization can do is
this small.  In this sense, we find these findings a little
unexpected.

The reader will note an analogy with the recent literature on
compressed sensing, which shows that under some conditions, the
sparsest solution to an underdetermined system of linear equations is
that with minimum $\ell_1$ norm.

\subsection{Model matrices}
\label{sec:model}

We now discuss model matrices which obey the conditions \eqref{eq:grp} and
\eqref{eab} for small values of the strong incoherence parameter
$\mu$.  For simplicity we restrict attention to the square matrix case $n_1=n_2=n$.

\subsubsection{Uniformly bounded model}
\label{sec:ubm}

In this section we shall show, roughly speaking, that almost all $n \times n$ matrices $M$ with singular vectors obeying the
size property 
\begin{equation}
  \label{eq:bdd}
  \|u_k\|_{\ell_\infty}, \|v_k\|_{\ell_\infty} \le \sqrt{\mu_B/n}, 
\end{equation}
with $\mu_B = O(1)$ also satisfy the assumptions {\bf A1} and {\bf A2}
with $\mu_1, \mu_2 = O(\sqrt{\log n})$. This justifies our earlier
claim that when the singular vectors are spread out, then the strong
incoherence property holds for a small value of $\mu$. 

We define a random model obeying \eqref{eq:bdd} as follows: take two
arbitrary families of $n$ orthonormal vectors $[u_1, \ldots, u_n]$ and
$[v_1, \ldots, v_n]$ obeying \eqref{eq:bdd}.  We allow the $u_i$ and $v_i$ to be deterministic; for instance one could have $u_i=v_i$ for all $i \in [n]$.

\begin{enumerate}
\item Select $r$ left singular vectors
  $u_{\alpha(1)},\ldots,u_{\alpha(r)}$ at random with replacement from
  the first family, and $r$ right singular vectors
  $v_{\beta(1)},\ldots,v_{\beta(r)}$ from the second family, also at
  random. We do \emph{not} require that the $\beta$ are chosen
  independently from the $\alpha$; for instance one could have
  $\beta(k)=\alpha(k)$ for all $k \in [r]$.
  
\item Set $M := \sum_{k \in [r]} \epsilon_k \sigma_k u_{\alpha(k)}
  v_{\beta(k)}^*$, where the signs $\epsilon_1,\ldots,\epsilon_r \in
  \{-1,+1\}$ are chosen independently at random (with probability
  $1/2$ of each choice of sign), and $\sigma_1,\ldots,\sigma_r > 0$
  are arbitrary distinct positive numbers (which are allowed to depend
  on the previous random choices).
\end{enumerate}

We emphasize that the only assumptions about the families $[u_1,
\ldots, u_n]$ and $[v_1, \ldots, v_n]$ is that they have small
components. For example, they may be the same. Also note that this
model allows for any kind of dependence between the left and right
singular selected vectors. For instance, we may select the same
columns as to obtain a symmetric matrix as in the case where the two
families are the same. Thus, one can think of our model as producing a
generic matrix with uniformly bounded singular vectors.

We now show that $P_U$, $P_V$ and $E$ obey \eqref{eq:grp} and
\eqref{eab}, with $\mu_1, \mu_2 = O(\mu_B \sqrt{\log n})$, with large
probability. For \eqref{eab}, observe that
\[
E = \sum_{k \in [r]} \epsilon_k u_{\alpha(k)} v_{\beta(k)}^*,
\]
and $\{\epsilon_k\}$ is a sequence of i.i.d.~$\pm 1$
symmetric random variables. Then Hoeffding's inequality shows that
$\mu_2 = O(\mu_B \sqrt{\log n})$; see \cite{CR} for details.

For \eqref{eq:grp}, we will use a beautiful concentration-of-measure result of McDiarmid.
\begin{theorem}\label{teo:mcdiarmid}  \cite{McDiarmid}
  Let $\{a_1, \ldots, a_n\}$ be a sequence of scalars obeying $|a_i|
  \le \alpha$. Choose a random set $S$ of size $s$ without replacement
  from $\{1, \ldots, n\}$ and let $Y = \sum_{i \in S} a_i$. Then for
  each $t \ge 0$, 
  \begin{equation}
    \label{eq:mcdiarmid}
    \P(|Y-\E Y| \ge t) \le 2 e^{-\frac{t^2}{2 s \alpha^2}}.  
  \end{equation}
\end{theorem}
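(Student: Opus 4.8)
The plan is to prove this by the method of bounded martingale differences (a Doob martingale), which yields the sharp constant $2s\alpha^2$ in the exponent rather than the cruder $8s\alpha^2$ that a direct application of Azuma's inequality would give. First I would realize the random set through a uniformly random permutation $\pi$ of $\{1,\dots,n\}$, setting $S=\{\pi(1),\dots,\pi(s)\}$, so that $Y=\sum_{i=1}^{s}a_{\pi(i)}$ and $\E Y=\frac{s}{n}\sum_{j=1}^{n}a_j$. I would then form the martingale $M_k:=\E[\,Y\mid \pi(1),\dots,\pi(k)\,]$ for $k=0,1,\dots,s$, with $M_0=\E Y$ and $M_s=Y$, and the goal is to control the conditional moment generating function of each increment $M_k-M_{k-1}$.

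The heart of the argument is an explicit formula for these increments. Writing $R_{k-1}$ for the set of indices not yet revealed after step $k-1$ (so $|R_{k-1}|=n-k+1$) and $\bar a_{R_{k-1}}$ for the average of the $a_j$ over $j\in R_{k-1}$, one computes, by expanding $M_k$ and $M_{k-1}$ as (revealed terms) plus the conditional correction $(s-k)\,\bar a_{R_k}$ resp.\ $(s-k+1)\,\bar a_{R_{k-1}}$,
\[
M_k-M_{k-1}=\frac{n-s}{n-k}\bigl(a_{\pi(k)}-\bar a_{R_{k-1}}\bigr).
\]
Conditionally on $\pi(1),\dots,\pi(k-1)$, the value $a_{\pi(k)}$ is uniform over $\{a_j:j\in R_{k-1}\}$, so this increment is a mean-zero random variable supported in an interval of length at most $\frac{2\alpha(n-s)}{n-k}$; Hoeffding's lemma then bounds its conditional mgf by $\exp\bigl(\lambda^2\alpha^2(n-s)^2/(2(n-k)^2)\bigr)$. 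Multiplying these bounds through the tower property and using the elementary inequality $\sum_{k=1}^{s}\frac{(n-s)^2}{(n-k)^2}\le s$ (each summand is at most $1$ since $n-k\ge n-s$ for $k\le s$), I obtain $\E\, e^{\lambda(Y-\E Y)}\le \exp(\lambda^2\alpha^2 s/2)$ for every $\lambda$. A Chernoff bound optimized at $\lambda=t/(\alpha^2 s)$ gives $\P(Y-\E Y\ge t)\le \exp(-t^2/(2s\alpha^2))$; running the same argument with $a_i$ replaced by $-a_i$ handles the lower tail, and a union bound yields the stated two-sided estimate with its factor $2$.

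The step I expect to require the most care is the increment identity together with its support bound: one must verify that the $\sum_{j\in R}a_j$ terms in $M_k$ and $M_{k-1}$ combine with precisely the right coefficients so that the surviving contribution is exactly $\tfrac{n-s}{n-k}\bigl(a_{\pi(k)}-\bar a_{R_{k-1}}\bigr)$, and one should dispose of the degenerate case $s=n$ (where $Y$ is deterministic and the formula's $n-k$ vanishes at $k=s$) separately. An alternative route that avoids the martingale entirely is to invoke Hoeffding's classical comparison theorem --- for every convex $\phi$, $\E\,\phi\bigl(\sum_{i\in S}a_i\bigr)\le \E\,\phi\bigl(\sum_{i=1}^{s}X_i\bigr)$, where the $X_i$ are i.i.d.\ uniform draws \emph{with} replacement from $\{a_1,\dots,a_n\}$ --- and then apply the ordinary Hoeffding bound to the independent sum $\sum_i X_i$; this is shorter but shifts the work onto citing (or reproving) the sampling-with/without-replacement comparison.
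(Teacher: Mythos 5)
Your proof is correct. The paper offers no proof of this result at all --- it simply cites it from McDiarmid's survey --- so there is no ``paper's approach'' to compare against, and a self-contained derivation like yours is a genuine addition rather than a paraphrase. I checked your increment identity: with $M_k = \sum_{i\le k} a_{\pi(i)} + (s-k)\,\bar a_{R_k}$ and $(n-k)\,\bar a_{R_k} = (n-k+1)\,\bar a_{R_{k-1}} - a_{\pi(k)}$, the algebra does collapse to $M_k - M_{k-1} = \tfrac{n-s}{n-k}\bigl(a_{\pi(k)} - \bar a_{R_{k-1}}\bigr)$, which is conditionally mean zero and supported in an interval of length at most $\tfrac{2\alpha(n-s)}{n-k}$. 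Hoeffding's lemma, the tower property, the observation that each summand $\tfrac{(n-s)^2}{(n-k)^2}\le 1$ for $k\le s$, and the Chernoff optimization at $\lambda = t/(s\alpha^2)$ then give exactly $\exp\!\bigl(-t^2/(2s\alpha^2)\bigr)$ for one tail, and the two-sided bound follows. Your caveat about $s=n$ is handled correctly (the increments vanish and $Y$ is deterministic), and the remark about the alternative route through Hoeffding's 1963 with/without-replacement comparison is accurate and arguably the quickest path if one is willing to cite that classical fact. The one place a reader might want you to be slightly more explicit is the Hoeffding-lemma step: you should say once that for a zero-mean variable supported in an interval of length $L$ one has $\E e^{\lambda Z}\le e^{\lambda^2 L^2/8}$, so that the exponent $\lambda^2\alpha^2(n-s)^2/\bigl(2(n-k)^2\bigr)$ is seen to come from $L=2\alpha(n-s)/(n-k)$; as written the numerology is correct but compressed.
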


From \eqref{puv} we have
\[
P_U = \sum_{k \in S} u_k u_k^*, 
\]
where $S := \{ \alpha(1),\ldots,\alpha(r) \}$. For any fixed $a, a' \in [n]$, set
\[
Y := \<P_U e_a, P_U e_{a'}\> = \sum_{k \in S} \<e_a, u_k\> \<u_k, e_{a'}\>
\]
and note that $\E Y = \frac{r}{n} 1_{a = a'}$. Since $|\<e_a, u_k\>
\<u_k, e_{a'}\>| \le \mu_B/n$, we apply \eqref{eq:mcdiarmid} and
obtain
\begin{equation*}
  \P\Bigl(\bigl|\<P_U e_a, P_U e_{a'}\> - 1_{\{a =
    a'\}}r/n\bigr| \ge 
  \lambda \,\mu_B \frac{\sqrt{r}}{n} \Bigr) \leq 2 e^{-\lambda^2/2}.
\end{equation*}
Taking $\lambda$ proportional to $\sqrt{\log n}$ and applying the union bound for $a,a' \in [n]$ proves \eqref{eq:grp} with probability at least $1-n^{-3}$ (say) with $\mu_1 = O(\mu_B \sqrt{\log n})$. 

Combining this computation with Theorems \ref{teo:main1}, \ref{teo:main2}, we have established the following corollary:
\begin{corollary}[Matrix completion, uniformly bounded model]
\label{teo:ubm}
Let $M$ be a matrix sampled from a uniformly bounded model. Under
the hypotheses of Theorem \ref{teo:main1}, if 
\begin{equation*}
    m \ge C\, \mu_B^2 nr \log^7 n,
\end{equation*}
$M$ is the unique solution to \eqref{eq:cvx2} with probability at
least $1 - n^{-3}$.  As we shall see below, when $r = O(1)$, it
suffices to have
\begin{equation*}
    m \ge C\, \mu_B^2 n \log^2 n. 
\end{equation*}
\end{corollary}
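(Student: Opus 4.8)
The plan is to read off Corollary~\ref{teo:ubm} by combining the incoherence estimates just established for the uniformly bounded model with Theorems~\ref{teo:main1} and~\ref{teo:main2}. It is convenient to keep the two independent sources of randomness separate: the model randomness (the indices $\alpha(k),\beta(k)$ and the signs $\epsilon_k$), which determines $M$ and hence $U$, $V$ and $E$; and the sampling randomness (the uniformly chosen set $\Omega$ of size $m$).

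First I would record the ``incoherence event'' $\mathcal{G}$ that the model produces a matrix obeying the strong incoherence property with parameter $\mu \le C_0\,\mu_B\sqrt{\log n}$, and show $\P(\mathcal{G}) \ge 1 - \tfrac12 n^{-3}$. The bound on $\mu_2$ is the Hoeffding argument for $E = \sum_k \epsilon_k u_{\alpha(k)} v_{\beta(k)}^*$ cited from~\cite{CR}; the bound on $\mu_1$ is exactly the McDiarmid computation carried out above via Theorem~\ref{teo:mcdiarmid}, with $\lambda$ a suitable constant multiple of $\sqrt{\log n}$ and a union bound over the $O(n^2)$ relevant pairs $(a,a')$ and $(b,b')$ --- choosing the constant in $\lambda$ slightly larger than before so that the failure probability is $\tfrac12 n^{-3}$ rather than $n^{-3}$. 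On $\mathcal{G}$, $M$ satisfies the hypotheses of Theorems~\ref{teo:main1} and~\ref{teo:main2}.

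Next, conditioning on $\mathcal{G}$ and applying Theorem~\ref{teo:main2} to the now-fixed matrix $M$: since $\mu^2 \le C_0^2\,\mu_B^2 \log n$ there, the hypothesis $m \ge C\,\mu^2 n r \log^6 n$ of that theorem is implied by $m \ge C'\,\mu_B^2 n r \log^7 n$ for an appropriate $C'$, so $M$ is the unique solution of~\eqref{eq:cvx2} with conditional probability at least $1 - \tfrac12 n^{-3}$ (again enlarging $C'$ so that the $n^{-3}$ of Theorem~\ref{teo:main2} becomes $\tfrac12 n^{-3}$). Because the sampling randomness is independent of $\mathcal{G}$, a union bound over $\mathcal{G}^c$ and the conditional failure event yields overall success probability at least $1 - n^{-3}$, which is the first assertion. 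For the refinement when $r = O(1)$, the key observation is that one no longer pays a logarithmic factor for incoherence: with only $O(1)$ terms present, the triangle inequality (rather than Hoeffding/McDiarmid) already gives $|E_{ab}| \le r\mu_B/n$ and $|\langle e_a, P_U e_{a'}\rangle - (r/n)1_{a=a'}| \le r\mu_B/n$ surely, i.e.\ the strong incoherence property holds deterministically (conditionally on the model) with $\mu = O(\mu_B)$. Feeding $\mu = O(\mu_B)$ into Theorem~\ref{teo:main1} gives recovery once $m \ge C\,\mu_B^4 n \log^2 n$; since the uniformly bounded model assumes $\mu_B = O(1)$, this is of the claimed form $m \ge C\,\mu_B^2 n \log^2 n$ (a sharper, Cand\`es--Recht-type analysis specialized to bounded rank --- the incoherence-property variant alluded to in Section~\ref{inco} --- keeps the dependence on the coherence genuinely quadratic).

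I do not anticipate a real obstacle: the corollary is essentially bookkeeping on top of the already-proved incoherence estimates and the (assumed) Theorems~\ref{teo:main1} and~\ref{teo:main2}. The only points needing a little care are (i) handling the two independent randomness sources and combining the two ``$1-n^{-3}$'' events, which costs only a harmless adjustment of numerical constants, and (ii) for the $r=O(1)$ statement, noticing that no concentration step is needed, so that the $\sqrt{\log n}$ in $\mu$ simply disappears.
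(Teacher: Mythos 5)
Your proof is correct and takes essentially the same route as the paper: for general $r$, combine the model's incoherence bound $\mu = O(\mu_B\sqrt{\log n})$ (Hoeffding for $\mu_2$, McDiarmid for $\mu_1$) with Theorem~\ref{teo:main2}, the extra $\log n$ from $\mu^2$ turning $\log^6 n$ into $\log^7 n$; and for $r=O(1)$, use the deterministic Cauchy--Schwarz bound $\mu \le \mu_B\sqrt{r}$ from Section~\ref{sec:lrlc} with Theorem~\ref{teo:main1} --- which is exactly what ``as we shall see below'' points to. The only slight misstep is the parenthetical about a ``sharper Cand\`es--Recht-type analysis'' being responsible for the quadratic $\mu_B$-dependence in the second display; the paper is just absorbing the $\mu_B^4$-versus-$\mu_B^2$ discrepancy into the constant because the model already takes $\mu_B=O(1)$, which is what you correctly observed first.
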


\emph{Remark.} For large values of the rank, the assumption that the $\ell_\infty$
norm of the singular vectors is $O(1/\sqrt{n})$ is not sufficient to
conclude that \eqref{eq:grp} holds with $\mu_1 = O(\sqrt{\log
  n})$. Thus, the extra randomization step (in which we select the $r$ singular vectors from a list of $n$ possible vectors) is in some sense necessary.
As an example, take $[u_1, \ldots, u_r]$ to be the first $r$ columns
of the Hadamard transform where each row corresponds to a
frequency. Then $\|u_k\|_{\ell_\infty} \le 1/\sqrt{n}$ but if $r \le
n/2$, the first two rows of $[u_1, \ldots, u_r]$ are identical. Hence
\[
\<P_U e_1, P_U e_2\> = r/n. 
\]
Obviously, this does not scale like $\sqrt{r}/n$. Similarly, the sign
flip (step 2) is also necessary as otherwise, we could have $E = P_U$
as in the case where $[u_1, \ldots, u_n] = [v_1, \ldots, v_n]$ and the
same columns are selected. Here, 
\[
\max_{a} E_{aa} = \max_{a} \|P_U e_a\|^2 \ge \frac{1}{n} \sum_{a}
\|P_U e_a\|^2 = \frac{r}{n},
\]
which does not scale like $\sqrt{r}/n$ either.

\subsubsection{Low-rank low-coherence model}
\label{sec:lrlc}

When the rank is small, the assumption that the
singular vectors are spread is sufficient to show that the parameter
$\mu$ is small. To see this, suppose that the singular vectors obey
\eqref{eq:bdd}. Then 
\begin{equation}
  \label{eq:naive}
  \Bigl|\<P_U e_a, P_U e_{a'}\> - 1_{\{a = a'\}} \frac{r}{n}\Bigr| \le \max_{a \in [n]} \|P_U e_a\|^2 \le  \frac{\mu_B r}{n}.
\end{equation}
The first inequality follows from the Cauchy-Schwarz inequality
$$|\<P_U e_a, P_U e_{a'}\>| \le \|P_U
e_a\| \|P_U e_{a'}\|$$ 
for $a \neq a'$ and from the Frobenius norm bound
$$\max_{a \in [n]} \|P_U e_a\|^2 \ge \frac{1}{n} \|P_U\|_F^2 = \frac{r}{n}.$$
This gives $\mu_1 \le \mu_B \sqrt{r}$. Also, by another application of Cauchy-Schwarz we have
\begin{equation}
  \label{eq:Enaive}
  |E_{ab}| \le \max_{a \in [n]} \|P_U e_a\|  \max_{b \in [n]} \|P_V e_b\| \le \frac{\mu_B r}{n}
\end{equation}
so that we also have $\mu_2 \le \mu_B \sqrt{r}$. In short, $\mu \le
\mu_B \sqrt{r}$. 

Our low-rank low-coherence model assumes that $r = O(1)$ and that the
singular vectors obey \eqref{eq:bdd}. When $\mu_B = O(1)$, this model
obeys the strong incoherence property with $\mu = O(1)$. In this case,
Theorem \ref{teo:main1} specializes as follows:

\begin{corollary}[Matrix completion, low-rank low-coherence model]
\label{teo:lrlc}
Let $M$ be a matrix of bounded rank $(r = O(1))$ whose singular
vectors obey \eqref{eq:bdd}. Under the hypotheses of Theorem
\ref{teo:main1}, if
\begin{equation*}
    m \ge C\, \mu_B^2 n \log^2 n,
\end{equation*}
then $M$ is the unique solution to \eqref{eq:cvx2} with probability at
least\ $1 - n^{-3}$. 
\end{corollary}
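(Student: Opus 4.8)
The plan is to deduce the Corollary from the general completion machinery, since all the substantive estimates have already been carried out in this subsection. First I would record, from \eqref{eq:naive} and \eqref{eq:Enaive}, that the size hypothesis \eqref{eq:bdd} forces the $\ell_2$-type bounds $\max_a \|P_U e_a\|^2 \le \mu_B r/n$, $\max_b \|P_V e_b\|^2 \le \mu_B r/n$, together with $\max_{a,b}|E_{ab}| \le \mu_B r/n$; equivalently, $M$ obeys the strong incoherence property \textbf{A1}--\textbf{A2} with $\mu_1,\mu_2 \le \mu_B\sqrt r$. Since $r = O(1)$, this reads $\mu = O(\mu_B)$, the implied constant absorbing the bounded factor $\sqrt r$.

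Next I would invoke the completion theorem. A verbatim application of Theorem~\ref{teo:main1} with $\mu = O(\mu_B)$ already gives exact recovery once $m \ge C_1\,\mu_B^4\, n\log^2 n$, and Theorem~\ref{teo:main2} likewise once $m \ge C_2\,\mu_B^2\, n\log^6 n$; since $\mu_B \ge 1$ always, the stated bound $m \ge C\,\mu_B^2\, n\log^2 n$ is stronger than either black-box consequence. To reach it I would re-run the proof of Theorem~\ref{teo:main1} (the one carrying the sharp $\log^2 n$ factor) and observe that its fourth power of the incoherence parameter is an artifact of controlling several terms through the $\ell_\infty$ parameter $\mu_2$ and the $\ell_\infty$-type estimate in \textbf{A1}; for a matrix of bounded rank the operative quantity is instead the $\ell_2$ bound $\max_a\|P_U e_a\|^2 \le \mu_B r/n$, which \eqref{eq:bdd} supplies with parameter $\mu_B$ and not $\mu_B^2$. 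Propagating this $\ell_2$ parameter through the argument, with $r$ treated as a constant so that all powers of $r$ are swallowed by $C$, replaces $\mu_B^4$ by $\mu_B^2$ and yields $m \ge C\,\mu_B^2\, n\log^2 n$, with the failure probability still bounded by $n^{-3}$ as in Theorem~\ref{teo:main1}.

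I do not expect a substantive obstacle: the Corollary is a specialization rather than a new theorem. The one step needing genuine care is the second---identifying, inside the proof of Theorem~\ref{teo:main1}, precisely which incoherence norm controls the sample complexity, so that the bounded-rank simplification can be inserted cleanly. Everything else is routine bookkeeping of absolute constants and of the harmless factor $\sqrt r = O(1)$.
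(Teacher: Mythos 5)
Your first paragraph correctly reproduces the paper's actual argument: \eqref{eq:naive} and \eqref{eq:Enaive} give $\mu_1,\mu_2 \le \mu_B\sqrt r$, hence $\mu = O(\mu_B)$ when $r = O(1)$, and the Corollary is then read off from Theorem~\ref{teo:main1} (indeed the preceding text says ``Theorem~\ref{teo:main1} specializes as follows'', and the model is introduced under the rubric ``when $\mu_B = O(1)$,'' so the exponent on $\mu_B$ is being absorbed into the absolute constant).

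Your second paragraph, however, contains a gap. You correctly observe that a black-box application of Theorem~\ref{teo:main1} with $\mu = O(\mu_B)$ gives $\mu_B^4 n \log^2 n$, not $\mu_B^2 n \log^2 n$, and you propose to close the gap by ``re-running the proof'' and claiming the $\mu_B^4$ is an artifact of using $\ell_\infty$ rather than $\ell_2$ incoherence. This claim is not substantiated, and in fact it fails: in the moment bound \eqref{eq:moment1} the controlling quantity is $\rmu^2$ where $\rmu = \mu^2 r$, and the bound \eqref{eq:cbound} on $|c_{ab,a'b'}|$ (or, if you redo it from scratch using \eqref{eq:bdd} and \eqref{eq:naive}--\eqref{eq:Enaive}) yields exactly $\mu_B r/n$ on a rook move and $\mu_B^2 r^2/n^2$ on a non-rook move, which is \emph{identical} to inserting $\mu = \mu_B\sqrt r$ into the generic bound $\mu\sqrt r/n$, $\mu^2 r/n^2$. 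There is no ``smaller $\ell_2$ parameter'' to propagate; the two parametrizations give the same estimates at every step, so the exponent $4$ in $\mu_B^4$ survives. To get a genuine $\mu_B^2$ dependence one must instead exploit the cancellation identities as in Theorem~\ref{teo:moment2}, which trades the $\mu^4$ for $\mu^2$ at the price of $\log^6 n$ rather than $\log^2 n$; there is no free route to $\mu_B^2 n\log^2 n$ in the paper's framework. In short: the Corollary is a one-line specialization of Theorem~\ref{teo:main1} in the regime $\mu_B = O(1)$ (where $\mu_B^2$ and $\mu_B^4$ are interchangeable constants), and the refinement you sketch neither is needed for the paper's intended reading nor can be obtained by the method you describe.
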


\subsubsection{Random orthogonal model}
\label{sec:rom}

Our last model is borrowed from \cite{CR} and assumes that the column
matrices $[u_1, \ldots, u_r]$ and $[v_1, \ldots, v_r]$ are independent
random orthogonal matrices, with no assumptions whatsoever on the singular values $\sigma_1,\ldots,\sigma_r$. Note that this is a special case of the
uniformly bounded model since this is equivalent to selecting two $n
\times n$ random orthonormal bases, and then selecting the singular
vectors as in Section \ref{sec:ubm}. Since we know that the maximum
entry of an $n \times n$ random orthogonal matrix is bounded by a
constant times $\sqrt{\frac{\log n}{n}}$ with large probability, then
Section \ref{sec:ubm} shows that this model obeys the strong
incoherence property with $\mu = O(\log n)$. Theorems \ref{teo:main1}, \ref{teo:main2} then give

\begin{corollary}[Matrix completion, random orthogonal model]
\label{teo:rom}
Let $M$ be a matrix sampled from the random orthogonal model. Under
the hypotheses of Theorem \ref{teo:main1}, if
\begin{equation*}
  m \ge C\, nr\, \log^8 n,
\end{equation*}
then $M$ is the unique solution to \eqref{eq:cvx2} with probability at
least $1 - n^{-3}$.  The exponent $8$ can be lowered to $7$ when $r
\ge \log n$ and to $6$ when $r = O(1)$.
\end{corollary}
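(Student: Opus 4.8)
The plan is to derive the corollary from Corollary~\ref{teo:ubm}, since the random orthogonal model is a special case of the uniformly bounded model; the only real work is to pin the coherence parameter down sharply enough to obtain the stated exponents. First I would record the distributional identity behind the inclusion: let $\bar U,\bar V$ be independent Haar-distributed orthogonal matrices in $O(n)$ and let $[u_1,\dots,u_n]$, $[v_1,\dots,v_n]$ be their columns; since the columns of a Haar matrix are exchangeable and stay Haar-distributed under sign changes, selecting $r$ of the $u$'s and (independently) $r$ of the $v$'s uniformly at random and attaching the signs $\epsilon_k$ of Section~\ref{sec:ubm} produces, in law, exactly the first $r$ columns of $\bar U$ and $\bar V$ — i.e.\ the random orthogonal model. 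Hence the construction of Section~\ref{sec:ubm} applies verbatim, and it remains to estimate $\mu$. Since each column of $\bar U$ is uniform on $S^{n-1}$, the standard single-coordinate tail bound $\P(|\langle e_a,u_k\rangle|\ge t)\le 2e^{-cnt^2}$ together with a union bound over the $2n^2$ entries of $\bar U$ and $\bar V$ gives $\max_{a,k}|\langle e_a,u_k\rangle|,\ \max_{b,k}|\langle e_b,v_k\rangle|\le c'\sqrt{(\log n)/n}$ with probability at least $1-n^{-10}$; on this event \eqref{eq:bdd} holds with $\mu_B=O(\log n)$, so Corollary~\ref{teo:ubm} already yields the result, albeit with $\log^9 n$ in place of $\log^8 n$ (and correspondingly weaker powers in the two refinements), the loss coming from the black-box passage $\mu_B\mapsto\mu=O(\mu_B\sqrt{\log n})$.

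To recover that $\sqrt{\log n}$ I would estimate $\mu$ directly from the random-subspace structure. For~{\bf A1}: $P_U$ is the orthogonal projection onto a uniformly random $r$-dimensional subspace, so its diagonal entries $\langle e_a,P_Ue_a\rangle$ are $\mathrm{Beta}(\tfrac r2,\tfrac{n-r}2)$-distributed with mean $r/n$, while its off-diagonal entries $\langle e_a,P_Ue_{a'}\rangle$ ($a\ne a'$) are centered — by the sign-flip symmetry of the model — and concentrate at scale $\sqrt r/n$; standard Beta/chi-square concentration plus a union bound over the $n^2$ pairs then give \eqref{eq:grp} with $\mu_1=O(\sqrt{\log n})$ when $r\gtrsim\log n$ and $\mu_1=O(\log n)$ for general $r$, and similarly for $P_V$. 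For~{\bf A2}: conditioning on $\bar U,\bar V$ and applying Hoeffding's inequality to $E_{ab}=\sum_{k\le r}\epsilon_k\langle e_a,u_k\rangle\langle e_b,v_k\rangle$ over the random signs $\epsilon_k$, with conditional variance proxy $\sum_{k\le r}\langle e_a,u_k\rangle^2\langle e_b,v_k\rangle^2\le(\max_{a,k}\langle e_a,u_k\rangle^2)\,\langle e_b,P_Ve_b\rangle=O(r\log n/n^2)$ — using $\mu_B=O(\log n)$ and the control $\langle e_b,P_Ve_b\rangle=O(r/n)$ just established — a union bound over the $n^2$ pairs $(a,b)$ gives \eqref{eab} with $\mu_2=O(\log n)$. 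Thus $\mu=O(\log n)$: plugging this into Theorem~\ref{teo:main2} gives $m\ge C\,nr\log^8 n$, and into Theorem~\ref{teo:main1} with $r=O(1)$ it gives $m\ge C\,n\log^6 n$; for $r\ge\log n$ the sharper $\mu_1=O(\sqrt{\log n})$ lowers the exponent to $7$, provided one keeps the dependence on the {\bf A1}- and {\bf A2}-coherences separate in Theorem~\ref{teo:main2} — the former entering quadratically, the latter only linearly — as its proof will in fact arrange.

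The step I expect to be the real obstacle is precisely this sharp coherence estimate: the convenient route through Corollary~\ref{teo:ubm} is lossy by a $\sqrt{\log n}$ factor, and closing the gap forces one to replace the crude $\ell_\infty$ bound on the singular vectors with genuine concentration-of-measure input about random projections — Beta-distribution tails for the diagonal of $P_U$, mean-zero concentration for its off-diagonal entries, and a variance-sensitive (rather than worst-case) application of Hoeffding's inequality for $E$ — while carrying $\mu_1$ and $\mu_2$ separately rather than collapsing them into one parameter.
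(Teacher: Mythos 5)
Your overall route is the same as the paper's: view the random orthogonal model as a special case of the uniformly bounded model (and your sign-flip/exchangeability justification of this is a nice elaboration of what the paper only asserts), estimate the strong-incoherence parameter $\mu$, and feed it into Theorems~\ref{teo:main1} and~\ref{teo:main2}. You also correctly observe — and this is a genuine subtlety the paper glosses over — that plugging $\mu_B=O(\log n)$ into the McDiarmid-based estimate of Section~\ref{sec:ubm} only gives $\mu=O(\mu_B\sqrt{\log n})=O(\log^{3/2}n)$, which is too weak; a direct concentration argument (your Beta/chi-square route, or the paper's Johnson--Lindenstrauss-style bound \eqref{eq:JL}) is what actually delivers $\mu_1=O(\log n)$ in general and $O(\sqrt{\log n})$ for $r\gtrsim\log n$. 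With $\mu=O(\log n)$, the exponents $8$ (via Theorem~\ref{teo:main2}) and $6$ for $r=O(1)$ (via Theorem~\ref{teo:main1}) follow exactly as you say.

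The gap is in your derivation of the exponent $7$ for $r\ge\log n$. You establish $\mu_1=O(\sqrt{\log n})$ but only $\mu_2=O(\log n)$, and then claim the shortfall is harmless because ``the proof of Theorem~\ref{teo:main2} will in fact arrange'' for $\mu_2$ to enter only linearly. That claim is not correct. The proof of Theorem~\ref{teo:main2} (Sections~\ref{sec:moments} and~\ref{sec:moment2}) bounds \emph{every} $E_{ab}$, $\U_{a,a'}$, $\V_{b,b'}$ factor uniformly by $\sqrt{\rmu}/n$ with $\rmu=\mu^2 r$ and $\mu=\max(\mu_1,\mu_2)$; see the ``Third case'' estimate $|X_{\mathcal C}|\le n^{|J|+|K|}O(\sqrt{\rmu}/n)^{|\Gamma|+|\L_U\cap\L_V|}$. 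Nothing in that argument distinguishes the $E$ weights from the $U,V$ weights, and the conclusion $m\gtrsim\rmu\,n\,\mathrm{polylog}(n)$ genuinely carries $\mu_2^2$. In fact this is already visible at the very first ($k=0$) term of the Neumann series, $A=\QO(E)$, whose moment bound involves \emph{only} $E$ and hence only $\mu_2$; the requirement $\|\QO(E)\|\le\sqrt{\sigma}$ forces $m\gtrsim\mu_2^2\,nr\,\mathrm{polylog}(n)$. So the degradation from $\mu_2=O(\log n)$ cannot be absorbed without touching the proof of Theorem~\ref{teo:main2}. The paper's actual route — stated as ``Similar arguments for $\mu_2$ exist but we forgo the details'' — is to prove $\mu_2=O(\sqrt{\log n})$ for $r\gg\log n$ as well. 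Your conditional-Hoeffding bound cannot give this, because you use the worst-case variance proxy $(\max_k u_{k,a}^2)\cdot\|P_Ve_b\|^2\lesssim r\log n/n^2$; the extra $\log n$ is exactly what is costing you the factor. One has to exploit the fact that, conditionally on $v$, the row $(u_{1,a},\dots,u_{r,a})$ is (for $r$ and $n-r$ large) close to $N(0,\frac1nI_r)$ in a quantitative sense, making $E_{ab}=\sum_k u_{k,a}v_{k,b}$ essentially Gaussian with variance $\approx\|P_Ve_b\|^2/n\approx r/n^2$ and hence $|E_{ab}|\lesssim\sqrt{r\log n}/n$ after a union bound — i.e.\ a two-sided JL-type concentration for the bilinear form rather than a Rademacher Hoeffding bound with a crude variance envelope. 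That is the missing ingredient.
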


As mentioned earlier, we have a lower bound $m \geq 2nr - r^2$ for
matrix completion, which can be improved to $m \geq C nr \log n$ under
reasonable hypotheses on the matrix $M$.  Thus, the hypothesis on $m$
in Corollary \ref{teo:rom} cannot be substantially improved.  However,
it is likely that by specializing the proofs of our general results
(Theorems \ref{teo:main1} and \ref{teo:main2}) to this special case,
one may be able to improve the power of the logarithm here, though it
seems that a substantial effort would be needed to reach the optimal
level of $nr \log n$ even in the bounded rank case.

% In this case, \cite{CR} proves that \eqref{eab} holds with large
% probability with $\mu_2 = O(\log n)$ provided that $r$ is sufficiently
% larger than $\log n$. The condition \eqref{eq:grp} also holds with
% large probability with $\mu_1 = O(\sqrt{\log n})$. 

Speaking of logarithmic improvements, we have shown that $\mu = O(\log
n)$, which is sharp since for $r = 1$, one cannot hope for better
estimates. For $r$ much larger than $\log n$, however, one can improve
this to $\mu = O(\sqrt{\log n})$. As far as $\mu_1$ is concerned, this
is essentially a consequence of the Johnson-Lindenstrauss lemma. For
$a \neq a'$, write
\[
\<P_U e_a, P_U e_{a'}\> = \frac{1}{4} \left(\|P_U e_a + P_U
  e_{a'}\|^2 - \|P_U e_a - P_U e_{a'}\|^2\right). 
\]
We claim that for each $a \neq a'$, 
\begin{equation}
\label{eq:JL}
\Bigl| \|P_U(e_a \pm e_{a'})\|^2 - \frac{2r}{n} \Bigr| \le C
\frac{\sqrt{r\log n}}{n}
\end{equation}
with probability at least $1 - n^{-5}$, say. This inequality is indeed
well known. Observe that $\|P_U x\|$ has the same distribution than
the Euclidean norm of the first $r$ components of a vector uniformly
distributed on the $n-1$ dimensional sphere of radius $\|x\|$. Then we
have \cite{Barvinok}:
\[
\P\Bigl(\sqrt{\frac{r}{n}}(1-\eps) \|x\| \le \|P_U x\| \le
\sqrt{\frac{r}{n}}(1-\eps)^{-1}\|x\|\Bigr) \le 2e^{-\epsilon^2 r/4} +
2 e^{-\epsilon^2 n/4}.
\]
Choosing $x = e_a \pm e_{a'}$, $\epsilon = C_0 \sqrt{\frac{\log
    n}{r}}$, and applying the union bound proves the claim as long as
long as $r$ is sufficiently larger than $\log n$. Finally, since a
bound on the diagonal term $\|P_U e_a\|^2 - r/n$ in \eqref{eq:grp}
follows from the same inequality by simply choosing $x = e_a$, we have
$\mu_1 = O(\sqrt{\log n})$. Similar arguments for $\mu_2$ exist but we
forgo the details.

\subsection{Comparison with other works}

\subsubsection{Nuclear norm minimization}\label{inco}

The mathematical study of matrix completion began with \cite{CR}, which made slightly different incoherence assumptions than in this paper.  Namely, let us say that the matrix $M$ obeys the \emph{incoherence property} with a parameter $\mu_0 > 0$ if
\begin{equation}
  \label{eq:coherence} \|P_U e_a\|^2 \le  \frac{\mu_0 r}{n_1}, 
  \quad \|P_V e_b\|^2 \le \frac{\mu_0 r}{n_2}
\end{equation}
for all $a \in [n_1]$, $b \in [n_2]$. Again, this implies $\mu_0 \ge 1$.  

In \cite{CR} it was shown that if a fixed matrix $M$ obeys the incoherence property with parameter $\mu_0$, then nuclear minimization succeeds with large probability if
\begin{equation}
  \label{eq:CR2}
  m \ge C \, \mu_0 n^{6/5} r \log n
\end{equation}
provided that $\mu_0 r \le n^{1/5}$. 

Now consider a matrix $M$ obeying the strong incoherence property with
$\mu = O(1)$. Then since $\mu_0 \ge 1$, \eqref{eq:CR2} guarantees
exact reconstruction only if $m \ge C \, n^{6/5} r \log n$ (and $r =
O(n^{1/5})$) while our results only need $nr \text{polylog}(n)$
samples. Hence, our results provide a substantial improvement over
\eqref{eq:CR2} at least in the regime which permits minimal sampling.

We would like to note that there are obvious relationships between the
best incoherence parameter $\mu_0$ and the best strong incoherence
parameters $\mu_1$, $\mu_2$ for a given matrix $M$, which we take to
be square for simplicity. On the one hand, \eqref{eq:grp} implies that
\[
\|P_U e_a\|^2 \le \frac{r}{n} + \frac{\mu_1 \sqrt{r}}{n}
\]
so that one can take $\mu_0 \le 1 + \mu_1/\sqrt{r}$. This shows that
one can apply results from the incoherence model (in which we only
know \eqref{eq:coherence}) to our model (in which we assume strong
incoherence).  On the other hand,
\[
|\<P_U e_a, P_U e_{a'}\>| \le \|P_U e_a\| \|P_U e_{a'}\| \le
\frac{\mu_0 r}{n}
\]
so that $\mu_1 \le \mu_0 \sqrt{r}$. Similarly, $\mu_2 \le \mu_0
\sqrt{r}$ so that one can transfer results in the other direction as
well.

We would like to mention another important paper \cite{Recht07}
inspired by compressed sensing, and which also recovers low-rank
matrices from partial information. The model in \cite{Recht07},
however, assumes some sort of Gaussian measurements and is completely
different from the completion problem discussed in this paper.

\subsubsection{Spectral methods}

An interesting new approach to the matrix completion problem has been
recently introduced in \cite{MontanariISIT}. This algorithm starts by
trimming each row and column with too few entries; i.e.~one replaces the
entries in those rows and columns by zero.  Then one computes the SVD
of the trimmed matrix and truncate it as to only keep the top $r$
singular values (note that one would need to know $r$ \emph{a priori}).
Then under some conditions (including the incoherence property \eqref{eq:coherence} with $\mu=O(1)$), this work shows that accurate recovery is
possible from a minimal number of samples, namely, on the order of $nr
\log n$ samples. Having
said this, this work is not directly comparable to ours because it
operates in a different regime. Firstly, the results are asymptotic and
are valid in a regime when the dimensions of the matrix tend to
infinity in a fixed ratio while ours are not. Secondly, there is a
strong assumption about the range of the singular values the unknown
matrix can take on while we make no such assumption; they must be
clustered so that no singular value can be too large or too small
compared to the others. Finally, this work only shows approximate
recovery---not exact recovery as we do here---although exact recovery
results have been announced.  This work is of course very interesting
because it may show that methods---other than convex
optimization---can also achieve minimal sampling bounds.

\subsection{Lower bounds}
\label{sec:lowerintro}

We would like to conclude the tour of the results introduced in this
paper with a simple lower bound, which highlights the fundamental role
played by the coherence in controlling what is
information-theoretically possible.

\begin{theorem}[Lower bound, Bernoulli model]
  \label{teo:lower} Fix $1 \leq m, r \leq n$ and $\mu_0 \geq 1$, let
  $0 < \delta < 1/2$, and suppose that we do \emph{not} have the
  condition
  \begin{equation}
    \label{eq:lower}
    - \log\Bigl(1-\frac{m}{n^2}\Bigr) \ge 
\frac{\mu_0r}{n} \log\left(\frac{n}{2\delta}\right).
  \end{equation}
  Then there exist infinitely many pairs of distinct $n \times n$
  matrices $M \neq M'$ of rank at most $r$ and obeying the incoherence
  property \eqref{eq:coherence} with parameter $\mu_0$ such that
  $\PO(M) = \PO(M')$ with probability at least $\delta$. Here, each
  entry is observed with probability $p = m/n^2$ independently from
  the others.
 %  Then there exists a random ensemble of $n \times n$ matrices $M$ of
%   rank at most $r$ with the incoherence property \eqref{eq:coherence}
%   with parameter $\mu_0$, such that \emph{any} (deterministic or
%   probabilistic) matrix completion method $\mathbf{M}: {\mathcal
%     P}_\Omega(\R^{n \times n}) \to \R^{n \times n}$ to recover $M$
%   exactly using only a random sample ${\mathcal P}_\Omega M$ of its
%   entries, where each entry in $[n] \times [n]$ is observed (i.e. lies
%   in $\Omega$) with probability $p := m/n^2$, will fail with
%   probability at least $\delta$.
\end{theorem}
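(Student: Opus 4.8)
The plan is to prove the theorem by an explicit ``coupon collector'' construction in the Bernoulli model. I would reduce matters to the following: produce a single rank‑$\le r$ matrix $M$ obeying \eqref{eq:coherence} with parameter $\mu_0$, together with a family of ``siblings'' $M'_c$ indexed by $c\in[n]$, each of rank $\le r$, each obeying \eqref{eq:coherence} with the same $\mu_0$, and each differing from $M$ only on a small rectangle of rows $R\times\{c\}$. Then I would argue that with probability at least $\delta$ the random set $\Omega$ entirely misses at least one such rectangle; on that event $\PO(M)=\PO(M'_c)$, and since the singular value(s) of $M$ are completely unconstrained, scaling them produces the required infinite family of distinct indistinguishable pairs.

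For the construction I would keep it rank one (which is $\le r$, and already captures the $\mu_0 r$ dependence through the bound \eqref{eq:coherence}). Put $s:=\lceil n/(\mu_0 r)\rceil$, fix $R\subseteq[n]$ with $|R|=s$, and for $\sigma>0$ set $M:=\sigma\,u v^{*}$ with $u:=s^{-1/2}\mathbf{1}_{R}$ and $v:=n^{-1/2}\mathbf{1}_{[n]}$. Then $\|P_U e_a\|^2=|u_a|^2\le 1/s\le\mu_0 r/n$ and $\|P_V e_b\|^2=1/n\le\mu_0 r/n$, so $M$ obeys \eqref{eq:coherence} with parameter $\mu_0$ and its SVD is displayed. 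For $c\in[n]$ let $v^{(c)}$ be $v$ with its $c$‑th entry negated and set $M'_c:=\sigma\,u\,(v^{(c)})^{*}$; sign‑flipping one coordinate changes neither $\|v^{(c)}\|$ nor any $|v^{(c)}_b|$, so $M'_c$ is rank one and obeys \eqref{eq:coherence} with the same $\mu_0$, while $M-M'_c$ is supported exactly on $R\times\{c\}$ and is nonzero there. Hence $M\neq M'_c$, and $\PO(M)=\PO(M'_c)$ whenever $\Omega\cap(R\times\{c\})=\emptyset$.

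For the probabilistic core, note that the events $A_c:=\{\Omega\cap(R\times\{c\})=\emptyset\}$ involve pairwise disjoint collections of the i.i.d.\ entry indicators (distinct columns), hence are mutually independent with $\P(A_c)=(1-p)^s$. Therefore
\[
\P\Bigl(\,\bigcup_{c\in[n]}A_c\Bigr)=1-\bigl(1-(1-p)^s\bigr)^n\ \ge\ 1-\exp\bigl(-n(1-p)^s\bigr).
\]
Now I invoke the failure of \eqref{eq:lower}, i.e.\ $-\log(1-p)<\tfrac{\mu_0 r}{n}\log\tfrac{n}{2\delta}$; multiplying by $s$ and using $s\,\mu_0 r/n=1$ (the clean case $\mu_0 r\mid n$) gives $(1-p)^s>2\delta/n$, so $n(1-p)^s>2\delta$, and since $1-e^{-2\delta}\ge\delta$ for $0<\delta\le 1/2$ we conclude $\P(\bigcup_c A_c)\ge\delta$. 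On that event, fixing any $c$ with $A_c$ occurring, the pair $(M,M'_c)$ is valid for every $\sigma>0$, and distinct $\sigma$'s give distinct pairs, so infinitely many indistinguishable pairs exist.

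The main obstacle is not the argument above but the bookkeeping at the extremes of the parameter range. When $\mu_0 r\ge n$ the bound \eqref{eq:coherence} is vacuous ($\mu_0 r/n\ge 1\ge\|P_U e_a\|^2$ always), and one should instead take $M$ supported on a single entry and perturb that entry; the failure of \eqref{eq:lower} then keeps $p$ far enough from $1$ that $\Omega$ omits a prescribed entry with probability $\ge\delta$. When $\mu_0 r<n$ but $n/(\mu_0 r)$ is not an integer, rounding $s$ up to $\lceil n/(\mu_0 r)\rceil$ slightly inflates the exponent in $(1-p)^s$; this is harmless when $n/(\mu_0 r)$ is large (say $\mu_0 r=o(n/\log^2 n)$), while the intermediate band $n/\log^2 n\lesssim\mu_0 r<n$ — where \eqref{eq:coherence} is already an extremely weak constraint — has to be dispatched by a separate elementary estimate. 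Making these boundary cases airtight is the only delicate part; the heart of the matter is the single observation that a low‑rank incoherent matrix can be locally sign‑flipped inside a rectangle of only about $n/(\mu_0 r)$ rows, which turns exact recovery into a coupon‑collector problem over the $n$ columns and pins the threshold at exactly \eqref{eq:lower}.
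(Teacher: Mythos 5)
Your proof follows the same coupon-collector strategy as the paper---a union bound over roughly $n$ mutually independent ``miss'' events in the Bernoulli model, followed by the elementary estimate $1-e^{-2\delta}\ge\delta$ for $\delta\le 1/2$---but uses a genuinely different (and in one respect cleaner) witness construction. The paper takes a rank-$r$ block-diagonal matrix $M=\sum_{k=1}^r \sigma_k u_k u_k^*$ with each $u_k$ supported on a distinct block of $\ell=n/(\mu_0 r)$ consecutive indices and singular values $\sigma_k\in[0,1]$, so that a wholly unobserved row of some block leaves the corresponding $\sigma_k$ undeterminable; the union bound is then over rows of blocks. You instead take a rank-one $M=\sigma uv^*$ with $u$ supported on $s=\lceil n/(\mu_0 r)\rceil$ rows and $v$ fully spread, and perturb by flipping the sign of one coordinate $v_c$; the perturbation $M-M'_c$ lives on the thin rectangle $R\times\{c\}$, and the union bound runs over the $n$ columns $c$. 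This has the advantage that the union bound naturally ranges over exactly $n$ independent events, matching the exponent in \eqref{eq:lower}, whereas the paper's block construction has only $r\ell=n/\mu_0$ nonzero rows, making its use of the exponent $n$ slightly generous when $\mu_0>1$; your perturbation set is also smaller ($s$ entries versus an entire $\ell\times\ell$ block). The boundary bookkeeping you flag---$n/(\mu_0 r)$ not an integer, or $\mu_0 r\geq n$, in which the rounding $s=\lceil n/(\mu_0 r)\rceil$ spoils the clean identity $s\mu_0 r/n=1$---is an honest loose end, but the paper dismisses the very same issue with ``without loss of generality, assume $\ell:=n/(\mu_0 r)$ is an integer,'' so you are not below the paper's own standard of rigor here. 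Finally, note that, like the paper, what you literally establish is $\P(\exists\,c:\PO(M)=\PO(M'_c))\geq\delta$ rather than $\P(\PO(M)=\PO(M'_c))\geq\delta$ for a single fixed $c$; this is the operationally relevant statement (no method can succeed with probability exceeding $1-\delta$) and matches how the paper reads its own theorem, so it is not a defect specific to your argument.
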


% \emph{Remark}.  It is of course necessary for this lower bound that
% the desired matrix $M$ is allowed to be random; if $M$ is fixed and
% known, then one can trivially build an accurate recovery algorithm
% $\mathbf{M}$ even with no sampled data, simply by returning the known
% matrix $M$.

Clearly, even if one knows the rank and the coherence of a matrix
ahead of time, then no algorithm can be guaranteed to succeed based on
the knowledge of $\PO(M)$ only, since they are many candidates which
are consistent with these data. We prove this theorem in Section
\ref{sec:lower}.  Informally, Theorem \ref{teo:lower} asserts that
\eqref{eq:lower} is a necessary condition for matrix completion to
work with high probability if all we know about the matrix $M$ is that
it has rank at most $r$ and the incoherence property with parameter
$\mu_0$.  When the right-hand side of \eqref{eq:lower} is less than
$\eps < 1$, this implies
  \begin{equation}
    \label{eq:lower2}
    m \ge (1-\eps/2) \mu_0 n r  \log\left(\frac{n}{2\delta}\right).
\end{equation}

Recall that the number of degrees of freedom of a rank-$r$ matrix is
$2nr(1 - r/2n)$. Hence, to recover an arbitrary rank-$r$ matrix with
the incoherence property with parameter $\mu_0$ with any decent
probability by any method whatsoever, the minimum number of samples
must be about the number of degrees of freedom times $\mu_0 \log n$;
in other words, the oversampling factor is directly proportional to
the coherence. Since $\mu_0 \ge 1$, this justifies our earlier
assertions that $nr \log n$ samples are really needed.

In the {\em Bernoulli model} used in Theorem \ref{teo:lower}, the number of entries
is a binomial random variable sharply concentrating around its mean
$m$. There is very little difference between this model and the \emph{uniform model} which
assumes that $\Omega$ is sampled uniformly at random among all subsets
of cardinality $m$. Results holding for one hold for the other with
only very minor adjustments. Because we are concerned with essential
difficulties, not technical ones, we will often prove our results
using the Bernoulli model, and indicate how the results may easily be
adapted to the uniform model.

%  As we will see next, this  just a
% lower bound but also an upper bound as there are polynomial-time
% algorithms, which require a number of samples approaching the
% right-hand side of \eqref{eq:lower2}.

\subsection{Notation}
\label{sec:notation}

Before continuing, we provide here a brief summary of the notations
used throughout the paper. To simplify the notation, we shall work exclusively with square matrices, thus
$$ n_1 = n_2 = n.$$
The results for non-square matrices (with $n=\max(n_1,n_2)$) are proven in exactly the same fashion, but will add more subscripts to a notational system which is already quite complicated, and we will leave the details to the interested reader.  We will also assume that $n \geq C$ for some sufficiently large absolute constant $C$, as our results are vacuous in the regime $n=O(1)$.

Throughout, we will always assume that $m$
is at least as large as $2nr$, thus
\begin{equation}\label{rnp}
2r \leq np, \qquad p := m/n^2. 
\end{equation}

A variety of norms on matrices $X \in \R^{n \times n}$ will be
discussed. The \emph{spectral norm} (or \emph{operator norm}) of a
matrix is denoted by
$$\|\mtx{X}\| := \sup_{x \in \R^n: \|x\|=1} \|\mtx{X} x\| = \sup_{1 \leq j \leq n} \sigma_j(\mtx{X}).$$
The Euclidean inner product between two matrices is defined by the
formula
$$\<\mtx{X}, \mtx{Y}\> := \trace(\mtx{X}^* \mtx{Y}),$$ 
and the corresponding Euclidean norm, called the \emph{Frobenius norm}
or \emph{Hilbert-Schmidt norm}, is denoted
$$\|\mtx{X}\|_F :=\<\mtx{X},\mtx{X}\>^{1/2}  = (\sum_{j=1}^n \sigma_j(\mtx{X})^2)^{1/2}.$$  
The \emph{nuclear norm} of a matrix $\mtx{X}$ is denoted
$$\|\mtx{X}\|_* := \sum_{j=1}^n \sigma_j(\mtx{X}).$$
For vectors, we will only consider the usual Euclidean $\ell_2$ norm
which we simply write as $\|\vct{x}\|$.

Further, we will also manipulate linear transformation which acts on
the space $\R^{n \times n}$ matrices such as $\PO$, and we will use
calligraphic letters for these operators as in ${\mathcal A}(\mtx{X})$.
In particular, the identity operator on this space will be denoted by
$\OpId: \R^{n \times n} \to \R^{n \times n}$, and should \emph{not} be
confused with the identity matrix $I \in \R^{n \times n}$. The only
norm we will consider for these operators is their spectral norm (the
top singular value)
$$\|{\mathcal A}\| := \sup_{\mtx{X} : \|\mtx{X}\|_F \le 1} \,
\|{\mathcal A}(\mtx{X})\|_F.$$
Thus for instance
$$\|\PO\| = 1.$$ 

We use the usual asymptotic notation, for instance writing $O(M)$ to
denote a quantity bounded in magnitude by $CM$ for some absolute
constant $C > 0$.  We will sometimes raise such notation to some
power, for instance $O(M)^M$ would denote a quantity bounded in
magnitude by $(CM)^M$ for some absolute constant $C>0$.  We also write
$X \lesssim Y$ for $X=O(Y)$, and $\text{poly}(X)$ for
$O(1+|X|)^{O(1)}$.  
% If the constant $C$ depends on some additional parameters then we
% will denote this using subscripts.

We use $1_E$ to denote the indicator function of an event $E$,
e.g. $1_{a=a'}$ equals $1$ when $a=a'$ and $0$ when $a \neq a'$.

If $A$ is a finite set, we use $|A|$ to denote its cardinality.

We record some (standard) conventions involving empty sets.  The set
$[n] := \{1,\ldots,n\}$ is understood to be the empty set when $n=0$.
We also make the usual conventions that an empty sum $\sum_{x \in
  \emptyset} f(x)$ is zero, and an empty product $\prod_{x \in
  \emptyset} f(x)$ is one.  Note however that a $k$-fold sum such as
$\sum_{a_1,\ldots,a_k \in [n]} f(a_1,\ldots,a_k)$ does not vanish when
$k=0$, but is instead equal to a single summand $f()$ with the empty
tuple $() \in [n]^0$ as the input; thus for instance the identity
$$\sum_{a_1,\ldots,a_k \in [n]} \prod_{i=1}^k f(a_i) = \Bigl(\sum_{a \in [n]} f(a)\Bigr)^k$$
is valid both for positive integers $k$ and for $k=0$ (and both for
non-zero $f$ and for zero $f$, recalling of course that $0^0=1$).  We
will refer to sums over the empty tuple as \emph{trivial sums} to
distinguish them from \emph{empty sums}.

\section{Lower bounds}
\label{sec:lower}

This section proves Theorem \ref{teo:lower}, which asserts that no
method can recover an arbitrary $n \times n$ matrix of rank $r$ and
coherence at most $\mu_0$ unless the number of random samples obeys
\eqref{eq:lower}. As stated in the theorem, we establish lower bounds
for the Bernoulli model, which then apply to the model where exactly
$m$ entries are selected uniformly at random, see the Appendix for
details.

It may be best to consider a simple example first to understand the
main idea behind the proof of Theorem \ref{teo:lower}. Suppose that $r
= 1$, $\mu_0 > 1$ in which case $M = x y^*$.  For simplicity, suppose
that $y$ is fixed, say $y = (1,\ldots,1)$, and $x$ is chosen arbitrarily 
from the cube $[1,\sqrt{\mu_0}]^n$ of $\R^n$.  One easily
verifies that $M$ obeys the coherence property with parameter $\mu_0$
(and in fact also obeys the strong incoherence property with a
comparable parameter).  Then to recover $M$, we need to see at least
one entry per row. For instance, if the first row is unsampled, one
has no information about the first coordinate $x_1$ of $x$ other than
that it lies in $[1,\sqrt{\mu_0}]$, and so the claim follows in this case by varying $x_1$ along the infinite set $[1,\sqrt{\mu_0}]$.

Now under the Bernoulli model, the number of observed entries in the
first row---and in any fixed row or column---is a binomial random
variable with a number of trials equal to $n$ and a probability of
success equal to $p$.  Therefore, the probability $\pi_0$ that any row
is unsampled is equal to $\pi_0 = (1-p)^n$. By independence, the
probability that all rows are sampled at least once is $(1-\pi_0)^n$,
and any method succeeding with probability greater $1-\delta$ would
need
\[
(1-\pi_0)^n \ge 1-\delta. 
\]
or $-n\pi_0 \ge n\log(1-\pi_0) \ge \log (1-\delta)$. When $\delta <
1/2$, $\log (1-\delta) \ge -2\delta$ and thus, any method would
need
\[
\pi_0 \le \frac{2\delta}{n}.
\] 
This is the desired conclusion when $\mu_0 > 1$, $r = 1$.

This type of simple analysis easily extends to general values of the
rank $r$ and of the coherence. Without loss of generality, assume that
$\ell := \frac{n}{\mu_0 r}$ is an integer, and consider a
(self-adjoint) $n \times n$ matrix $M$ of rank $r$ of the form
\[
M := \sum_{k = 1}^r \sigma_k u_k u_k^*,
\]
where the $\sigma_k$ are drawn arbitrarily from $[0,1]$ (say), and the singular vectors $u_1,\ldots,u_r$ are defined as follows:
\[
u_{i,k} := \sqrt{\frac{1}{\ell}} \sum_{i \in B_k} e_i, \quad B_k =
\{(k-1)\ell + 1, (k-1)\ell + 2, \ldots, k\ell\};
\]
that is to say, $u_{k}$ vanishes everywhere except on a support of
$\ell$ consecutive indices. Clearly, this matrix is incoherent with parameter $\mu_0$.  Because the supports of the singular
vectors are disjoint, $M$ is a block-diagonal matrix with diagonal
blocks of size $\ell \times \ell$. We now argue as before. Recovery with positive probability is
impossible unless we have sampled at least one entry per row of each diagonal
block, since otherwise we would be forced to guess at least one of the $\sigma_k$ based on no information (other than that $\sigma_k$ lies in $[0,1]$), and the theorem will follow by varying this singular value. Now the probability $\pi_0$ that the first row of the first
block---and any fixed row of any fixed block---is unsampled is equal
to $(1-p)^\ell$.  Therefore, any method succeeding with probability
greater $1-\delta$ would need
\[
(1-\pi_1)^{n} \ge 1-\delta,
\]
which implies $\pi_1 \le 2\delta/n$ just as before. With $\pi_1 =
(1-p)^{\ell}$, this gives \eqref{eq:lower} under the Bernoulli
model. The second part of the theorem, namely, \eqref{eq:lower2}
follows from the equivalent characterization 
\[
m \ge n^2\bigl(1 - e^{-\frac{\mu_0 r}{n} \log(n/2\delta)}\bigr)
\]
together with $1 - e^{-x} > x - x^2/2$ whenever $x \ge 0$. 

% {\em Remark.} The model matrix here does not obey the assumptions
% stated in the earlier section since $\U_{a,a'} = r/n$. It is thus conceivable that the lower bound becomes slightly weaker if one imposes a strong coherence assumption rather than a weak one.  On the other hand, the $r=1$ argument is very general and enforces a lower bound of essentially $n \log n$ for matrix completion of any reasonable model of rank $1$ matrices.

\section{Strategy and Novelty}
\label{sec:strategy}

This section outlines the strategy for proving our main results,
Theorems \ref{teo:main1} and \ref{teo:main2}. The proofs of these
theorems are the same up to a point where the arguments to estimate
the moments of a certain random matrix differ. In this section, we
present the common part of the proof, leading to two key moment
estimates, while the proofs of these crucial estimates are the object
of later sections.

One can of course prove our claims for the Bernoulli model with $p =
m/n^2$ and transfer the results to the uniform model, by using the arguments in the appendix.  For example,
the probability that the recovery via \eqref{eq:cvx2} is not exact is
at most twice that under the Bernoulli model.

\subsection{Duality}
\label{sec:duality}

We begin by recalling some calculations from \cite[Section 3]{CR}.
From standard duality theory, we know that the correct matrix $M \in
\R^{n \times n}$ is a solution to \eqref{eq:cvx2} if and only if
there exists a dual certificate $Y \in \R^{n_1\times n_2}$ with the
property that $\PO(Y)$ is a subgradient of the nuclear norm at
$M$, which we write as
\begin{equation}
  \label{eq:dual}
  \PO(Y) \in \partial \|M\|_*.
\end{equation}

We recall the projection matrices $P_U, P_V$ and the companion matrix
$E$ defined by \eqref{puv}, \eqref{eq:E}. It is known
\cite{Lew:MP:03,Wat:LAA:92} that
\begin{equation}\label{eqn:subdiffNorm}
  \partial\|\mtx{M}\|_*=\left\{E + \mtx{W} :
    ~\mtx{W}\in\mathbb{R}^{n \times n},~~
    P_U \mtx{W}=0,~~
    \mtx{W} P_V =0,~~
    \|\mtx{W}\| \leq 1\right\}.
\end{equation}
There is a more compact way to write \eqref{eqn:subdiffNorm}. Let $T \subset \R^{n \times n}$
be the span of matrices of the form $\vct{u}_k \vct{y}^*$ and $\vct{x}
\vct{v}_k^*$ and let $T^\perp$ be its orthogonal complement. Let $\PT: \R^{n \times n} \to T$ be the orthogonal projection onto $T$; one easily verifies the explicit formula
\begin{equation}
  \label{eq:PT}
  \PT(X) = P_U X + X P_V - P_U X P_V,
\end{equation}
and note that the complementary projection $\PTp := \OpId - \PT$ is given by the formula
\begin{equation}
\label{eq:PTp}
\PTp(X) = (I-P_U) X (I - P_V). 
\end{equation}
In particular, $\PTp$ is a contraction:
\begin{equation}\label{ptp-contract}
\| \PTp \| \leq 1.
\end{equation}
Then $Z \in \partial\|\mtx{X}\|_*$ if and only if 
\[
\PT(Z) =  E, \quad \text{and } \quad \|\PTp(Z)\| \le 1. 
\]
With these preliminaries in place, \cite{CR} establishes the following
result.
\begin{lemma}[Dual certificate implies matrix completion]
\label{teo:dual} Let the notation be as above.
Suppose that the following two conditions hold: 
\begin{enumerate} 
  \item There exists $\mtx{Y} \in \R^{n \times n}$ obeying 
    \begin{enumerate}
    \item $\PO(\mtx{Y}) = \mtx{Y}$,
    \item $\PT(\mtx{Y}) = E$, and 
    \item $\|{\mathcal P}_{T^\perp}(\mtx{Y})\| < 1$. 
    \end{enumerate}
  \item The restriction $\PO\downharpoonright_T: T \to \PO(\R^{n \times n})$ of the (sampling) operator $\PO$ restricted to $T$ is injective. 
  \end{enumerate}
Then $\mtx{M}$ is the unique solution to the convex program \eqref{eq:cvx2}.
\end{lemma}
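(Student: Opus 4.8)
The plan is to run the standard ``competitor'' argument from convex analysis. Suppose $M'$ is any matrix feasible for \eqref{eq:cvx2}; writing $M' = M + H$, feasibility means $\PO(H) = 0$. I will show $\|M+H\|_* \geq \|M\|_*$ and that equality forces $H = 0$. The dual certificate $Y$ (condition 1) will supply the lower bound on $\|M+H\|_*$, while the injectivity of $\PO\downharpoonright_T$ (condition 2) enters only at the very end, to upgrade ``$M$ is a minimizer'' to ``$M$ is the unique minimizer.''

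First I would manufacture a convenient subgradient of the nuclear norm at $M$. Take an SVD $\PTp(H) = \sum_j \sigma_j x_j y_j^*$ and set $W_0 := \sum_j x_j y_j^*$, so that $\|W_0\| \leq 1$. Using the explicit formula $\PTp(X) = (I - P_U) X (I - P_V)$ from \eqref{eq:PTp}, the singular vectors $x_j$, $y_j$ of $\PTp(H)$ lie in the ranges of $I - P_U$ and $I - P_V$ respectively, hence $P_U W_0 = W_0 P_V = 0$, i.e.\ $W_0 \in T^\perp$. By the characterization \eqref{eqn:subdiffNorm}, $E + W_0 \in \partial\|M\|_*$, so the subgradient inequality gives
$$ \|M+H\|_* \geq \|M\|_* + \langle E + W_0, H\rangle = \|M\|_* + \langle E, H\rangle + \|\PTp(H)\|_*, $$
where I used $E \in T$, $W_0 \in T^\perp$, and $\langle W_0, \PTp(H)\rangle = \|\PTp(H)\|_*$.

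Next I would use the certificate to bound $\langle E, H\rangle$ from below. Since $\PO(Y) = Y$ and $\PO(H) = 0$ (and $\PO$ is self-adjoint), $\langle Y, H\rangle = \langle Y, \PO(H)\rangle = 0$. Writing $Y = \PT(Y) + \PTp(Y) = E + \PTp(Y)$ and noting $\langle E, \PTp(H)\rangle = 0$, this rearranges to $\langle E, H\rangle = -\langle \PTp(Y), \PTp(H)\rangle \geq -\|\PTp(Y)\|\,\|\PTp(H)\|_*$ by the trace-duality inequality $|\langle A, B\rangle| \leq \|A\|\,\|B\|_*$. Combining with the previous display,
$$ \|M+H\|_* \geq \|M\|_* + \bigl(1 - \|\PTp(Y)\|\bigr)\,\|\PTp(H)\|_*. $$
Since $\|\PTp(Y)\| < 1$, $M$ is a minimizer; and if $M+H$ is also a minimizer then $\|\PTp(H)\|_* = 0$, i.e.\ $H \in T$. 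But feasibility gives $\PO(H) = 0$, so $H$ lies in the kernel of $\PO\downharpoonright_T$, which is $\{0\}$ by condition 2; hence $H = 0$, proving uniqueness.

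This lemma is essentially a repackaging of standard duality for the nuclear norm, so I do not anticipate a genuine obstacle. The one step that deserves care is the verification that the sign matrix $W_0$ of $\PTp(H)$ really lies in $T^\perp$ — i.e.\ that $E + W_0$ is an admissible element of $\partial\|M\|_*$ — which is exactly where the explicit description \eqref{eq:PTp} of $\PTp$ is used; the rest is the trace-duality bound together with the two hypotheses.
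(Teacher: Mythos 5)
Your argument is correct and is essentially the same standard duality argument that the paper invokes by citing Candès--Recht (Lemma 3.1 of \cite{CR}): choose the sign matrix $W_0$ of $\PTp(H)$ to form the subgradient $E+W_0$, apply the subgradient inequality, and then use the dual certificate together with the trace-duality bound $|\langle A,B\rangle| \le \|A\|\,\|B\|_*$ to produce the gap $(1-\|\PTp(Y)\|)\|\PTp(H)\|_*$, finishing with condition~2 to force $H=0$. The one point you flagged for care --- that $W_0\in T^\perp$ because the singular vectors of $\PTp(H)=(I-P_U)H(I-P_V)$ live in the ranges of $I-P_U$ and $I-P_V$ --- is exactly right and is what makes $E+W_0$ an admissible element of $\partial\|M\|_*$ per \eqref{eqn:subdiffNorm}.
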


\begin{proof} See \cite[Lemma 3.1]{CR}.
\end{proof}

The second sufficient condition, namely, the injectivity of the
restriction to $\PO$ has been studied in \cite{CR}. We recall a
useful result.

\begin{theorem}[Rudelson selection estimate]\label{teo:rudelson}  \cite[Theorem 4.1]{CR}
  Suppose $\Omega$ is sampled according to the Bernoulli model and put
  $n := \max(n_1, n_2)$. Assume that $M$ obeys \eqref{eq:coherence}.
  Then there is a numerical constant $C_R$ such that for all
  $\beta>1$, we have the bound
  \begin{equation}
    \label{eq:near-isometry}
    p^{-1} \, \|\PT \PO \PT - p \PT\| \le a
  \end{equation}
  with probability at least $1-3n^{-\beta}$ provided that $a < 1$, where $a$ is the quantity
  \begin{equation}\label{adef}
  a := C_R \sqrt{\frac{\mu_0 \, nr (\beta \log n)}{m}}
  \end{equation}
\end{theorem}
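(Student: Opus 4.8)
The plan is to recognize the operator $\PT\PO\PT$ as a sum of independent rank-related random operators on $\R^{n\times n}$ and apply a noncommutative Bernstein/Rudelson-type concentration inequality. Write the Bernoulli sampling operator as $\PO = \sum_{(a,b)} \delta_{ab}\, \langle \eab, \cdot\rangle\, \eab$, where $\delta_{ab}$ are i.i.d.\ Bernoulli$(p)$ indicators and $\{\eab\}$ is the standard basis of $\R^{n\times n}$. Then $\PT\PO\PT = \sum_{(a,b)} \delta_{ab}\, \langle \PT(\eab), \cdot\rangle\, \PT(\eab)$, and since $\E \delta_{ab}=p$ and $\sum_{(a,b)} \langle \PT(\eab),\cdot\rangle \PT(\eab) = \PT$ (because $\PT$ is an orthogonal projection and $\{\eab\}$ is an orthonormal basis), we get $\E[\PT\PO\PT] = p\,\PT$. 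Thus $\PT\PO\PT - p\PT = \sum_{(a,b)} (\delta_{ab}-p)\, \langle\PT(\eab),\cdot\rangle\,\PT(\eab)$ is a sum of independent, mean-zero, self-adjoint random operators on the Hilbert space $(\R^{n\times n},\|\cdot\|_F)$, restricted to act within $T$.

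First I would record the key deterministic bound supplied by the incoherence hypothesis \eqref{eq:coherence}: each summand has operator norm controlled by $\|\PT(\eab)\|_F^2$, and a direct computation using \eqref{eq:PT} gives $\|\PT(\eab)\|_F^2 = \|P_U e_a\|^2 + \|P_V e_b\|^2 - \|P_U e_a\|^2\|P_V e_b\|^2 \le 2\mu_0 r/n$. This is exactly the uniform bound on the ``size'' of each rank-one piece that a Rudelson-type symmetrization argument needs. Next I would run the standard machinery: symmetrize by introducing Rademacher signs $\varepsilon_{ab}$, bound the expected operator norm of $\sum_{(a,b)} \varepsilon_{ab}(\delta_{ab}-p)\langle\PT(\eab),\cdot\rangle\PT(\eab)$ via Rudelson's lemma (or the noncommutative Khintchine inequality) by roughly $\sqrt{\log n}$ times $\max_{(a,b)}\|\PT(\eab)\|_F$ times the square root of the operator norm of the ``second moment'' operator $\sum (\delta_{ab}-p)^2 \langle\PT(\eab),\cdot\rangle\PT(\eab)$, and then close the resulting self-bounding inequality to obtain $\E\|\PT\PO\PT - p\PT\| \lesssim \sqrt{p\,\mu_0 r\,\log n/n} + (\text{lower order})$ and finally a tail bound $\P(\,p^{-1}\|\PT\PO\PT-p\PT\|\ge a\,)\le 3n^{-\beta}$ with $a = C_R\sqrt{\mu_0 nr\beta\log n/m}$, using $m=pn^2$. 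Since this theorem is quoted from \cite[Theorem 4.1]{CR}, in the paper itself I would simply cite it; the proof sketch above indicates the route one would take to reprove it from scratch.

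The main obstacle in such an argument is always controlling the second-moment (variance) operator and correctly tracking the interplay between the Bernstein-type "large deviation" regime and the subgaussian regime, so that the $\sqrt{\log n}$ factor (rather than a larger power) appears and the final constant $C_R$ is genuinely absolute. In the regime of interest here, where $m = pn^2$ is only slightly above $nr\,\text{polylog}(n)$ and the summands are comparatively large, one must verify that the condition $a<1$ is precisely what keeps the self-bounding inequality from blowing up; this is why the statement carries that hypothesis. Everything else—linearity of $\PT$, the explicit formula \eqref{eq:PT}, orthonormality of $\{\eab\}$—is routine bookkeeping.
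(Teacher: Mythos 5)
The paper does not prove this theorem; it is quoted directly from \cite[Theorem 4.1]{CR}, as you yourself note at the end of your proposal. Your sketch---decomposing $\PT\PO\PT - p\PT$ into a sum of independent mean-zero rank-one self-adjoint operators on $(\R^{n\times n},\|\cdot\|_F)$, bounding $\|\PT(\eab)\|_F^2 = \|P_U e_a\|^2 + \|P_V e_b\|^2 - \|P_U e_a\|^2\|P_V e_b\|^2 \le 2\mu_0 r/n$ via the incoherence hypothesis, and then invoking Rudelson's selection lemma (noncommutative Khintchine plus symmetrization and a self-bounding inequality) to get the tail estimate---is essentially the argument given in \cite{CR}, so there is nothing to reconcile here.
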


We will apply this theorem with $\beta := 4$ (say). The statement \eqref{eq:near-isometry} is stronger than the injectivity of the restriction of
$\PO$ to $T$. Indeed, take $m$ sufficiently large so
that the $a < 1$. Then if $X \in T$, we have
\[
\|\PT \PO(X) - p X\|_F < a p \|X\|_F, 
\]
and obviously, $\PO(X)$ cannot vanish unless $X = 0$. 

In order for the condition $a < 1$ to hold, we must have 
\begin{equation}\label{m01}
m \geq C_0 \mu_0 n r \log n
\end{equation}
for a suitably large constant $C_0$.  But this follows from the hypotheses in either Theorem \ref{teo:main1} or Theorem \ref{teo:main2}, for reasons that we now pause to explain.  In either of these theorems we have 
\begin{equation}\label{m11}
m \geq C_1 \mu n r \log n
\end{equation}
for some large constant $C_1$.  Recall from Section \ref{inco} that $\mu_0 \le 1 + \mu_1/\sqrt{r} \leq 1 + \mu/\sqrt{r}$, and so \eqref{m11} implies \eqref{m01} whenever $\mu_0 \geq 2$ (say).  When $\mu_0 < 2$, we can also deduce \eqref{m01} from \eqref{m11} by applying the trivial bound $\mu \geq 1$ noted in the introduction.

In summary, to prove Theorem \ref{teo:main1} or Theorem \ref{teo:main2}, it suffices (under the hypotheses of these theorems) to exhibit a dual matrix $Y$ obeying the first sufficient condition of Lemma \ref{teo:dual}, with probability at least $1-n^{-3}/2$ (say).  This is the objective of the remaining sections of the paper.

\subsection{The dual certificate}
\label{sec:certificate}

Whenever the map $\PO\downharpoonright_T: T \to \PO(\R^{n \times n})$ restricted to $T$ is injective, the linear map 
\[
\begin{array}{lll}
T & \goto & T\\
X & \mapsto & \PT\PO\PT(X)
\end{array}
\]
is invertible, and we denote its inverse by
$(\PT\PO\PT)^{-1}: T \to T$. Introduce the dual matrix $Y \in \PO(\R^{n \times n}) \subset \R^{n \times n}$ defined via
\begin{equation}
  \label{eq:ansatz}
   Y = \PO \PT (\PT \PO \PT)^{-1} E. 
\end{equation}
By construction, $\PO(Y) = Y$, $\PT(Y) = E$ and, therefore, we will
establish that $M$ is the unique minimizer if one can show that 
\begin{equation}
\label{eq:toshow}
\|\PTp(Y) \| < 1.
\end{equation}
The dual matrix $Y$ would then certify that $M$ is the unique
solution, and this is the reason why we will refer to $Y$ as a
\emph{candidate certificate}.  This certificate was also used in \cite{CR}.

Before continuing, we would like to offer a little motivation for the
choice of the dual matrix $Y$. It is not difficult to check that
\eqref{eq:ansatz} is actually the solution to the following problem:
  \[
  \begin{array}{ll}
    \text{minimize}   & \quad \|\mtx{Z}\|_F\\
    \text{subject to} & \quad \PT \PO(\mtx{Z}) = E. 
 \end{array}
\]
Note that by the Pythagorean identity, $Y$ obeys
\[
\|\mtx{Y}\|_F^2 = \|\PT(\mtx{Y})\|_F^2 + \|{\mathcal
  P}_{T^\perp}(\mtx{Y})\|_F^2 = r + \|{\mathcal P}_{T^\perp}(\mtx{Y})\|_F^2. 
\]
The interpretation is now clear: among all matrices obeying $\PO(Z) =
Z$ and $\PT(Z) = E$, $Y$ is that element which minimizes $\|{\mathcal
  P}_{T^\perp}(\mtx{Z})\|_F$. By forcing the Frobenius norm of ${\mathcal
  P}_{T^\perp}(\mtx{Y})$ to be small, it is reasonable to expect that
its spectral norm will be sufficiently small as well. In that sense,
$Y$ defined via \eqref{eq:ansatz} is a very suitable candidate. 

Even though this is a different problem, our candidate certificate
resembles---and is inspired by---that constructed in \cite{CRT:TIT:06}
to show that $\ell_1$ minimization recovers sparse vectors from
minimally sampled data.

\subsection{The Neumann series}
\label{sec:neuman}

%  Thus if $\delta_{ab} = 1_{(a,b) \in
%   \Omega}$, we have 
% \[
% Q_\Omega: e_{ab} \mapsto \xi_{ab} e_{ab}, \quad \text{where} \quad \xi_{ab}
% := \delta_{ab}/p - 1.
% \]

We now develop a useful formula for the candidate certificate, and begin by introducing a normalized version $\QO: \R^{n \times n} \to \R^{n \times n}$ of $\PO$, defined by the formula
\begin{equation}
  \label{eq:QO}
  \QO := \frac{1}{p} \PO - \OpId
\end{equation}
where $\OpId: \R^{n \times n} \to \R^{n \times n}$ is the identity operator on matrices (\emph{not} the identity matrix $I \in \R^{n \times n}$!).  
Note that with the Bernoulli model for selecting $\Omega$, that $\QO$ has expectation zero.

From \eqref{eq:QO} we have $\PT \PO
\PT = p \PT(\OpId + \QO)\PT$, and owing to Theorem \ref{teo:rudelson}, one
can write $(\PT \PO \PT)^{-1}$ as the convergent Neumann series
\[
p (\PT \PO \PT)^{-1} =  \sum_{k \ge 0} (-1)^k (\PT \QO \PT)^k. 
\]
From the identity $\PTp \PT = 0$ we conclude that $\PTp \PO \PT = p (\PTp \QO \PT)$.  One can therefore express the candidate certificate $Y$ \eqref{eq:ansatz} as
\begin{align*}
  \PTp(Y) & = \sum_{k \ge 0} (-1)^k \PTp \QO  (\PT \QO \PT)^k (E)\\
  & = \sum_{k \ge 0} (-1)^k \PTp (\QO \PT)^k \QO (E),
\end{align*}
where we have used $\PT^2 = \PT$ and $\PT(E) = E$. By the triangle inequality and \eqref{ptp-contract}, it thus suffices to show that
\[
\sum_{k \ge 0} \|(\QO \PT)^k \QO (E)\| < 1
\]
with probability at least $1-n^{-3}/2$.

It is not hard to bound the tail of the series thanks to Theorem
\ref{teo:rudelson}. First, this theorem bounds the spectral norm of
$\PT\QO \PT$ by the quantity $a$ in \eqref{adef}. This gives that for each $k \ge
1$, $\|(\PT\QO\PT)^k(E)\|_F < a^k \|E\|_F = a^k \sqrt{r}$ and,
therefore,
\[
\|(\QO \PT)^k \QO (E)\|_F  = \|\QO\PT (\PT\QO\PT)^k (E)\|_F \le 
\|\QO\PT\| a^k \sqrt{r}. 
\]
Second, this theorem also bounds $\|\QO\PT\|$ (recall that this is the
spectral norm) since
\[
\|\QO\PT\|^2 = \max_{\|X\|_F \le 1} \<\QO \PT(X),\QO \PT(X)\> =
\<X,\PT \QO^2 \PT(X)\>. 
\]
Expanding the identity $\PO^2 = \PO$ in terms of $\QO$, we obtain
\begin{equation}
  \label{eq:QOsq}
\QO^2 = \frac{1}{p}[(1-2p) \QO + (1-p) \OpId], 
\end{equation}
and thus, for all $\|X\|_F \le 1$, 
\[
p \<X,\PT \QO^2 \PT(X)\> = (1-2p) \<X, \PT\QO \PT(X)\> + (1-p)
\|\PT(X)\|_F^2 \le a + 1.
\]
Hence $\|\QO\PT\| \le \sqrt{(a+1)/p}$. For each $k_0 \ge 0$, this
gives
\[
\sum_{k \ge k_0} \|(\QO \PT)^k \QO (E)\|_F \le \sqrt{\frac{3r}{2p}} \sum_{k \ge
  k_0} a^k \le \sqrt{\frac{6r}{p}} a^{k_0}
\]
provided that $a < 1/2$. With $p = m/n^2$ and $a$ defined by \eqref{adef} with $\beta=4$, we have 
\begin{equation*}
   \sum_{k \ge k_0} \|(\QO \PT)^k \QO (E)\|_F \le \sqrt{n} \times
  O\left(\frac{\mu_0 nr \log n}{m}\right)^{\frac{k_0+1}{2}}
\end{equation*}
with probability at least $1 - n^{-4}$.  When $k_0+1 \ge \log n$,
$n^{\frac{1}{k_0+1}} \le n^{\frac{1}{\log n}} = e$ and thus for each
such a $k_0$,
\begin{equation}
\label{eq:tailseries}
\sum_{k \ge k_0} \|(\QO \PT)^k \QO (E)\|_F \le
O\left(\frac{\mu_0 nr \log n}{m}\right)^{\frac{k_0+1}{2}}
\end{equation}
with the same probability. 

To summarize this section, we conclude that since both our results
assume that $m \ge c_0 \mu_0 nr \log n$ for some sufficiently large
numerical constant $c_0$ (see the discussion at the end of Section \ref{sec:duality}), it now suffices to show that 
\begin{equation}
\label{eq:kth}
\sum_{k=0}^{\lfloor \log n \rfloor} \|(\QO \PT)^k \QO E \| \le \frac{1}{2}
\end{equation}
(say) with probability at least $1-n^{-3}/4$ (say).

\subsection{Centering}

We have already normalised $\PO$ to have ``mean zero'' in some
sense by replacing it with $\QO$.  Now we perform a similar
operation for the projection $\PT: X \mapsto P_U X + X P_V - P_U X P_V$. The
eigenvalues of $\PT$ are centered around
\begin{equation}
\label{eq:rho}
\rho' := \trace(\PT)/n^2 = 2\rho - \rho^2, \quad \rho := r/n, 
\end{equation}
as this follows from the fact that $\PT$ is a an orthogonal projection
onto a space of dimension $2nr - r^2$.  Therefore, we simply split
$\PT$ as
\begin{equation}
\label{eq:QT}
 \PT = \QT + \rho' \OpId, 
\end{equation}
so that the eigenvalues of $\QT$ are centered around zero.  From now
on, $\rho$ and $\rho'$ will always be the numbers defined above. 

\begin{lemma}[Replacing $\PT$ with $\QT$]
\label{teo:equiv}  Let $0 < \sigma < 1$.
Consider the event such that
  \begin{equation}
    \label{eq:QTbound}
    \|(\QO \QT)^k \QO(E)\| \le \sigma^{\frac{k+1}{2}}, \quad \text{for all } 0 \le k < k_0.  
  \end{equation}
  Then on this event, we have that for all $0 \le k < k_0$,
 \begin{equation}
    \label{eq:PTbound}
    \|(\QO \PT)^k \QO(E)\| \le (1+4^{k+1}) \, \sigma^{\frac{k+1}{2}},  
  \end{equation}
  provided that $8nr/m < \sigma^{3/2}$.
\end{lemma}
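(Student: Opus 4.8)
The plan is to substitute the splitting $\PT=\QT+\rho'\OpId$ into $(\QO\PT)^k$, expand the product into a sum of $2^k$ operator words in the two letters $A:=\QO\QT$ and $B:=\rho'\QO$, and treat the words one at a time:
\[
(\QO\PT)^k\QO(E)=\sum_{w\in\{A,B\}^k}w\,\QO(E).
\]
The single word $w=A^k$ contributes exactly $(\QO\QT)^k\QO(E)$, whose norm is at most $\sigma^{(k+1)/2}$ on the event \eqref{eq:QTbound} (here $k<k_0$); this produces the term ``$1$'' in \eqref{eq:PTbound}. It then remains to bound the total contribution of the words containing at least one $B$ by $4^{k+1}\sigma^{(k+1)/2}$.

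Fix such a word $w$, with $j$ letters $A$ and $k-j\ge1$ letters $B$. Grouping the maximal runs of $B$'s (of lengths $c_0,\dots,c_j\ge0$, $\sum_i c_i=k-j$) and using $B^c=(\rho')^c\QO^{c}$ gives the explicit form
\[
w\,\QO(E)=(\rho')^{k-j}\,\QO^{c_0+1}\QT\,\QO^{c_1+1}\QT\cdots\QT\,\QO^{c_j+1}E .
\]
I would reduce this to a linear combination of \emph{stable} words --- alternating products of single copies of $\QO$ and single copies of $\QT$ applied to $E$ --- using three algebraic facts: $\QO^2=\frac{1}{p}[(1-2p)\QO+(1-p)\OpId]$ from \eqref{eq:QOsq}, so that $\QO^\ell=\alpha_\ell\QO+\beta_\ell\OpId$ with $|\alpha_\ell|,|\beta_\ell|\le3(1/p)^{\ell-1}$; the identity $\QT^2=(1-2\rho')\QT+\rho'(1-\rho')\OpId$, which follows from $\PT^2=\PT$, so that $\QT^m=c_m\QT+d_m\OpId$ with $|c_m|\le2$, $|d_m|\le3\rho'$; and $\QT E=(1-\rho')E$, which follows from $\PT E=E$ since $E\in T$. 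Reducing the $\QO$-runs first turns some of them into $\OpId$'s, which merge adjacent $\QT$'s into higher powers $\QT^m$; reducing those may in turn create factors $\QO^2$, and so on --- a finite cascade of $\QO$- and $\QT$-merges that alternate strictly. Every stable word in which $\QO$ appears $q$ times has norm at most $3\sigma^{q/2}$ on the event \eqref{eq:QTbound}: one uses $\QT E=(1-\rho')E$ to absorb a trailing $\QT$, the crude bound $\|\QT X\|\le3\|X\|$ (valid since $\|\PT X\|\le2\|X\|$) to strip a leading one, and then \eqref{eq:QTbound} itself, which applies because $q-1\le j\le k<k_0$.

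Everything then reduces to bookkeeping the scalar accumulated during the reduction. It is a product of the prefactor $(\rho')^{k-j}$, one factor $\alpha_{c_i+1}$ or $\beta_{c_i+1}$ per $\QO$-run, and the factors $c_m,d_m$ produced along the cascade. The decisive point is that a $\QO$-merge carries a factor of size $\sim1/p$ while a $\QT$-merge carries a factor of size $\sim\rho'$, and since these strictly alternate the cascade contributions are controlled by powers of $\rho'/p$; moreover each $(\rho')^{c_i}$ from the prefactor exactly absorbs the $(1/p)^{c_i}$ riding on the $i$-th $\QO$-run. This is where the hypothesis enters: $8nr/m<\sigma^{3/2}$ together with $nr/m=\rho/p$ and $\rho'\le2\rho$ gives $\rho'/p\le\sigma^{3/2}/4$. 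Carrying out the estimate, the total contribution of a word with $j$ letters $A$ comes out at most $O(1)^{k-j}\,(\sigma^{3/2})^{k-j}\,\sigma^{(j+1)/2}$ --- the geometric series over the depth of the cascade converging because each extra merge-pair costs a factor $\lesssim\sigma^{1/2}$ while only buying back $\sigma^{-1/2}$ from the accompanying drop in $q$. Since $(\sigma^{3/2})^{k-j}\sigma^{(j+1)/2}=\sigma^{k-j}\sigma^{(k+1)/2}\le\sigma^{(k+1)/2}$, summing over the $\binom{k}{j}$ words with $j$ letters $A$ and then over $0\le j\le k-1$ gives, by the binomial theorem, a bound $C(C')^k\sigma^{(k+1)/2}$; one verifies $C(C')^k\le4^{k+1}$ for the (small, absolute) constants that arise, which together with the main term yields \eqref{eq:PTbound}.

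The main obstacle is exactly the reduction-and-bookkeeping of the third paragraph: showing the cascade of $\QO^\ell$- and $\QT^m$-reductions terminates, enumerating the stable words it produces, and bounding all their accumulated scalars --- the delicacy being to keep the proliferating constant factors (the $3$'s in $\alpha_\ell,\beta_\ell$, the $2$'s in $c_m$, the leading-$\QT$ factor, and the branching of the cascade) inside the fairly generous budget $4^{k+1}$. The exponent $3/2$ in the hypothesis is just the arithmetic $3/2=1+\frac12$: each $B$ supplies a factor $\rho'/p\le\sigma^{3/2}/4$, of which $\sigma^{1/2}$ is spent lifting the core estimate from $\sigma^{(j+1)/2}$ to $\sigma^{(k+1)/2}$, leaving a summable surplus $\sigma/4\le\frac14$.
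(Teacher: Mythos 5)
Your approach is in substance the same as the paper's: you substitute $\PT = \QT + \rho'\OpId$, expand, and use the identities $\QO^2 = \frac{1}{p}[(1-2p)\QO + (1-p)\OpId]$, $\QT^2 = (1-2\rho')\QT + \rho'(1-\rho')\OpId$, and $\QT E = (1-\rho')E$ to reduce to a linear combination of alternating ``stable'' words, with scalar weights controlled by powers of $\rho'/p \le \sigma^{3/2}/4$. The paper does exactly this, but packages the reduction once and for all as a structural identity (Lemma~\ref{teo:QTPT}: $(\QO\PT)^k\QO$ is a linear combination of the four families $(\QO\QT)^j\QO$, $(\QO\QT)^j$, $\QT(\QO\QT)^j\QO$, $\QT(\QO\QT)^j$ with coefficients satisfying an explicit recurrence), and then bounds the coefficients by induction (Lemma~\ref{teo:alpha-bound}: $|\alpha^{(k)}_j|, \ldots \le \lambda^{\lceil(k-j)/2\rceil}4^k$ with $\lambda=\rho'/p$). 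Your word-by-word cascade is an equivalent but less tightly organized version of the same computation.

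There is, however, a genuine error in your per-word estimate. You claim the total contribution of a word with $j$ letters $A$ is $O(1)^{k-j}(\sigma^{3/2})^{k-j}\sigma^{(j+1)/2}$, with the idea that ``each extra merge-pair costs $\sigma^{3/2}/4$ while only buying back $\sigma^{-1/2}$.'' But the stable words with fewer than $j+1$ copies of $\QO$ do not always arise from a $\QT$-merge/$\QO$-merge pair. Picking the $\beta_{\ell}\OpId$ branch in a $\QO$-run at either end of the chain (next to $E$, or in the leading position) drops $q$ by $1$ at the same scalar cost as picking $\alpha_\ell\QO$---no $\rho'$ is incurred, since there is no $\QT^m$ to re-merge. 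Concretely, for $w=B^k$ ($j=0$) one has $w\QO E = (\rho')^k\QO^{k+1}E$ with the two stable words $\alpha_{k+1}\QO E$ and $\beta_{k+1}E$, and both scalars are $\sim(\rho'/p)^k$. The $E$ branch has $q=0$, so the per-word contribution is $\sim(\rho'/p)^k\cdot 1 \lesssim (\sigma^{3/2})^k$, which exceeds your claimed $(\sigma^{3/2})^k\sigma^{1/2}$ by a factor $\sigma^{-1/2}$; the same phenomenon loses a factor up to $\sigma^{-1}$ for $j\ge1$. Fortunately $(\sigma^{3/2})^{k-j}\sigma^{(j-1)/2}\le\sigma^{(k+1)/2}$ still holds for $1\le k-j$, and the $4^{k+1}$ budget is generous enough to absorb the loss; but the specific accounting you give (and the claim that the geometric series converges because each merge-pair costs $\sigma^{1/2}$) would not close as written. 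The paper sidesteps this by working directly with the aggregated coefficients, whose recurrence makes the $\lambda^{\lceil(k-j)/2\rceil}$ decay transparent. You would need to redo the per-word bound as $O(1)^{k-j}(\sigma^{3/2}/4)^{k-j}\max_q(\text{scalar factor})\cdot\sigma^{q/2}$ over the reachable $q$, being careful about where a drop in $q$ does and does not carry a $\rho'$ factor.
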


From \eqref{eq:PTbound} and the geometric series formula we obtain the corollary
\begin{equation}
\label{eq:earlyseries}
\sum_{k = 0}^{k_0-1}
  \|(\QO \PT)^k \QO(E)\| \le 5\sqrt{\sigma} \frac{1}{1-4\sqrt{\sigma}}. 
\end{equation}
Let $\sigma_0$ be such that the right-hand side is less than 1/4, say.
Applying this with $\sigma = \sigma_0$, we conclude that to prove
\eqref{eq:kth} with probability at least $1-n^{-3}/4$, it suffices by
the union bound to show that \eqref{eq:QTbound} for this value of
$\sigma$.  (Note that the hypothesis $8nr/m < \sigma^{3/2}$ follows
from the hypotheses in either Theorem \ref{teo:main1} or Theorem
\ref{teo:main2}.)

Lemma \ref{teo:equiv}, which is proven in the Appendix, is useful
because the operator $\QT$ is easier to work with than $\PT$ in the
sense that it is more homogeneous, and obeys better estimates.  If we
split the projections $P_U, P_V$ as
\begin{equation}\label{prhoq}
P_U = \rho I + Q_U, \quad P_V = \rho I + Q_V,
\end{equation}
then $\QT$ obeys
\[
\QT(X) = (1-\rho) Q_U X + (1-\rho) X Q_V - Q_U X Q_V.
\]
Let $\U_{a,a'}, \V_{b,b'}$ denote the matrix elements of $Q_U, Q_V$:
\begin{equation}
\label{eq:UV}
\U_{a, a'} := \<e_a, Q_U e_{a'}\> = \<e_a, P_U e_{a'}\> - \rho
1_{a = a'},
\end{equation}
and similarly for $\V_{b,b'}$. The coefficients $c_{ab,a'b'}$ of $\QT$
obey 
\begin{equation}\label{cab}
  c_{ab,a'b'} := \<e_a e_b^*, \QT(e_{a'} e_{b'})\> 
=  (1 - \rho) 1_{b = b'} \U_{a,a'} +
  (1 - \rho) 1_{a = a'} \V_{b,b'} - \U_{a,a'} \V_{b,b'}.  
\end{equation}
An immediate consequence of this under the assumptions \eqref{eq:grp}, is the
estimate
\begin{equation}
  \label{eq:cbound}
  |c_{ab,a'b'}| \lesssim (1_{a = a'} + 1_{b = b'})
\frac{\mu \sqrt{r}}{n} + \frac{\mu^2 r}{n^2}.
\end{equation}
When $\mu = O(1)$, these coefficients are bounded by $O(\sqrt{r}/n)$
when $a = a'$ or $b = b'$ while in contrast, if we stayed with $\PT$
rather than $\QT$, the diagonal coefficients would be as large as
$r/n$. However, our lemma states that bounding $\|(\QO \QT)^k
\QO(E)\|$ automatically bounds $\|(\QO \PT)^k \QO(E)\|$ by nearly the
same quantity.  This is the main advantage of replacing the $\PT$ by the $\QT$ in our analysis.

\subsection{Key estimates}
\label{sec:key}

To summarize the previous discussion, and in particular the bounds \eqref{eq:earlyseries} and \eqref{eq:tailseries}, we see everything
reduces to bounding the spectral norm of $(\QO \QT)^k
\QO(E)$ for $k = 0, 1, \ldots, \lfloor \log n \rfloor$. Providing
good upper bounds on these quantities is the crux of the argument. We use the moment method, controlling a spectral norm a matrix by the trace of a high power of that matrix.  We will prove two moment estimates which ultimately imply our two main results (Theorems \ref{teo:main1} and \ref{teo:main2}) respectively.  The first such estimate is as follows:

\begin{theorem}[Moment bound I]
  \label{teo:moment1}
  Set $A = (\QO \QT)^k \QO(E)$ for a fixed $k \ge 0$. Under
  the assumptions of Theorem \ref{teo:main1}, we have that for each
  $j > 0$,
\begin{equation}
\label{eq:moment1}
\E \bigl[\trace(A^* A)^{j}\bigr] = O\bigl(j(k+1)\bigr)^{2j(k+1)}
n \Bigl(\frac{ n \rmu^2}{m}\Bigr)^{j(k+1)}, \quad \rmu := \mu^2 r,  
\end{equation}
provided that $m \ge n \rmu^2$ and $n \ge c_0 j(k+1)$ for some
numerical constant $c_0$.
\end{theorem}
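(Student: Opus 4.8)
The plan is to expand $\trace(A^*A)^j$ as an explicit sum over closed paths and estimate the expectation term by term. Write $A = (\QO\QT)^k \QO(E)$. Since $\trace(A^*A)^j = \trace((A^*A)^j)$ and $A$ is a product of $2k+1$ operators ($k+1$ copies of $\QO$ alternating with $k$ copies of $\QT$) applied to $E$, the quantity $(A^*A)^j$ is, after inserting the matrix-coordinate representations $c_{ab,a'b'}$ of $\QT$ from \eqref{cab} and the entries of $E$ from \eqref{eq:E}, a sum indexed by a long sequence of row/column indices $a_1,b_1,a_2,b_2,\dots$ traversing a cycle of combinatorial length roughly $2j(k+1)$. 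Each edge of this cycle carries either a factor of $\QO$ (equivalently, a centered Bernoulli variable $\xi_{ab} := p^{-1}1_{(a,b)\in\Omega} - 1$, so $\QO$ acts entrywise by multiplication by $\xi_{ab}$), a factor $c_{ab,a'b'}$ from $\QT$, or a factor $E_{ab}$. I would first set up this ``sum over labelled graphs'' bookkeeping carefully, so that $\E[\trace(A^*A)^j]$ becomes a sum over index-assignments of (products of $c$'s and $E$'s) times $\E[\prod \xi_{ab}]$.

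The second step is to exploit the two structural facts. First, $\E[\xi_{ab}] = 0$ and the $\xi_{ab}$ are independent across distinct entries $(a,b)$, so a term survives only if every distinct entry $(a,b)$ appearing among the $\QO$-edges appears at least twice; moreover $\E[\xi_{ab}^\ell] = O(1/p)^{\ell-1}$ for $\ell\ge 2$ (this follows from \eqref{eq:QOsq}-type identities, or directly from the Bernoulli moments), so each ``coincidence class'' of size $\ell$ contributes a factor $O(p^{-(\ell-1)}) = O(n^2/m)^{\ell-1}$. Second, the bounds \eqref{eq:cbound}: each $\QT$-coefficient is $O(\mu^2 r/n^2) = O(\rmu/n^2)$ in general, but is as large as $O(\mu\sqrt{r}/n) = O(\sqrt{\rmu}/n)$ when one of its two index-constraints $a=a'$ or $b=b'$ holds; and $|E_{ab}| = O(\mu\sqrt{r}/(\sqrt{n_1 n_2})) = O(\sqrt{\rmu}/n)$ from \eqref{eab}. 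The combinatorial heart of the argument is then: given the forced coincidences among the $\QO$-edges, count the number of free index choices (each worth a factor $n$), and bound the total weight. One has to track, for each graph topology, a trade-off: coincidences among $\QO$-edges reduce the number of free indices (losing factors of $n$) but are rewarded by $\xi$-moment factors $n^2/m$; the $1_{a=a'}$ / $1_{b=b'}$ constraints in the $c$-coefficients likewise reduce free indices but upgrade a $\rmu/n^2$ weight to $\sqrt{\rmu}/n$. The goal is to show every graph topology is dominated by $O(j(k+1))^{2j(k+1)} \cdot n \cdot (n\rmu^2/m)^{j(k+1)}$; the single leading factor of $n$ comes from the trace (the ``cyclic'' free index), and the exponent $2j(k+1)$ on the combinatorial prefactor counts the number of ways to pair up / organize the $\approx 2j(k+1)$ edges (a Catalan-type or crude $(\#\text{edges})^{\#\text{edges}}$ count), which is where the hypothesis $n \ge c_0 j(k+1)$ is used to absorb lower-order terms.

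The main obstacle — and the step I expect to occupy most of the work — is the graph-counting/weight-bookkeeping in the second step: organizing the sum over all index-identification patterns and proving the uniform per-topology bound. Concretely, one must (i) reduce to connected ``quotient graphs'' where all forced identifications have been made, (ii) for such a graph with $v$ vertices and $e$ edges bound the number of free-index assignments by $n^{v}$ and the product of weights by an appropriate power of $\sqrt{\rmu}/n$ and $\rmu/n^2$ depending on how many edges sit on a ``double edge'' versus carry a diagonal constraint, (iii) combine with the $\xi$-moment gains $(n^2/m)^{e-v+1}$-type factors, and (iv) check that the worst case — which should be the ``fully paired'' topology where the $\QO$-edges form a non-crossing perfect matching and the $\QT$-coefficients are taken in their $\sqrt{\rmu}/n$ regime — yields exactly the claimed power $(n\rmu^2/m)^{j(k+1)}$, with all other topologies strictly smaller by powers of $(n\rmu^2/m)$ or $\rmu/m$ (here $m \ge n\rmu^2$ guarantees these are genuine savings). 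The bound $E^*E = P_V$, $EE^*=P_U$ and $\PT(E)=E$ should be used to simplify the two ``endpoint'' contributions of $E$ in each of the $2j$ strands. I would handle the general $j, k$ case by induction is tempting but I expect a direct global count is cleaner; the algebra of keeping track of the many subscripts is the real cost, exactly as the authors warn in the notation section.
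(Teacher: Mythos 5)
Your overall strategy coincides with the paper's approach in Sections 4 and 5: expand $\trace(A^*A)^j$ as a sum over ``spider'' paths indexed by row/column labels, use independence and mean-zero of the $\xi$'s so that only configurations in which every visited entry $(a,b)$ appears at least twice survive, bound the resulting $\xi$-moment factors by powers of $1/p$, take absolute values of the coefficients using the pointwise bounds from \eqref{eab} and \eqref{eq:cbound}, and count configurations. This is the moment method the paper uses.

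Two points deserve comment. First, you propose invoking $E^*E = P_V$, $EE^* = P_U$, $\PT(E)=E$ to simplify the $E$ endpoints. For this first moment bound that is actually the wrong instinct: the paper explicitly takes absolute values of \emph{every} coefficient at the outset and never uses any algebraic cancellation, precisely because the crude pointwise bounds suffice to get the quadratic-in-$r$ estimate. The cancellation identities \eqref{proj-id}--\eqref{proj-i3d} are the engine of the much more delicate Section 6 proof of Theorem \ref{teo:moment2}, where they buy the improvement from $r^2$ to $r$. Folding them in here would push you toward the harder argument rather than the one you want.

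Second, and more substantively, your step (iv) --- ``check that the worst case yields exactly the claimed power, with other topologies strictly smaller'' --- is the whole ballgame, and the non-crossing-matching heuristic does not by itself yield a uniform bound over all configurations. The paper replaces it with a concrete deterministic inequality (Lemma \ref{teo:Q}): with $J$, $K$ the sets of distinct row/column labels, $\Omega$ the visited entries, $Q$ the non-rook steps, and $Q'$ the non-rook steps that land on an already-used row or column, one has
$$|J|+|K|-|Q|-|\Omega| \le 1-|Q'|,$$
proved by scanning the path incrementally and tracking how each new vertex changes each quantity. This is what supplies exactly one leading factor of $n$ (the $+1$) and shows that extra coincidences ($|Q'|>0$) actually \emph{reduce} the number of free indices. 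Paired with the counting bound that there are at most $O(j(k+1))^{2j(k+1)+q}$ strongly admissible pairs with $|Q'|=q$ (Lemma \ref{paircount}), it gives a convergent geometric series in $q$ once $n \geq c_0 j(k+1)$. Your proposal correctly locates this bookkeeping as the main obstacle, but without this inequality (or an equivalent) the per-topology estimates do not assemble into \eqref{eq:moment1}.
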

By Markov's inequality, this result automatically estimates the norm of
$(\QO \QT)^k \QO(E)$ and immediately gives the following
corollary. 
\begin{corollary}[Existence of dual certificate I]
  \label{teo:dual1}
  Under the assumptions of Theorem \ref{teo:main1}, the matrix $Y$
  \eqref{eq:ansatz} is a dual certificate, and obeys $\|\PTp(Y)\| \le
  1/2$ with probability at least $1 - n^{-3}$ provided that $m$ obeys
  \eqref{eq:main1}.
\end{corollary}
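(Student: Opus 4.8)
The plan is to deduce Corollary \ref{teo:dual1} from Theorem \ref{teo:moment1} via the moment (trace) method, combined with the reductions already carried out in Sections \ref{sec:neuman} and the Centering subsection. Recall from the discussion culminating in \eqref{eq:kth} and \eqref{eq:earlyseries} that it suffices to establish the bound \eqref{eq:QTbound}, i.e.\ $\|(\QO\QT)^k\QO(E)\| \le \sigma_0^{(k+1)/2}$ for all $0 \le k \le \lfloor \log n\rfloor$, with probability at least $1-n^{-3}/4$, where $\sigma_0$ is the fixed small constant chosen after \eqref{eq:earlyseries}. So the entire task is to convert the expectation bound \eqref{eq:moment1} on $\E[\trace(A^*A)^j]$ into a high-probability spectral-norm bound on $A = (\QO\QT)^k\QO(E)$, and then union-bound over the $O(\log n)$ values of $k$.

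First I would fix $k$ and apply Markov's inequality to the nonnegative random variable $\trace(A^*A)^j = \sum_i \sigma_i(A)^{2j} \ge \|A\|^{2j}$. This gives, for any threshold $t>0$,
\[
\P\bigl(\|A\| \ge t\bigr) \le t^{-2j}\, \E\bigl[\trace(A^*A)^j\bigr] \le t^{-2j}\, \bigl(C j(k+1)\bigr)^{2j(k+1)} n \Bigl(\frac{n\rmu^2}{m}\Bigr)^{j(k+1)}.
\]
I would then set the target $t = \sigma_0^{(k+1)/2}$ and choose the free moment parameter $j$ to make the right-hand side smaller than, say, $n^{-4}$ (so that summing over the at most $1+\lfloor\log n\rfloor$ values of $k$ yields the desired $1-n^{-3}/4$). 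Taking $2j$-th roots, the bound is $\le n^{-4}$ provided roughly
\[
\sigma_0^{-(k+1)/2}\, \bigl(Cj(k+1)\bigr)^{(k+1)}\, n^{1/2j}\, \Bigl(\frac{n\rmu^2}{m}\Bigr)^{(k+1)/2} \le n^{-2/j},
\]
i.e.\ one needs $\frac{n\rmu^2}{m}$ to be smaller than a constant times $\sigma_0 / (C j(k+1))^2$, together with $n^{1/j}$ absorbed. The natural choice is $j \asymp \log n$ (so $n^{1/j} = O(1)$), which forces the key smallness condition
\[
\frac{n\rmu^2}{m} \lesssim \frac{\sigma_0}{(k+1)^2 (\log n)^2},
\]
and since $k+1 \le \log n + 1 = O(\log n)$, it suffices that $\frac{n\rmu^2}{m} \lesssim \frac{1}{(\log n)^4}$ up to the absolute constant $\sigma_0$, i.e.\ $m \ge C\mu^4 n r^2 (\log n)^2$. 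In the bounded-rank case $r = O(1)$ this is exactly the hypothesis \eqref{eq:main1} of Theorem \ref{teo:main1}, so the condition holds; one also checks the side hypotheses $m \ge n\rmu^2$ and $n \ge c_0 j(k+1)$ of Theorem \ref{teo:moment1} hold for $n$ large and $j(k+1) = O((\log n)^2)$.

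Once each individual tail bound $\P(\|(\QO\QT)^k\QO(E)\| \ge \sigma_0^{(k+1)/2}) \le n^{-4}$ is in place, I would take the union bound over $k = 0,1,\dots,\lfloor\log n\rfloor$ to obtain \eqref{eq:QTbound} (for $\sigma = \sigma_0$ and $k_0 = \lfloor\log n\rfloor + 1$) with probability at least $1 - (\log n + 1) n^{-4} \ge 1 - n^{-3}/4$ for $n$ large. Feeding this into Lemma \ref{teo:equiv} (whose hypothesis $8nr/m < \sigma_0^{3/2}$ is implied by \eqref{eq:main1}) gives \eqref{eq:PTbound}, hence \eqref{eq:earlyseries} bounds $\sum_{k\le \lfloor\log n\rfloor}\|(\QO\PT)^k\QO(E)\| \le 1/4$; combined with the tail estimate \eqref{eq:tailseries} for $k \ge \lfloor \log n\rfloor$ — which is $O(\mu_0 nr\log n/m)^{(\log n +1)/2} = o(1)$ under \eqref{eq:main1} — this yields \eqref{eq:kth}, and therefore via the Neumann series computation $\|\PTp(Y)\| \le 1/2$. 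Since $\PO(Y) = Y$ and $\PT(Y) = E$ by construction of $Y$ in \eqref{eq:ansatz}, and the injectivity condition holds by Theorem \ref{teo:rudelson} (as $m \ge C_0\mu_0 nr\log n$ follows from \eqref{eq:main1} and $\mu_0 \le 1 + \mu/\sqrt r$), all conditions of Lemma \ref{teo:dual} are met, so $M$ is the unique minimizer; tracking the failure probabilities gives the stated $1-n^{-3}$.

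I expect the only genuinely delicate point to be the bookkeeping of the constants and the exponents: one must verify that the choice $j \asymp \log n$ simultaneously (i) kills the stray factor $n^{1/2j}$ in \eqref{eq:moment1}, (ii) keeps the combinatorial prefactor $(Cj(k+1))^{2j(k+1)}$ under control after taking $2j$-th roots — it contributes a factor $(C(\log n)^2)^{k+1}$, which is why one loses two powers of $\log n$ and needs $(\log n)^2$ rather than $\log n$ in \eqref{eq:main1} — and (iii) respects the constraint $n \ge c_0 j(k+1) = c_0 O((\log n)^2)$ of Theorem \ref{teo:moment1}. None of this is deep, but it is where Theorem \ref{teo:main1}'s quadratic-in-$r$ loss \eqref{eq:quadratic} and the $(\log n)^2$ factor originate, and it should be presented carefully. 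Everything else is a direct citation of the already-established reductions.
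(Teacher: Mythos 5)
Your overall strategy matches the paper exactly: Markov on $\trace(A^*A)^j \ge \|A\|^{2j}$, plug in the moment bound \eqref{eq:moment1}, union bound over $k = 0,\dots,\lfloor\log n\rfloor$, then feed the resulting \eqref{eq:QTbound} through Lemma \ref{teo:equiv}, \eqref{eq:earlyseries}, and \eqref{eq:tailseries} to get \eqref{eq:kth} and hence $\|\PTp(Y)\| \le 1/2$. However, there is a concrete flaw in the choice of the moment parameter $j$, and it causes the numbers not to close.

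You fix $j \asymp \log n$ uniformly in $k$. Then $j(k+1)$ ranges up to $\asymp (\log n)^2$, and the combinatorial prefactor $O(j(k+1))^{2j(k+1)}$ in \eqref{eq:moment1}, after taking $(2j)$-th roots, contributes a factor of roughly $\bigl(Cj(k+1)\bigr)^{k+1}$. As you correctly computed, this forces the condition $n\rmu^2/m \lesssim \sigma_0/(j(k+1))^2$, and with $j \asymp \log n$ and $k+1 = O(\log n)$ that becomes $n\rmu^2/m \lesssim 1/(\log n)^4$, i.e.\ $m \gtrsim n\rmu^2 (\log n)^4$. You then wrote ``i.e.\ $m \ge C\mu^4 n r^2 (\log n)^2$,'' but that step is not correct --- the condition $n\rmu^2/m \lesssim (\log n)^{-4}$ rewrites as $m \gtrsim \mu^4 r^2 n (\log n)^4$, two powers of $\log n$ more than \eqref{eq:main1} provides. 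So as written the proof does not close under the stated hypothesis.

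The fix (which is what the paper actually does) is to take $j$ to be the \emph{smallest} integer with $j(k+1) \ge \log n$, so that $j$ shrinks as $k$ grows. One still has $n^{1/(j(k+1))} \le e$, killing the stray factor of $n$, but now $j(k+1) \le \log n + k \le 2\log n$ uniformly over $0 \le k \le \lfloor \log n\rfloor$, so the prefactor is $\bigl(Cj(k+1)\bigr)^2 = O\bigl((\log n)^2\bigr)$, independent of $k$. The required smallness condition becomes $n\rmu^2/m \lesssim \sigma_0/(\log n)^2$, i.e.\ $m \ge C_0 n\rmu^2 (\log n)^2/\sigma_0$, which is exactly \eqref{eq:main1} (with $r = O(1)$). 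Note also that this choice satisfies the side constraint $n \ge c_0 j(k+1)$ from Theorem \ref{teo:moment1} since $j(k+1) = O(\log n)$, whereas your choice requires $n \ge c_0 (\log n)^2$ (still fine, but an unnecessary loss). The point you should internalize is that the ``free'' parameter governing the moment method is the product $j(k+1)$, not $j$ alone --- one wants $j(k+1) \asymp \log n$ across all $k$, which in particular means taking $j = 1$ when $k$ is already of order $\log n$.
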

\begin{proof}
  Set $A = (\QO \QT)^k \QO(E)$ with $k \le \log n$, and set $\sigma
  \le \sigma_0$. By Markov's inequality
\[
\P(\|A\| \ge \sigma^{\frac{k+1}{2}}) \le \frac{\E \|A\|^{2j}}{\sigma^{j(k+1)}}, 
\]
Now choose $j > 0$ to be the smallest integer such that $j(k+1) \ge \log
n$. Since 
\[
\| A\|^{2j} \leq \trace(A^* A)^j,
\]
Theorem \ref{teo:main1} gives
\[ 
\P\bigl(\|A\| \ge \sigma^{\frac{k+1}{2}}\bigr) \le \gamma^{j(k+1)}
\]
for some
\[ \gamma = O\Bigl(\frac{(j(k+1))^2
    n \rmu^2}{a \,m}\Bigr)
\]
 where we have used the fact that $n^{\frac{1}{j(k+1)}} \le
 n^{\frac{1}{\log n}} = e$. Hence, if
\begin{equation}
\label{eq:mint}
m \ge C_0 \frac{n \rmu^2 (\log n)^2}{\sigma}, 
\end{equation}
for some numerical constant $C_0$, we have $\gamma < 1/4$ and 
\[
\P\bigl(\|(\QO \QT)^k \QO(E)\| \ge \sigma^{\frac{k+1}{2}}\bigr) \le n^{-4}.
\]
Therefore,
\[
\bigcup_{0 \le k < \log n} \{(\QO \QT)^k \QO(E)\| \ge
a^{\frac{k+1}{2}}\} 
\]
has probability less or equal to $n^{-4} \log n \le n^{-3}/2$ for $n
\ge 2$. Since the corollary assumes $r = O(1)$, then \eqref{eq:mint}
together with \eqref{eq:earlyseries} and \eqref{eq:tailseries} prove
the claim thanks to our choice of $\sigma$.
\end{proof}

% {\em Remark for Emmanuel.} One cannot reduce the power of the log
% because for $a = 1$ and $k$ equal to zero or one, we will need $m
% \gtrsim n r^2 (\log n)^2$.

Of course, Theorem \ref{teo:main1} follows immediately from Corollary \ref{teo:dual1} and Lemma \ref{teo:dual}.
 In the same way, our second result (Theorem \ref{teo:main2}) follows from a more refined estimate stated below.

\begin{theorem}[Moment bound II]
\label{teo:moment2}  
 Set $A = (\QO \QT)^k \QO(E)$ for a fixed $k \ge 0$. Under
  the assumptions of Theorem \ref{teo:main2}, we have that for each
  $j > 0$ ($\rmu$ is given in \eqref{eq:moment1}),
\begin{equation}
\label{eq:moment2}
\E \bigl[\trace(A^* A)^{j}\bigr] \le 
\Bigl(\frac{(j(k+1))^6 n \rmu}{m}\Bigr)^{j(k+1)}
\end{equation}
provided that $n \ge c_0 j(k+1)$ for some numerical constant $c_0$.
\end{theorem}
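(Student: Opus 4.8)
\textbf{Proof proposal for Theorem \ref{teo:moment2} (Moment bound II).}

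The plan is to expand $\trace((A^*A)^j)$ as a sum over closed combinatorial paths and then exploit the structure of the entries of $Q_\Omega$ and $\QT$. Recall that $A = (\QO \QT)^k \QO(E)$, so $A^*A$ is a product of $2(k+1)$ copies of operators of the form $\QO$ or $\QT$ acting on $E$ (and its adjoints). Writing everything out in matrix coordinates, $\E\,\trace((A^*A)^j)$ becomes a $2j(k+1)$-fold ``layered'' sum over index pairs $(a_\ell, b_\ell)$, weighted by a product of: the (random) diagonal entries $\xi_{ab} = \frac{1}{p}\delta_{ab} - 1$ coming from each $\QO$ factor, the deterministic coefficients $c_{ab,a'b'}$ of $\QT$ from \eqref{cab}, and the entries $E_{ab}$ appearing at the two ends of each of the $j$ strands. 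Since the $\xi_{ab}$ are independent mean-zero Bernoulli-type variables with $\E \xi_{ab}^q = O(1/p)^{q-1}$, taking expectations forces the $\Omega$-indices to coincide in groups of size $\ge 2$, and each such coincidence class of size $q$ contributes a factor $O(1/p)^{q-1} = O(n^2/m)^{q-1}$. First I would set up this expansion carefully and organize the resulting sum according to the partition (pairing pattern) of the $\QO$-slots induced by the coincidences, together with the constraints these impose on the $\QT$-indices and the $E$-indices.

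The heart of the argument is a bookkeeping lemma that, for each admissible combinatorial ``topology'' (i.e.\ each way the indices can be forced to agree), counts how many free index choices remain and how many $\frac{\sqrt r}{n}$-type gains one extracts from the $c$-coefficients and the $E$-entries. The key quantitative inputs are: each $c_{ab,a'b'}$ is bounded by $O(\mu\sqrt r/n)$ when $a=a'$ or $b=b'$ and by $O(\mu^2 r/n^2)$ otherwise (this is \eqref{eq:cbound}), so every $\QT$ factor either ``costs'' a Kronecker delta (killing one free index but only gaining $\mu\sqrt r/n$) or gains the stronger $\mu^2 r/n^2$; and each $E_{ab}$ is bounded by $\mu\sqrt r/n$ by {\bf A2}. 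One then checks that in every topology the product of these gains, times $(n^2/m)^{\#\text{excess $\Omega$-coincidences}}$, times the number of surviving free summation indices (a power of $n$), is at most $(n\rmu/m)^{j(k+1)}$ up to a combinatorial prefactor. The combinatorial prefactor — the number of distinct topologies, which is governed by the number of ways to pair/partition $2j(k+1)$ objects and route the index identifications — is what produces the $(j(k+1))^{6}$ per-factor loss; getting the exponent $6$ (rather than something larger) is the delicate part and will require a reasonably tight enumeration, presumably by encoding each topology via a bounded amount of data (a pairing of the $\QO$-slots, plus for each $\QT$-slot a choice of which of the two ``diagonal'' alternatives is used, plus the identification pattern among the free indices).

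I expect the main obstacle to be exactly this last point: showing that the entropy of the combinatorial sum is only polynomial in $j(k+1)$ with the right small exponent, while simultaneously tracking that no topology can ``cheat'' by simultaneously reusing an index to save a power of $n$ and invoking a weak ($\mu\sqrt r/n$) rather than strong ($\mu^2r/n^2$) bound on a $c$-coefficient. Concretely, one must prove a conservation/balance inequality of the form
\[
(\text{\# free indices}) + (\text{\# strong-$c$ gains}) \cdot 2 + (\text{\# weak-$c$ gains}) + (\text{\# $E$ gains}) - 2(\text{\# excess $\Omega$-coincidences}) \le 2j(k+1) - \big(j(k+1)\big)
\]
organized so that the net is $(n\rmu/m)^{j(k+1)}$; this is the analogue of the ``renormalization'' or ``simple sum'' counting in the rank-one case, but now it must interact correctly with the additional $r$-dependence hidden in $\rmu = \mu^2 r$. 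The remaining steps — summing over all topologies, collecting the prefactor, and invoking the hypothesis $n \ge c_0 j(k+1)$ to ensure the free-index powers of $n$ dominate cleanly — are then routine. The passage from Theorem \ref{teo:moment2} to Theorem \ref{teo:main2} follows the template already used for Corollary \ref{teo:dual1}: Markov's inequality with $j$ chosen so that $j(k+1) \gtrsim \log n$, a union bound over $0 \le k \le \log n$, and the reduction in Section \ref{sec:neuman} and the Centering section.
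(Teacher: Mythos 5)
Your plan reproduces, in essence, the argument that the paper uses for the \emph{quadratic} moment bound (Theorem \ref{teo:moment1} in Section \ref{sec:moment1}), not the linear one stated here. Concretely, you propose to take absolute values inside the expansion \eqref{eq:formula} using \eqref{eq:cbound} on each $c_{ab,a'b'}$ and \eqref{eab} on each $E_{ab}$, and then to control the combinatorial entropy by counting admissible ``topologies''. That is exactly what Section \ref{sec:moment1} does, and the balance there produces $(\rmu^2 n/m)^{j(k+1)}$, i.e.\ $\mu^4 r^2$, not $\mu^2 r$. The worst configurations are those with row or column multiplicity $2$ (in the paper's notation $\tau_\j = 2$ or $\tau^\k = 2$): they contribute a free index ($\times n$) but only weak $\mu\sqrt{r}/n$ gains on each side, and no amount of sharper counting of topologies removes this loss — the bottleneck is analytic, not combinatorial.

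The ingredient your proposal is missing is the \emph{algebraic cancellation} coming from the projection structure. The paper refuses to take absolute values immediately and instead expands each $c_{ab,a'b'}$ via \eqref{cab} into its $U$-, $V$-, and $UV$-parts (recorded by the sets $\L_U,\L_V$ of a ``quadruplet''), then repeatedly uses the identities
\begin{align*}
\sum_{a'}\U_{a,a'}\U_{a',a''} &= (1-2\rho)\U_{a,a''}-\rho(1-\rho)1_{a=a''},\\
\sum_{a'}\U_{a,a'}E_{a',b} &= (1-\rho)E_{a,b},\\
\sum_{b}E_{a,b}E_{a',b} &= \U_{a,a'}+\rho 1_{a=a'}
\end{align*}
(these are \eqref{proj-id}, \eqref{proj-i2d}, \eqref{proj-i3d}, all consequences of $P_U^2=P_U$, $E E^*=P_U$, $E^*E=P_V$). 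Whenever a row or column has multiplicity $2$, the two factors living on it are summed out exactly using one of these identities, collapsing the configuration to a strictly simpler one (smaller $|J|+|K|$, or same $|J|+|K|$ but smaller $|\L_U\cap\L_V|$), and the bound \eqref{xc} is proven by a strong double induction on these two quantities. Only after every surviving row and column has multiplicity $\ge 4$ does the paper switch to absolute values, at which point the double-counting inequality $4(|J|+|K|)\le 2|\Gamma|+2|\L_U\cap\L_V|$ gives the correct linear-in-$r$ exponent. Your ``conservation/balance inequality'' has no mechanism to enforce this; as written it is just a counting statement and would still be defeated by multiplicity-$2$ configurations. So the gap is not in how tightly you enumerate pairings, but in that you never use $\PT^2=\PT$ (equivalently the three cancellation identities) at the level of the coefficients — without it you provably cannot do better than $\rmu^2$.

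Two smaller points. First, the bound on $\E\xi_{ab}^q$ you cite is fine, but note the paper uses the exact inequality $|\E(\frac1p\delta-1)^s|\le p^{1-s}$ and tracks the factor $(1/p)^{2j(k+1)-|\Omega|}$ precisely; you should do the same or you will lose extra polylog factors. Second, your closing remark about passing from Theorem \ref{teo:moment2} to Theorem \ref{teo:main2} (Markov with $j(k+1)\gtrsim\log n$, union over $k\le\log n$, Neumann tail, $\QT$-vs-$\PT$ replacement) is indeed the paper's route and is correct.
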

Just as before, this theorem immediately implies the following
corollary.
\begin{corollary}[Existence of dual certificate II]
  \label{teo:dual2}
  Under the assumptions of Theorem \ref{teo:main2}, the matrix $Y$
  \eqref{eq:ansatz} is a dual certificate, and obeys $\|\PTp(Y)\| \le
  1/2$ with probability at least $1 - n^{-3}$ provided that $m$ obeys
  \eqref{eq:main2}.
\end{corollary}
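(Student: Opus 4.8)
The plan is to reproduce, essentially verbatim in structure, the proof of Corollary~\ref{teo:dual1}, substituting the sharper estimate of Theorem~\ref{teo:moment2} (Moment bound~II) for Theorem~\ref{teo:moment1} and carrying the improvement through Markov's inequality and a union bound. The one structural difference worth flagging in advance is that Moment bound~II replaces the quantity $n\,(n\rmu^2/m)^{j(k+1)}$ from Moment bound~I by $\bigl((j(k+1))^6\,n\rmu/m\bigr)^{j(k+1)}$ with $\rmu=\mu^2 r$: the rank/coherence factor now enters linearly rather than quadratically, and the stray factor of $n$ has disappeared, at the modest price of a polylogarithmic factor $(j(k+1))^6$. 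Propagated through the argument below, this is exactly what turns the threshold $m\gtrsim\mu^4 nr^2(\log n)^2$ of \eqref{eq:quadratic} into the near-optimal threshold $m\gtrsim\mu^2 nr\log^6 n$ of \eqref{eq:main2}.

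In detail: fix $k$ with $0\le k\le\lfloor\log n\rfloor$, set $A=(\QO\QT)^k\QO(E)$, and take $\sigma=\sigma_0$, the absolute constant fixed just after \eqref{eq:earlyseries} so that $5\sqrt{\sigma_0}/(1-4\sqrt{\sigma_0})<1/4$. Let $j$ be the least positive integer with $j(k+1)\ge\log n$; since $k+1\le\lfloor\log n\rfloor+1$ one has $\log n\le j(k+1)=O(\log n)$, so $(j(k+1))^6=O((\log n)^6)$ and the hypothesis $n\ge c_0 j(k+1)$ of Theorem~\ref{teo:moment2} holds for $n$ large. Using $\|A\|^{2j}\le\trace(A^*A)^{j}$, Markov's inequality, Theorem~\ref{teo:moment2}, and $\rmu=\mu^2 r$,
\[
\P\bigl(\|A\|\ge\sigma_0^{(k+1)/2}\bigr)\;\le\;\sigma_0^{-j(k+1)}\,\E\bigl[\trace(A^*A)^{j}\bigr]\;\le\;\Bigl(\frac{C(\log n)^6\,n\mu^2 r}{\sigma_0\,m}\Bigr)^{j(k+1)}\;=:\;\gamma^{j(k+1)} .
\]
Taking the constant $C_0$ in the hypothesis $m\ge C_0\,\mu^2 nr\log^6 n$ of Theorem~\ref{teo:main2} large enough forces $\gamma\le e^{-4}$, whence $\gamma^{j(k+1)}\le\gamma^{\log n}\le n^{-4}$. (Note that, in contrast to Moment bound~I, there is no leftover factor of $n$ here to be absorbed via $n^{1/(j(k+1))}\le e$; that factor has already been saved inside Theorem~\ref{teo:moment2}.)

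A union bound over the at most $\lfloor\log n\rfloor+1$ values of $k$ then shows that the event \eqref{eq:QTbound} (with $\sigma=\sigma_0$) holds for all $0\le k<k_0:=\lfloor\log n\rfloor+1$ with probability at least $1-(\lfloor\log n\rfloor+1)n^{-4}\ge 1-n^{-3}/4$. On this event Lemma~\ref{teo:equiv} applies---its hypothesis $8nr/m<\sigma_0^{3/2}$ being guaranteed by \eqref{eq:main2}---so \eqref{eq:PTbound} and hence \eqref{eq:earlyseries} hold, giving $\sum_{k=0}^{\lfloor\log n\rfloor}\|(\QO\PT)^k\QO(E)\|<1/4$ by the choice of $\sigma_0$. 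Combining this with the tail bound \eqref{eq:tailseries} (legitimate with probability at least $1-n^{-4}$ since, as explained at the end of Section~\ref{sec:duality}, \eqref{eq:main2} implies $m\ge c_0\mu_0 nr\log n$) via the reduction \eqref{eq:kth}, and recalling $\|\PTp(Y)\|\le\sum_{k\ge0}\|(\QO\PT)^k\QO(E)\|$ from \eqref{ptp-contract}, we obtain $\|\PTp(Y)\|<1$, in fact $\le 1/2$, with probability at least $1-n^{-3}$. Since \eqref{eq:ansatz} builds in $\PO(Y)=Y$ and $\PT(Y)=E$, the matrix $Y$ satisfies conditions 1(a)--1(c) of Lemma~\ref{teo:dual}; Theorem~\ref{teo:main2} then follows from Lemma~\ref{teo:dual} in exactly the way Theorem~\ref{teo:main1} followed from Corollary~\ref{teo:dual1}.

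There is no genuine obstacle in this corollary: conditional on Moment bound~II it is a routine Markov-inequality-plus-union-bound deduction, and the entire difficulty is concentrated in the (deferred) proof of Theorem~\ref{teo:moment2}. The only point needing any care here is the bookkeeping of logarithmic losses---one must charge the factor $(j(k+1))^6=O((\log n)^6)$ against precisely the $\log^6 n$ budget in \eqref{eq:main2}, while the remaining logarithmic factors (from the choice of $j$, from pushing the union-bound failure probability down to $n^{-4}$, and from \eqref{eq:tailseries}) are absorbed into absolute constants---so that the exponent $6$ in \eqref{eq:main2} is not exceeded.
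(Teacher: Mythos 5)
Your argument is correct and follows precisely the route the paper intends---indeed, the paper explicitly states that ``the proof is identical to that of Corollary~\ref{teo:dual1} and is omitted,'' and your write-up is exactly that proof with Theorem~\ref{teo:moment2} substituted for Theorem~\ref{teo:moment1} and the logarithmic bookkeeping carried through. One small remark: you note that, unlike in Corollary~\ref{teo:dual1}, no stray factor of $n$ needs to be absorbed via $n^{1/(j(k+1))}\le e$; this is faithful to the statement of Theorem~\ref{teo:moment2} as printed, but the paper's proof of Proposition~\ref{main} actually ends with a bound of the form $O(j(k+1))^{6j(k+1)}(r/np)^{j(k+1)}\,n$, so in practice one would still invoke $j(k+1)\ge\log n$ to absorb the extra $n$ exactly as in the first corollary---which you are already doing implicitly by choosing $j$ that way, so nothing in your argument breaks.
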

The proof is identical to that of Corollary \ref{teo:dual1} and is
omitted.  Again, Corollary \ref{teo:dual2} and Lemma \ref{teo:dual} immediately imply Theorem \ref{teo:main2}.

We have learned that verifying that $Y$ is a valid dual certificate
reduces to \eqref{eq:moment1} and \eqref{eq:moment2}, and we conclude
this section by giving a road map to the proofs. In Section
\ref{sec:moments}, we will develop a formula for $\E \trace(A^*
A)^{j}$, which is our starting point for bounding this quantity. Then
Section \ref{sec:moment1} develops the first and perhaps easier bound
\eqref{eq:moment1} while Section \ref{sec:moment2} refines the
argument by exploiting clever cancellations, and establishes the nearly
optimal bound \eqref{eq:moment2}.

\subsection{Novelty}
\label{sec:novelty}

As explained earlier, this paper derives near-optimal sampling results
which are stronger than those in \cite{CR}. One of the reasons
underlying this improvement is that we use completely different
techniques. In details, \cite{CR} constructs the dual certificate
\eqref{eq:ansatz} and proceeds by showing that $\|\PTp(Y)\| < 1$ by
bounding each term in the series $\sum_{k \ge 0} \|(\QO \PT)^k \QO
(E)\| < 1$. Further, to prove that the early terms (small values of
$k$) are appropriately small, the authors employ a sophisticated array
of tools from asymptotic geometric analysis, including noncommutative
Khintchine inequalities \cite{LustPicquard86}, decoupling techniques
of Bourgain and Tzafiri and of de la Pe\~na \cite{delaPena2}, and
large deviations inequalities \cite{ledoux01co}.  They bound each term
individually up to $k = 4$ and use the same argument as that in
Section \ref{sec:neuman} to bound the rest of the series. Since the
tail starts at $k_0 = 5$, this gives that a sufficient condition is
that the number of samples exceeds a constant times $\mu_0 n^{6/5} nr
\log n$. Bounding each term $\|(\QO \PT)^k \QO (E)\|^k$ with the tools
put forth in \cite{CR} for larger values of $k$ becomes increasingly
delicate because of the coupling between the indicator variables
defining the random set $\Omega$. In addition, the noncommutative
Khintchine inequality seems less effective in higher dimensions; that
is, for large values of $k$.  Informally speaking, the reason for this
seems to be that the types of random sums that appear in the moments
$(\QO \PT)^k \QO (E)$ for large $k$ involve complicated combinations
of the coefficients of $\PT$ that are not simply components of some
product matrix, and which do not simplify substantially after a direct
application of the Khintchine inequality.

In this paper, we use a very different strategy to estimate the
spectral norm of $(\QO \QT)^k \QO (E)$, and employ moment methods,
which have a long history in random matrix theory, dating back at least to the classical work of Wigner\cite{Wigner}. We raise the matrix
$A := (\QO \QT)^k \QO (E)$ to a large power $j$ so that
\[
\sigma_1^{2j}(A) = \|A\|^{2j} \approx \trace (A^*A)^j =
\sum_{i \in [n]} \sigma_i^{2j}(A)
\]
(the largest element dominates the sum). We then need to compute the
expectation of the right-hand side, and reduce matters to a purely
combinatorial question involving the statistics of various types of
paths in a plane. It is rather remarkable that carrying out these
combinatorial calculations nearly give the quantitatively correct
answer; the moment method seems to come close to giving the ultimate limit of performance one can expect from
nuclear-norm minimization.

As we shall shortly see, the expression $\trace (A^* A)^j$ expands as a sum over ``paths'' of products of various coefficients of the operators $\QO, \QT$ and the matrix $E$.  These paths can be viewed as complicated variants of Dyck paths.  However, it does not seem that one can simply invoke standard moment method calculations in the literature to compute this sum, as in order to obtain efficient bounds, we will need to take full advantage of identities such as $\PT \PT = \PT$ (which capture certain cancellation properties of the coefficients of $\PT$ or $\QT$) to simplify various components of this sum.  It is only after performing such simplifications that one can afford to estimate all the coefficients by absolute values and count paths to conclude the argument.

\section{Moments}
\label{sec:moments}

Let $j \geq 0$ be a fixed integer.
The goal of this section is to develop a formula for
\begin{equation}
  \label{eq:moment}
   X := \E \tr(A^* A)^j.
\end{equation}
This will clearly be of use in the proofs of the moment bounds (Theorems \ref{teo:moment1}, \ref{teo:moment2}).

\subsection{First step: expansion}

We first write the matrix $A$ in components as
\[
A = \sum_{a,b \in [n]} A_{ab} e_{ab}
\]
for some scalars $A_{ab}$, where $e_{ab}$ is the standard basis for
the $n \times n$ matrices and $A_{ab}$ is the $(a,b)^{\operatorname{th}}$ entry of
$A$. Then
\[
\tr (A^* A)^j = \sum_{\substack{a_1,\ldots,a_j \in [n]\\
b_1, \ldots,b_j \in [n]}} \prod_{i \in [j]} A_{a_i b_i} A_{a_{i+1} b_i},
\]
where we adopt the cyclic convention $a_{j+1} = a_1$.  Equivalently, we can write
\begin{equation}\label{sumpath}
\tr (A^* A)^j = \sum \prod_{i \in [j]} \prod_{\mu=0}^1 A_{a_{i,\mu} b_{i,\mu}},
\end{equation}
where the sum is over all $a_{i,\mu}, b_{i,\mu} \in [n]$ for $i \in [j], \mu \in \{0,1\}$ obeying the compatibility conditions
$$ a_{i,1} = a_{i+1,0}; \quad b_{i,1} = b_{i,0} \hbox{ for all } i \in [j]$$
with the cyclic convention $a_{j+1,0}=a_{1,0}$. 

\emph{Example.}  If $j=2$, then we can write $\tr( A^* A )^j$ as
$$ \sum_{a_1,a_2,b_1,b_2 \in [n]} A_{a_1 b_1} A_{a_2 b_1} A_{a_2 b_2} A_{a_2 b_1}.$$
or equivalently as
$$ \sum \prod_{i=1}^2 \prod_{\mu=0}^1 A_{a_{i,\mu},b_{i,\mu}}$$
where the sum is over all $a_{1,0}, a_{1,1}, a_{2,0}, a_{2,1}, b_{1,0}, b_{1,1}, b_{2,0}, b_{2,1} \in [n]$ obeying the compatibility conditions
$$ a_{1,1} = a_{2,0}; \quad a_{2,1} = a_{1,0}; \quad b_{1,1} = b_{1,0}; \quad b_{2,1} = b_{2,0}.$$

\emph{Remark.}  The sum in \eqref{sumpath} can be viewed as over all
closed paths of length $2j$ in $[n] \times [n]$, where the edges of
the paths alternate between ``horizontal rook moves'' and ``vertical
rook moves'' respectively; see Figure \ref{fig:fig1}.
\begin{figure}
 \centering
\includegraphics[scale=.5]{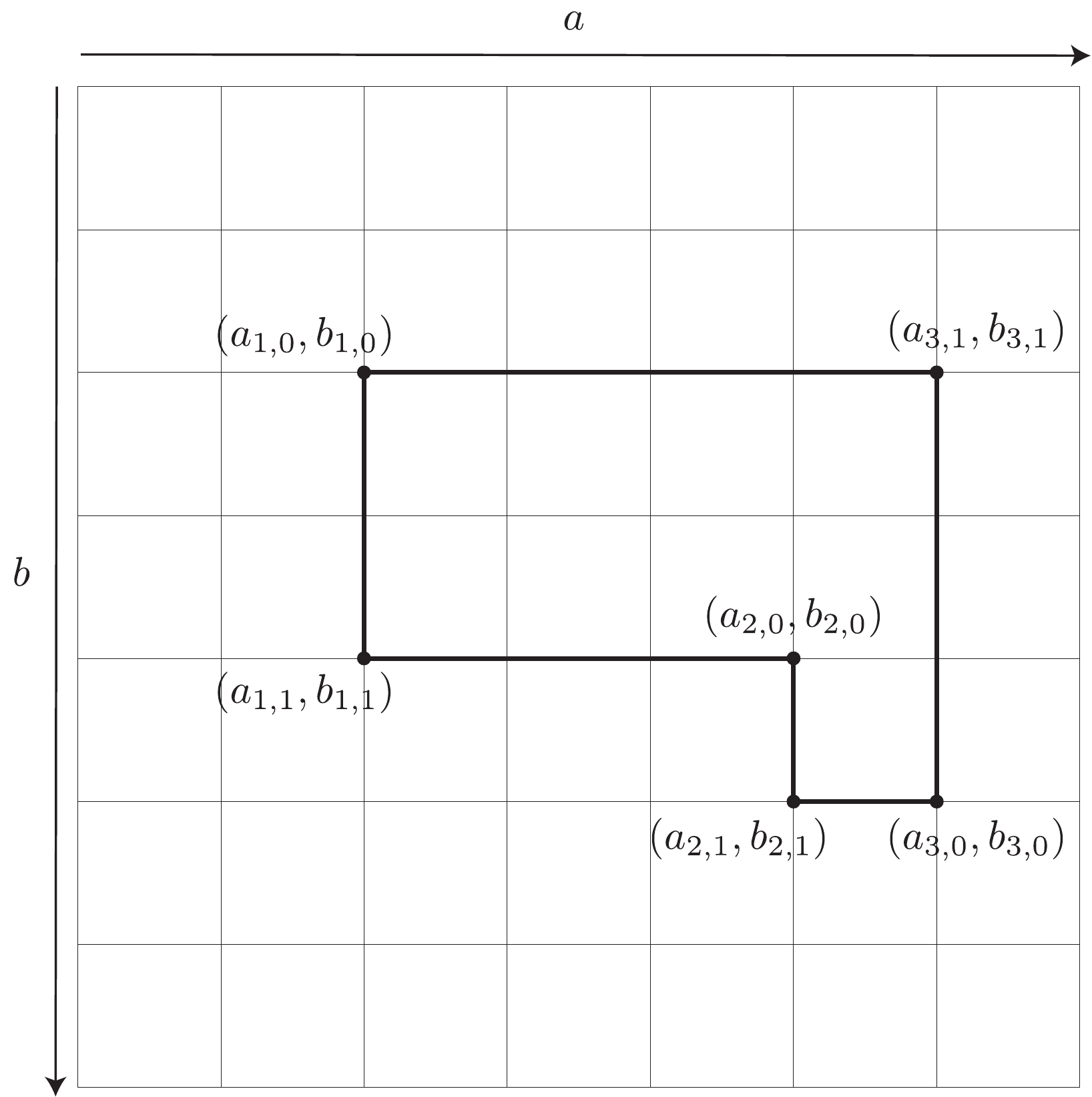}
\caption{\small A typical path in $[n] \times [n]$ that appears in the expansion of $\tr(A^* A)^j$, here with $j=3$.}
\label{fig:fig1}
\end{figure}

Second, write $\QT$ and $\QO$ in coefficients as
\[
\QT (e_{a'b'}) = \sum_{ab} c_{ab, a'b'} e_{ab}
\]
where $c_{ab,a'b'}$ is given by \eqref{cab}, and 
\[
\QO(e_{a'b'}) = \xi_{a'b'} e_{a'b'}, 
\]
where $\xi_{ab}$ are the iid, zero-expectation random variables
\[
\xi_{ab} := \frac{1}{p} 1_{(a,b) \in \Omega} - 1.
\]
With this, we have
\begin{equation}\label{aob}
A_{a_0,b_0} := \sum_{a_1,b_1,\ldots,a_k,b_k \in [n]} \bigl(\prod_{l \in
  [k]} c_{a_{l-1} b_{l-1}, a_l b_l}\bigr) \bigl(\prod_{l=0}^k \xi_{a_l b_l}\bigr)
E_{a_k b_k}
\end{equation}
for any $a_0,b_0 \in [n]$.  Note that this formula is even valid in the base case $k=0$, where it simplifies to just $A_{a_0 b_0} = \xi_{a_0 b_0} E_{a_0 b_0}$ due to our conventions on trivial sums and empty products.

\emph{Example.}  
If $k=2$, then
$$ A_{a_0,b_0} = \sum_{a_1,a_2,b_1,b_2 \in [n]} \xi_{a_0 b_0} c_{a_0 b_0,a_1,b_1} \xi_{a_1 b_1} c_{a_1 b_1,a_2 b_2} \xi_{a_2 b_2} E_{a_2 b_2}.$$

\emph{Remark.}  One can view the right-hand side of \eqref{aob} as the
sum over paths of length $k+1$ in $[n] \times [n]$ starting at the
designated point $(a_0,b_0)$ and ending at some arbitrary point
$(a_k,b_k)$.  Each edge (from $(a_i,b_i)$ to $(a_{i+1},b_{i+1})$) may
be a horizontal or vertical ``rook move'' (in that at least one of the
$a$ or $b$ coordinates does not change\footnote{Unlike the ordinary
  rules of chess, we will consider the trivial move when $a_{i+1}=a_i$
  \emph{and} $b_{i+1}=b_i$ to also qualify as a ``rook move'', which
  is simultaneously a horizontal and a vertical rook move.}), or a
``non-rook move'' in which both the $a$ and $b$ coordinates change.
It will be important later on to keep track of which edges are rook
moves and which ones are not, basically because of the presence of the
delta functions $1_{a=a'}, 1_{b=b'}$ in \eqref{cab}. Each edge in this
path is weighted by a $c$ factor, and each vertex in the path is
weighted by a $\xi$ factor, with the final vertex also weighted by an
additional $E$ factor.  It is important to note that the path is
allowed to cross itself, in which case weights such as $\xi^2$,
$\xi^3$, etc.~may appear, see Figure \ref{fig:fig2}.
\begin{figure}
 \centering
\includegraphics[scale=.75]{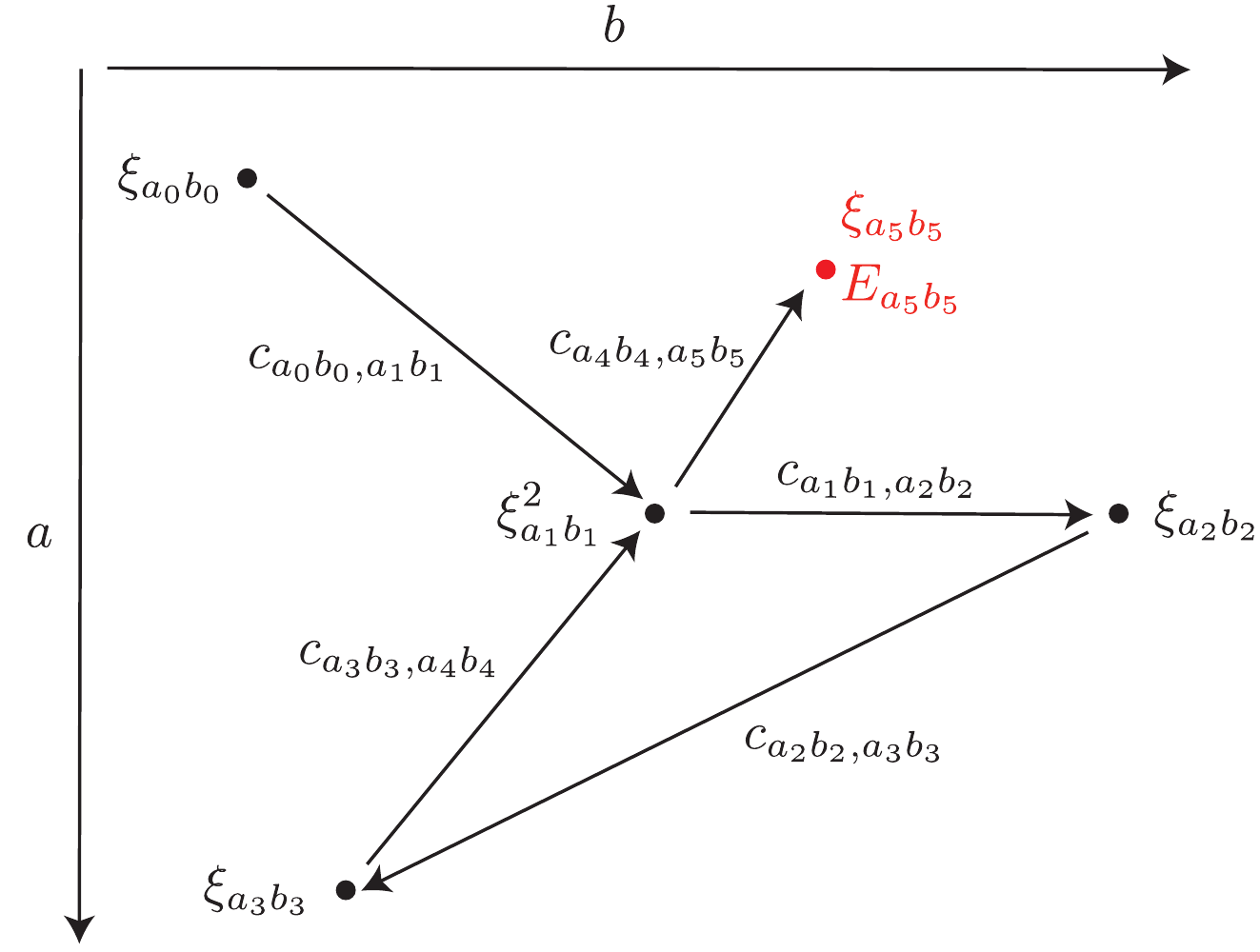}
\caption{\small A typical path appearing in the expansion \eqref{aob} of $A_{a_0 b_0}$, here with $k=5$.  Each vertex of the path gives rise to a $\xi$ factor (with the final vertex, coloured in red, providing an additional $E$ factor), while each edge of the path provides a $c$ factor.  Note that the path is certainly allowed to cross itself (leading to the $\xi$ factors being raised to powers greater than $1$, as is for instance the case here at $(a_1,b_1)=(a_4,b_4)$), and that the edges of the path may be horizontal, vertical, or neither.}
\label{fig:fig2}
\end{figure}
  
Inserting \eqref{aob} into \eqref{sumpath}, we see that $X$ can thus
be expanded as
\begin{equation}\label{X-expand}
\E \sum_* \prod_{i \in [j]} \prod_{\mu=0}^1 \Bigl[ \bigl(\prod_{l \in
  [k]} c_{a_{i,\mu,l-1} b_{i,\mu,l-1}, a_{i,\mu,l} b_{i,\mu,l}}\bigr)
\bigl(\prod_{l=0}^k \xi_{a_{i,\mu,l} b_{i,\mu,l}}\bigr) E_{a_{i,\mu,k}
  b_{i,\mu,k}} \Bigr],
\end{equation}
where the sum $\sum_*$ is over all combinations of $a_{i,\mu,l},
b_{i,\mu,l} \in [n]$ for $i \in [j]$, $\mu \in \{0,1\}$ and $0 \leq l
\leq k$ obeying the compatibility conditions
\begin{equation}\label{compat}
a_{i,1,0} = a_{i+1,0,0}; \quad b_{i,1,0} = b_{i,0,0} \hbox{ for all } i \in [j]
\end{equation}
with the cyclic convention $a_{j+1,0,0}=a_{1,0,0}$. 

\emph{Example.} Continuing our running example $j=k=2$, we have
$$
X = \E \sum_* \prod_{i=1}^2 \prod_{\mu=0}^1 \xi_{a_{i,\mu,0} b_{i,\mu,0}} c_{a_{i,\mu,0} b_{i,\mu,0}, a_{i,\mu,1} b_{i,\mu,1}} \xi_{a_{i,\mu,1} b_{i,\mu,1}} c_{a_{i,\mu,1} b_{i,\mu,1}, a_{i,\mu,2} b_{i,\mu,2}} \xi_{a_{i,\mu,2} b_{i,\mu,2}} E_{a_{i,\mu,2} b_{i,\mu,2}}
$$ 
where $a_{i,\mu,l}$ for $i=1,2$, $\mu=0,1$, $l=0,1,2$ obey the compatibility conditions
$$ a_{1,1,0} = a_{2,0,0}; \quad a_{2,1,0} = a_{1,0,0}; \quad b_{1,1,0} = b_{1,0,0}; \quad b_{2,1,0} = b_{2,0,0}.$$
Note that despite the small values of $j$ and $k$, this is already a rather complicated sum, ranging over $n^{2j (2k+1)} = n^{20}$ summands, each of which is the product of $4j(k+1)=24$ terms.

\emph{Remark.} The expansion \eqref{X-expand} is the sum over a sort
of combinatorial ``spider'', whose ``body'' is a closed path of length
$2j$ in $[n] \times [n]$ of alternating horizontal and vertical rook
moves, and whose $2j$ ``legs'' are paths of length $k$, emanating out
of each vertex of the body.  The various ``segments'' of the legs
(which can be either rook or non-rook moves) acquire a weight of $c$,
and the ``joints'' of the legs acquire a weight of $\xi$, with an
additional weight of $E$ at the tip of each leg.  To complicate things
further, it is certainly possible for a vertex of one leg to overlap
with another vertex from either the same leg or a different leg,
introducing weights such as $\xi^2$, $\xi^3$, etc.; see Figure
\ref{fig:fig3}. 
\begin{figure}
 \centering
\includegraphics[scale=.6]{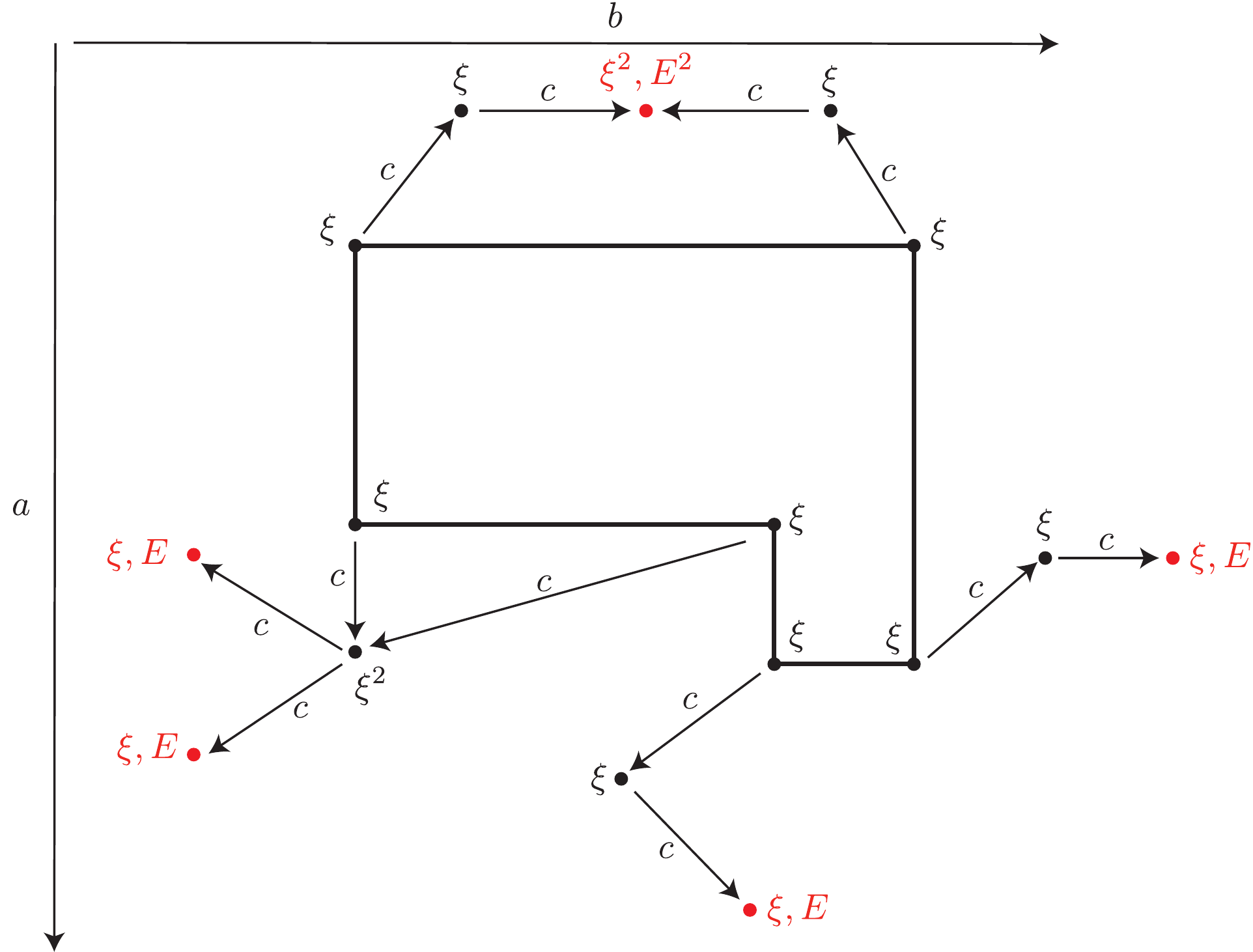}
\caption{\small A ``spider'' with $j=3$ and $k=2$, with the ``body'' in boldface lines and the ``legs'' as directed paths from the body to the tips (marked in red).}
\label{fig:fig3}
\end{figure}
 As one can see, the set of possible configurations
that this ``spider'' can be in is rather large and complicated.

\subsection{Second step: collecting rows and columns}\label{secondsec}

We now group the terms in the expansion \eqref{X-expand} into a bounded number of components, depending on how the various horizontal coordinates $a_{i,\mu,l}$ and vertical coordinates $b_{i,\mu,l}$ overlap.

It is convenient to order the $2j(k+1)$ tuples $(i,\mu,l) \in [j] \times \{0,1\} \times
\{0,\ldots,k\}$ lexicographically by declaring $(i,\mu,l) <
(i',\mu',l')$ if $i < i'$, or if $i=i'$ and $\mu < \mu'$, or if $i=i'$
and $\mu=\mu'$ and $l < l'$.

We then define the indices $s_{i,\mu,l}, t_{i,\mu,l} \in
\{1,2,3,\ldots\}$ recursively for all $(i,\mu,l) \in [j] \times
\{0,1\} \times [k]$ by setting $s_{1,0,0} = 1$ and declaring
$s_{i,\mu,l} := s_{i',\mu',l'}$ if there exists $(i',\mu',l') <
(i,\mu,l)$ with $a_{i',\mu',l'} = a_{i,\mu,l}$, or equal to the first
positive integer not equal to any of the $s_{i',\mu',l'}$ for
$(i',\mu',l') < (i,\mu,l)$ otherwise.  Define $t_{i,\mu,l}$ using
$b_{i,\mu,l}$ similarly.  We observe the \emph{cyclic
condition}
\begin{equation}\label{compat2}
s_{i,1,0} = s_{i+1,0,0}; \quad t_{i,1,0} = t_{i,0,0} \hbox{ for all } i \in [j]
\end{equation}
with the cyclic convention $s_{j+1,0,0} = s_{1,0,0}$.

\emph{Example.}  Suppose that $j=2$, $k=1$, and $n \geq 30$, with the $(a_{i,\mu,l},b_{i,\mu,l})$ given in lexicographical ordering as
\begin{align*}
(a_{0,0,0},b_{0,0,0}) &= (17, 30) \\
(a_{0,0,1},b_{0,0,1}) &= (13, 27) \\
(a_{0,1,0},b_{0,1,0}) &= (28, 30) \\
(a_{0,1,1},b_{0,1,1}) &= (13, 25) \\
(a_{1,0,0},b_{1,0,0}) &= (28, 11) \\
(a_{1,0,1},b_{1,0,1}) &= (17, 27) \\
(a_{1,1,0},b_{1,1,0}) &= (17, 11) \\
(a_{1,1,1},b_{1,1,1}) &= (13, 27) 
\end{align*}
Then we would have
\begin{align*}
(s_{0,0,0},t_{0,0,0}) &= (1, 1) \\
(s_{0,0,1},t_{0,0,1}) &= (2, 2) \\
(s_{0,1,0},t_{0,1,0}) &= (3, 1) \\
(s_{0,1,1},t_{0,1,1}) &= (2, 3) \\
(s_{1,0,0},t_{1,0,0}) &= (3, 4) \\
(s_{1,0,1},t_{1,0,1}) &= (1, 2) \\
(s_{1,1,0},t_{1,1,0}) &= (1, 4) \\
(s_{1,1,1},t_{1,1,1}) &= (2, 2). 
\end{align*}
Observe that the conditions \eqref{compat} hold for this example, which then forces \eqref{compat2} to hold also.

In addition to the property \eqref{compat2}, we see from construction of $(s,t)$ that for any $(i,\mu,l) \in [j] \times \{0,1\} \times \{0,\ldots,k\}$, the sets 
\begin{equation}\label{segment}
\{ s(i',\mu',l'): (i',\mu',l') \leq (i,\mu,l) \}, \{ t(i',\mu',l'): (i',\mu',l') \leq (i,\mu,l) \}
\end{equation}
are initial segments, i.e. of the form $[m]$ for some integer $m$. Let us call pairs $(s,t)$ of sequences with this property, as well as the property \eqref{compat2}, \emph{admissible}; thus for instance the sequences in the above example are admissible.  Given an admissible pair $(s,t)$, if we define the sets $J$, $K$ by
\begin{equation}\label{J-bang}
\begin{array}{lll}
J & := & \{ s_{i,\mu,l}: (i,\mu,l) \in [j] \times \{0,1\} \times \{0,\ldots,k\} \}\\
K & := & \{ t_{i,\mu,l}: (i,\mu,l) \in [j] \times \{0,1\} \times \{0,\ldots,k\}\}
\end{array}
\end{equation}
then we observe that $J = [|J|], K = [|K|]$.
Also, if $(s,t)$ arose from $a_{i,\mu,l}, b_{i,\mu,l}$ in the above
manner, there exist unique injections $\alpha: J \to [n], \beta: K \to
[n]$ such that $a_{i,\mu,l} = \alpha(s_{i,\mu,l})$ and $b_{i,\mu,l} =
\beta(t_{i,\mu,l})$.  

\emph{Example.}  Continuing the previous example, we have $J = [3]$, $K = [4]$, with the injections $\alpha: [3] \to [n]$ and $\beta: [4] \to [n]$ defined by 
$$\alpha(1) := 17; \alpha(2) := 13; \alpha(3) := 28$$
and
$$\beta(1) := 30; \beta(2) := 27; \beta(3) := 25; \beta(4):= 11.$$

Conversely, any admissible pair $(s,t)$ and
injections $\alpha,\beta$ determine $a_{i,\mu,l}$ and $b_{i,\mu,l}$.
Because of this, we can thus expand $X$ as
\begin{multline*}
  X =  \sum_{(s,t)} \E \sum_{\alpha,\beta} \prod_{i \in [j]} \prod_{\mu=0}^1 \Bigl[ \bigl(\prod_{l \in [k]} c_{\alpha(s_{i,\mu,l-1}) \beta(t_{i,\mu,l-1}), \alpha(s_{i,\mu,l}) \beta(t_{i,\mu,l})}\bigr) \\
  \bigl(\prod_{l=0}^L \xi_{\alpha(s_{i,\mu,l})
    \beta(t_{i,\mu,l})}\bigr) E_{\alpha(s_{i,\mu,k})
    \beta(t_{i,\mu,k})} \Bigr],
\end{multline*}
where the outer sum is over all admissible pairs $(s,t)$, and the
inner sum is over all injections. 

\emph{Remark.} As with preceding identities, the above formula is
also valid when $k = 0$ (with our conventions on trivial sums and empty products), in which case it simplifies to
\[
X = \sum_{(s,t)} \E \sum_{\alpha,\beta} \prod_{i \in [j]}
\prod_{\mu=0}^1 \xi_{\alpha(s_{i,\mu,0}) \beta(t_{i,\mu,0})}
E_{\alpha(s_{i,\mu,0}) \beta(t_{i,\mu,0})}.
\]

\emph{Remark.}  One can think of $(s,t)$ as describing the combinatorial ``configuration'' of the ``spider'' $( (a_{i,\mu,l}, b_{i,\mu,l}) )_{(i,\mu,l) \in [j] \times \{0,1\} \times \{0,\ldots,k\}}$ - it determines which vertices of the spider are equal to, or on the same row or column as, other vertices of the spider.  The injections $\alpha, \beta$ then enumerate the ways in which such a configuration can be ``represented'' inside the grid $[n] \times [n]$.

\subsection{Third step: computing the expectation}

The expansion we have for $X$ looks quite complicated.  However, the fact that the $\xi_{ab}$ are independent and have mean zero allows us to simplify this expansion to a significant degree.  Indeed, observe that the random variable $\Xi := \prod_{i \in [j]}
\prod_{\mu=0}^1 \prod_{l=0}^L \xi_{\alpha(s_{i,\mu,l})
  \beta(t_{i,\mu,l})}$ has zero expectation if there is any pair in $J
\times K$ which can be expressed exactly once in the form
$(s_{i,\mu,l},t_{i,\mu,l})$.  Thus we may assume that no pair can be
expressed exactly once in this manner.  If $\delta$ is a Bernoulli
variable with $\P(\delta = 1) = p = 1 - \P(\delta = 0)$, then for each
$s \ge 0$, one easily computes
\[
\E (\delta - p)^{s} = p(1-p)\left[(1-p)^{s-1} + (-1)^s p^{s-1}\right] 
\]
and hence 
\[
|\E (\frac{1}{p} \delta - 1)^{s}| \leq p^{1-s}.
\]
The value of the expectation of $\E \Xi$ does not depend on the choice
of $\alpha$ or $\beta$, and the calculation above shows that $\Xi$
obeys
\[
| \E \Xi| \le \frac{1}{p^{2j(k+1) - |\Omega|}},
\]
where
\begin{equation}\label{omega-def}
  \Omega := \{ (s_{i,\mu,l}, t_{i,\mu,l}): (i,\mu,l) \in [j] \times \{0,1\} \times \{0,\ldots,k\} \} \subset J \times K.
\end{equation}
Applying this estimate and the triangle inequality, we can thus bound $X$ by
\begin{multline}
\label{eq:formula}
  X \le \sum_{(s,t) \text{ strongly admissible}} (1/p)^{2j(k+1) - |\Omega|} \\
  \biggl|\sum_{\alpha,\beta} \prod_{i \in [j]} \prod_{\mu=0}^1 \Bigl[ \bigl(\prod_{l
    \in [k]} c_{\alpha(s_{i,\mu,l-1}) \beta(t_{i,\mu,l-1}),
    \alpha(s_{i,\mu,l}) \beta(t_{i,\mu,l})}\bigr) E_{\alpha(s_{i,\mu,k})
    \beta(t_{i,\mu,k})} \Bigr]\biggr|, 
\end{multline}
where the sum is over those admissible $(s,t)$ such that each element of $\Omega$
is visited at least twice by the sequence $(s_{i,\mu,l},
t_{i,\mu,l})$; we shall call such $(s,t)$ \emph{strongly
  admissible}. We will use the bound \eqref{eq:formula} as a starting point for
proving the moment estimates \eqref{eq:moment1} and
\eqref{eq:moment2}. 

\emph{Example.}  The pair $(s,t)$ in the Example in Section
\ref{secondsec} is admissible but not strongly admissible, because not
every element of the set $\Omega$ (which, in this example, is $\{
(1,1), (2,2), (3,1), $ $(2,3), (3,4),$ $(1,2), (1,4) \}$) is visited
twice by the $(s,t)$.

\emph{Remark.}  Once again, the formula \eqref{eq:formula} is valid
when $k=0$, with the usual conventions on empty products (in
particular, the factor involving the $c$ coefficients can be deleted
in this case).

\section{Quadratic bound in the rank}
\label{sec:moment1}

This section establishes \eqref{eq:moment1} under the assumptions of
Theorem \ref{teo:main1}, which is the easier of the two moment
estimates.  Here we shall just take the absolute values in
\eqref{eq:formula} inside the summation and use the estimates on the
coefficients given to us by hypothesis.  Indeed, starting with
\eqref{eq:formula} and the triangle inequality and applying
\eqref{eab} together with \eqref{cab} gives
\[
X \le O(1)^{j(k+1)} \sum_{(s,t) \text{ strongly admissible}}
(1/p)^{2j(k+1) - |\Omega|} \sum_{\alpha,\beta} (\sqrt{\rmu}/n)^{2jk +
  |Q| + 2j},
\]
where we recall that $\rmu = \mu^2 r$, and $Q$ is the set of all
$(i,\mu,l) \in [j] \times \{0,1\} \times [k]$ such that $s_{i,\mu,l-1}
\neq s_{i,\mu,l}$ and $t_{i,\mu,l-1} \neq t_{i,\mu,l}$. Thinking of
the sequence $\{(s_{i,\mu,l}, t_{i,\mu,l})\}$ as a path in $J \times
K$, we have that $(i,\mu,l) \in Q$ if and only if the move from
$(s_{i,\mu,l-1},t_{i,\mu,l-1})$ to $(s_{i,\mu,l},t_{i,\mu,l})$ is
neither horizontal nor vertical; per our earlier discussion, this is a
``non-rook'' move.

\emph{Example.}  The example in Section \ref{secondsec} is admissible, but not strongly admissible.  Nevertheless, the above definitions can still be applied, and we see that $Q = \{ (0,0,1), (0,1,1), (1,0,1), (1,1,1) \}$ in this case, because all of the four associated moves are non-rook moves.

As the number of injections $\alpha,
\beta$ is at most $n^{|J|}, n^{|K|}$ respectively, we thus have
\[
X \leq O(1)^{j(k+1)} \sum_{(s,t) \text{ str.~admiss.}} (1/p)^{2j(k+1)
  - |\Omega|} n^{|J|+|K|} (\sqrt{\rmu}/n)^{2jk + |Q| + 2j},
\]
which we rearrange slightly as
\[
X \leq O(1)^{j(k+1)} \sum_{(s,t) \text{ str.~admiss.}}
\Bigl(\frac{\rmu^2}{np}\Bigr)^{2j(k+1) - |\Omega|} \rmu^{\frac{|Q|}{2} + 2|\Omega| -
  3j(k+1)} n^{|J|+|K|-|Q|-|\Omega|}.
\]
Since $(s,t)$ is strongly admissible and every point in $\Omega$ needs to be
visited at least twice, we see that
\[ 
|\Omega| \leq j(k+1).
\]
Also, since $Q \subset [j] \times \{0,1\} \times [k]$, we have the
trivial bound
\begin{equation*}\label{triv}
  |Q| \leq  2jk.
\end{equation*}
This ensures that 
$$ \frac{|Q|}{2} + 2|\Omega| - 3j(k+1) \leq 0$$
and
$$ 2j(k+1) - |\Omega| \geq j(k+1).$$
From the hypotheses of Theorem \ref{teo:main1} we have $np \geq \rmu^2$, and thus 
\begin{equation*}
\label{eq:51}
X \leq O\Bigl(\frac{\rmu^2}{np}\Bigr)^{j(k+1)} \sum_{(s,t) \text{
    str.~admiss.}}  n^{|J|+|K|-|Q|-|\Omega|}.
\end{equation*}

\emph{Remark.} In the case where $k = 0$ in which $Q = \emptyset$, one can easily
obtain a better estimate, namely, (if $np \ge \rmu$)
\[
X \leq O\Bigl(\frac{\rmu}{np}\Bigr)^{j} \sum_{(s,t) \text{
    str.~admiss.}}  n^{|J|+|K|-|\Omega|}.
\]

Call a triple $(i,\mu,l)$ \emph{recycled} if we have $s_{i',\mu',l'} =
s_{i,\mu,l}$ or $t_{i',\mu',l'} = t_{i,\mu,l}$ for some $(i',\mu',l')
< (i,\mu,l)$, and \emph{totally recycled} if $(s_{i',\mu',l'},
t_{i',\mu',l'}) = (s_{i,\mu,l}, t_{i,\mu,l})$ for some $(i',\mu',l') < (i,\mu,l)$.  Let $Q'$ denote the set
of all $(i,\mu,l) \in Q$ which are recycled.

\emph{Example.}  The example in Section \ref{secondsec} is admissible, but not strongly admissible.  Nevertheless, the above definitions can still be applied, and we see that the triples 
$$(0,1,0), (0,1,1), (1,0,0), (1,0,1), (1,1,0), (1,1,1)$$ 
are all recycled (because they either reuse an existing value of $s$ or $t$ or both), while the triple $(1,1,1)$ is totally recycled (it visits the same location as the earlier triple $(0,0,1)$).  Thus in this case, we have $Q' = \{ (0,1,1), (1,0,1), (1,1,1) \}$.  

We observe that if $(i,\mu,l) \in [j] \times \{0,1\} \times [k]$ is not recycled, then it must have been reached from $(i,\mu,l-1)$ by a non-rook move, and thus $(i,\mu,l)$ lies in $Q$.

\begin{lemma}[Exponent bound]
\label{teo:Q}  For any admissible tuple, we have $|J|+|K|-|Q|-|\Omega| \leq -|Q'|+1$.
\end{lemma}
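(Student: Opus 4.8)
The plan is to track how the quantities $|J|$, $|K|$, $|\Omega|$, $|Q|$, and $|Q'|$ are built up as we process the $2j(k+1)$ triples $(i,\mu,l)$ in lexicographic order, and to set up a counting argument comparing the number of "new rows/columns created" against the number of triples. Concretely, I would partition the triples into three groups: (i) the \emph{initial} triple $(1,0,0)$; (ii) the triples of the form $(i,0,0)$ and $(i,1,0)$ for $i \in [j]$ other than the initial one --- these are the $2j-1$ "body joints" that link the legs, subject to the cyclic conditions \eqref{compat2}; and (iii) the $2jk$ "leg interior" triples in $[j] \times \{0,1\} \times [k]$. For each triple we ask whether it increments $|J|$ (introduces a fresh $s$-value), whether it increments $|K|$ (fresh $t$-value), and whether the pair $(s,t)$ is fresh (increments $|\Omega|$). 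Summing these increments over all triples gives $|J|$, $|K|$, and $|\Omega|$ exactly, up to the bookkeeping of the starting values $s_{1,0,0}=t_{1,0,0}=1$.

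The key structural observation to exploit is the one already flagged in the text: if a leg-interior triple $(i,\mu,l) \in [j]\times\{0,1\}\times[k]$ is \emph{not} recycled, then both $s_{i,\mu,l}$ and $t_{i,\mu,l}$ are fresh, so the move from $(i,\mu,l-1)$ to $(i,\mu,l)$ changes both coordinates and hence lies in $Q$; moreover such a triple contributes $+1$ to $|J|$, $+1$ to $|K|$, and $+1$ to $|\Omega|$, so it contributes $+1+1-1-1 = 0$ to $|J|+|K|-|Q|-|\Omega|$. A recycled leg-interior triple in $Q$ (i.e. one in $Q'$) contributes at most $+1$ to $|J|+|K|$ (since at least one of $s,t$ is reused), at least $0$ to $|\Omega|$, and $-1$ to $-|Q|$, for a net contribution of at most $0$ to $|J|+|K|-|Q|-|\Omega|$ --- but we want to do better and extract the $-|Q'|$ saving, so here one argues that a recycled triple in $Q$ contributes at most $1$ to $|J|+|K|$ while contributing $1$ to $|Q|$ (it is in $Q$) and contributing at least $0$ to $|\Omega|$; counting more carefully, a triple in $Q'$ should be shown to contribute at most $-1$ to the running total $|J|+|K|-|Q|-|\Omega|$. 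A recycled triple \emph{not} in $Q$ contributes at most $+1$ to $|J|+|K|$, contributes $0$ to $|Q|$, and contributes at least $0$ to $|\Omega|$; one needs that such a triple actually contributes at most $0$ to the running total, which follows because a non-rook... rather, a rook move that reuses a coordinate can introduce at most one new coordinate, and if it introduces one new coordinate it may also introduce a new $\Omega$-pair, but if it introduces a new $\Omega$ pair via a rook move sharing (say) its $s$-value then the $t$-value is new, contradiction with "not both new"? I would sort these cases out carefully; the cleanest route is the inequality: for every triple, (increment to $|J|$) $+$ (increment to $|K|$) $-$ (increment to $|Q|$-membership indicator) $-$ (increment to $|\Omega|$) $\le 0$, with strict inequality ($\le -1$) exactly when the triple is in $Q'$, plus a global $+1$ coming from the very first triple $(1,0,0)$ which creates both the first row and first column "for free" while creating one $\Omega$-pair, netting $1+1-0-1=+1$. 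The body joints (group (ii)) must be checked to contribute $\le 0$ each: a triple $(i,0,0)$ has $s_{i,0,0}=s_{i-1,1,0}$ already seen (by \eqref{compat2}), so it contributes $0$ to $|J|$, at most $1$ to $|K|$, it is not in $Q$ (it is not a leg-interior triple), and it contributes at least $0$ to $|\Omega|$ --- but again if it contributes $+1$ to $|K|$ it contributes $+1$ to $|\Omega|$ too, so net $\le 0$; similarly for $(i,1,0)$ using $t_{i,1,0}=t_{i,0,0}$.

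Summing all these local inequalities telescopes to $|J|+|K|-|Q|-|\Omega| \le 1 - |Q'|$, which is the claim. The main obstacle I anticipate is the careful case analysis ensuring the local inequality "(fresh-$s$) + (fresh-$t$) $-$ $1_{Q}$ $-$ (fresh-$\Omega$-pair) $\le 0$" holds for \emph{every} triple with the right slack on $Q'$ triples: one has to handle rook moves that reuse a row but land on a genuinely new column (or vice versa), moves that revisit an old row-column pair entirely, and the interaction with the cyclic compatibility conditions \eqref{compat2} at the body joints, all without double-counting. In particular the definition of $Q$ uses the \emph{previous} triple $(i,\mu,l-1)$ in the same leg, so one must be sure that "non-recycled $\Rightarrow$ came by a non-rook move $\Rightarrow$ in $Q$" is applied only to leg-interior triples and that the body joints are accounted for separately; once that dichotomy is set up cleanly the telescoping is routine.
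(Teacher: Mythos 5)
Your plan is correct and is essentially the paper's own argument: process the triples lexicographically, show that the initial triple contributes $+1$, and establish a local inequality for every subsequent triple by observing that whenever $|J|$ or $|K|$ actually increases, the pair $(s,t)$ must be fresh, so $|\Omega|$ increases in tandem.

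The one place your write-up visibly stumbles is the same place the paper has a small trick. You track $|J|+|K|-|Q|-|\Omega|$ directly and then try to argue that $Q'$-triples contribute $\le -1$; this is true, but it forces you to split the recycled case into sub-cases (one coordinate reused vs.\ both reused vs.\ totally recycled), and your exposition hesitates there. The paper instead tracks $|J|+|K|-|Q\setminus Q'|-|\Omega|$ (an equivalent reformulation, since $Q' \subset Q$), and organizes the induction purely by recycled status: totally recycled (everything unchanged), recycled but not totally (increments cancel), and not recycled (forced into $Q\setminus Q'$, increments cancel). That re-grouping makes each case a one-liner and avoids the dichotomy between body joints and leg interiors altogether --- the cyclic conditions \eqref{compat2} enter only to guarantee $l>0$ in the not-recycled case. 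Your decomposition works, but the key fact you keep circling around --- ``a rook move that introduces a new coordinate also introduces a new $\Omega$-pair'' --- needs to be stated once, crisply, rather than re-derived separately for body joints and for leg interiors; once that is done your telescoping is correct.
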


\begin{proof} We let $(i,\mu,l)$ increase from $(1,0,0)$ to $(j,1,k)$
  and see how each $(i,\mu,l)$ influences the quantity $|J|+|K|-|Q
  \backslash Q'|-|\Omega|$.

  Firstly, we see that the triple $(1,0,0)$ initialises $|J|, |K|,
  |\Omega| = 1$ and $|Q \backslash Q'| = 0$, so $|J|+|K|-|Q \backslash
  Q'|-|\Omega| = 1$ at this initial stage.  Now we see how each subsequent
  $(i,\mu,l)$ adjusts this quantity.

  If $(i,\mu,l)$ is totally recycled, then $J, K, \Omega, Q \backslash
  Q'$ are unchanged by the addition of $(i,\mu,l)$, and so $|J|+|K|-|Q
  \backslash Q'|-|\Omega|$ does not change.

  If $(i,\mu,l)$ is recycled but not totally recycled, then one of $J,
  K$ increases in size by at most one, as does $\Omega$, but the other
  set of $J, K$ remains unchanged, as does $Q \backslash Q'$, and so
  $|J|+|K|-|Q \backslash Q'|-|\Omega|$ does not increase.

  If $(i,\mu,l)$ is not recycled at all, then (by \eqref{compat2}) we
  must have $l>0$, and then (by definition of $Q, Q'$) we have
  $(i,\mu,l) \in Q \backslash Q'$, and so $|Q \backslash Q'|$ and
  $|\Omega|$ both increase by one.  Meanwhile, $|J|$ and $|K|$
  increase by 1, and so $|J|+|K|-|Q \backslash Q'|-|\Omega|$
  does not change.  Putting all this together we obtain the claim.
\end{proof}

This lemma gives 
\begin{equation*}
\label{eq:52}
X \leq O\Bigl(\frac{\rmu^2}{np}\Bigr)^{j(k+1)} \sum_{ \text{
    str.~admiss.}}  n^{-|Q'|+1}.
\end{equation*}

\emph{Remark.} When $k = 0$, we have the better bound
\[
X \leq O\Bigl(\frac{\rmu}{np}\Bigr)^{j} \sum_{ \text{
    str.~admiss.}}  n.
\]

To estimate the above sum, we need to count strongly admissible pairs.  This is achieved by the following lemma.

\begin{lemma}[Pair counting]\label{paircount} For fixed $q \geq 0$, the number of strongly admissible
  pairs $(s,t)$ with $|Q'|=q$ is at most $O(j(k+1))^{2j(k+1) + q}$.
\end{lemma}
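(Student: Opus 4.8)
The plan is to encode each strongly admissible pair $(s,t)$ by a bounded-size amount of combinatorial data, and then bound the number of such encodings. I think of the sequence $\{(s_{i,\mu,l},t_{i,\mu,l})\}$, ordered lexicographically in $(i,\mu,l)$, as a walk of $2j(k+1)$ steps in the grid $J \times K$; the walk is built step by step, and at each step we either revisit a previously-seen row index $s$, column index $t$, or both (a "recycled" step), or we introduce a brand new $s$ and/or a brand new $t$. The key structural observation, already recorded just before the lemma, is that a step $(i,\mu,l)$ with $l>0$ that is \emph{not} recycled must be a non-rook move, hence lies in $Q$; so the non-recycled steps (with $l>0$) are exactly $Q \setminus Q'$, and each such step contributes exactly one new $s$-value and one new $t$-value. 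Combined with \eqref{compat2} (which says the $l=0$ entries are determined cyclically by earlier $l=1$... wait, by earlier entries, so they introduce no genuinely new choices beyond what the body already fixed) this will let me bound $|J|$ and $|K|$ in terms of $q=|Q'|$ and the number of "new" steps.

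First I would set up the reconstruction procedure: to specify $(s,t)$ it suffices to specify, for each of the $2j(k+1)$ tuples $(i,\mu,l)$ in increasing order, which "type" of step it is and, when it reuses an old index, \emph{which} old index it reuses. There are a bounded number of types (new/new, new/old, old/new, old/old, plus the forced $l=0$ cases governed by \eqref{compat2}), contributing at most $O(1)^{2j(k+1)}$ choices overall. When a step reuses an old $s$-index there are at most $|J| \le 2j(k+1)$ choices for it, and similarly at most $|K| \le 2j(k+1)$ choices for a reused $t$-index; since $|J|,|K| = O(j(k+1))$ trivially (each is at most the total number of tuples), every "reuse" choice costs a factor $O(j(k+1))$. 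So the count is at most $O(j(k+1))$ raised to the power (number of reuse-choices made), times $O(1)^{2j(k+1)}$.

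Next I would carefully bound the number of reuse-choices. A step that is totally recycled forces both $s$ and $t$ to be old, costing two reuse-choices, but such steps are "free" in the sense that there can be at most ... hmm, actually there can be many of them, so I need the bound to come out right regardless. The honest accounting is: the exponent should be controlled by $2j(k+1)$ in general (giving the crude $O(j(k+1))^{2j(k+1)}$) but \emph{improved} by using $q$. The point is that the $q$ in the statement appears as an \emph{additive} term $2j(k+1)+q$, not as a savings, so in fact I only need: total reuse-choices $\le 2j(k+1) + q$ — no, that's automatic since there are only $2j(k+1)$ steps each costing at most $2$ reuse-choices, giving $4j(k+1)$, already of the form $O(j(k+1))^{O(j(k+1))}$. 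The role of $q$ must be subtler: I expect the real statement being proved is that one can reconstruct $(s,t)$ using only $2j(k+1)+q$ many factors of $O(j(k+1))$ because \emph{most} steps are cheap. Specifically, a step in $Q'$ is recycled, so it has an old $s$ or an old $t$; a step not in $Q$ at all (a rook move) shares a coordinate with its predecessor, so at least one of $s_{i,\mu,l}, t_{i,\mu,l}$ equals the immediately preceding value and costs \emph{zero} reuse-choices (it's determined); and a non-recycled $Q$-step costs zero reuse-choices (both coordinates are forced to be new, i.e. the next unused integer). So reuse-choices are incurred \emph{only} at recycled steps that are not fully determined by their predecessor, and at the fully-recycled-type bookkeeping — one shows this total is at most $2j(k+1)+q$ by charging each $Q'$ step at most $2$ and each non-$Q$ recycled step at most $1$ against a budget, then bounding crudely.

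\textbf{Main obstacle.} The genuinely delicate part is the bookkeeping that converts the qualitative "rook moves and forced new-index moves cost nothing" into the precise exponent $2j(k+1)+q$, while correctly handling the cyclic compatibility conditions \eqref{compat2} (which make the $l=0$ entries of each leg not independent, and tie the legs to the body) and the fact that a single step can be simultaneously a horizontal and vertical rook move (the "trivial move"). One must be careful that the reconstruction is genuinely injective — that the type-data plus the reuse-choices plus \eqref{compat2} really do pin down $(s,t)$ uniquely — and that no step's data is double-counted. I would organize this as a single induction on the lexicographic order of $(i,\mu,l)$, maintaining the invariant that the partial data seen so far determines the partial sequence, and tracking a running count of reuse-choices bounded by (number of steps so far) plus (number of $Q'$-steps so far); the base case is $(1,0,0)$ with $s=t=1$, and the inductive step is a case analysis on the four/five step types described above. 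Closing the induction gives total reuse-choices $\le 2j(k+1)+q$, hence at most $O(j(k+1))^{2j(k+1)+q}$ strongly admissible pairs with $|Q'|=q$, as claimed.
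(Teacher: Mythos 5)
Your proposal is correct and follows essentially the same approach as the paper: a step-by-step (lexicographic) reconstruction of $(s,t)$, with a case analysis showing that each $Q'$ step costs two factors of $O(j(k+1))$, each $l=0$ or non-$Q$ step (a forced-coordinate move) costs one, and each $Q\setminus Q'$ step costs none, yielding exponent $2j(k+1)-|Q\setminus Q'|+q \le 2j(k+1)+q$ after multiplying by the $O(1)^{2j(k+1)}$ choices of type/partition data. Your momentary detour ("that's automatic... giving $4j(k+1)$") is correctly self-corrected; the $q$-dependence is genuinely needed because it is summed against $n^{-q}$ downstream, and a uniform $O(j(k+1))^{4j(k+1)}$ bound would weaken the final $(\log n)^2$ to $(\log n)^4$.
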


\begin{proof} Firstly observe that once one fixes $q$, the number of
  possible choices for $Q'$ is $\binom{2jk}{q}$, which we can bound
  crudely by $2^{2j(k+1)} = O(1)^{2j(k+1)+q}$.  So we may without loss of generality
  assume that $Q'$ is fixed.  For similar reasons we may assume $Q$ is
  fixed.

  As with the proof of Lemma \ref{teo:Q}, we increment $(i,\mu,l)$ from $(1,0,0)$ to $(j,1,k)$ and upper
  bound how many choices we have available for $s_{i,\mu,l},
  t_{i,\mu,l}$ at each stage.

  There are no choices available for $s_{1,0,0}, t_{1,0,0}$, which
  must both be one.  Now suppose that $(i,\mu,l) > (1,0,0)$.  There are
  several cases.

  If $l=0$, then by \eqref{compat2} one of $s_{i,\mu,l}, t_{i,\mu,l}$
  has no choices available to it, while the other has at most $O(
  j(k+1) )$ choices.  If $l > 0$ and $(i,\mu,l) \not \in Q$, then at
  least one of $s_{i,\mu,l}, t_{i,\mu,l}$ is necessarily equal to its
  predecessor; there are at most two choices available for which index
  is equal in this fashion, and then there are $O( j(k+1) )$ choices
  for the other index.

  If $l > 0$ and $(i,\mu,l) \in Q \backslash Q'$, then both
  $s_{i,\mu,l}$ and $t_{i,\mu,l}$ are new, and are thus equal to the
  first positive integer not already occupied by $s_{i',\mu',l'}$ or
  $t_{i',\mu',l'}$ respectively for $(i',\mu',l') < (i,\mu,l)$.  So
  there is only one choice available in this case.

  Finally, if $(i,\mu,l) \in Q'$, then there can be $O(j(k+1))$
  choices for both $s_{i,\mu,l}$ and $t_{i,\mu,l}$.

  Multiplying together all these bounds, we obtain that the number of
  strongly admissible pairs is bounded by 
\[
O(j(k+1))^{2j + 2jk - |Q| + 2|Q'|} = O(j(k+1))^{2j(k+1) - |Q\setminus
  Q'| + |Q'|},
\]
which proves the claim (here we discard the $|Q \setminus Q'|$ factor).
\end{proof}

Using the above lemma we obtain
\[
X \leq O(1)^{j(k+1)} n \left(\frac{\rmu^2}{np}\right)^{j(k+1)}
\sum_{q=0}^{2jk} O(j(k+1))^{2j(k+1) + q} n^{-q}.
\] 
Under the assumption $n \ge c_0 j(k+1)$ for some numerical constant
$c_0$, we can sum the series and obtain Theorem
\ref{teo:moment1}. 

\emph{Remark.} When $k = 0$, we have the better bound
\[
X \leq O(j)^{2j} n \left(\frac{\rmu}{np}\right)^{j}.
\] 

\section{Linear bound in the rank}
\label{sec:moment2}

We now prove the more sophisticated moment estimate \eqref{eq:moment2} under the hypotheses of Theorem \ref{teo:main2}.  Here, we cannot afford to take absolute values immediately, as in the proof of \eqref{eq:moment1}, but first must exploit some algebraic cancellation properties in the coefficients $c_{ab,a'b'}$, $E_{ab}$ appearing in \eqref{eq:formula} to simplify the sum.

\subsection{Cancellation identities}

Recall from \eqref{cab} that the coefficients $c_{ab,a'b'}$ are
defined in terms of the coefficients $\U_{a,a'}$, $\V_{b,b'}$
introduced in \eqref{eq:UV}.  We recall the symmetries $\U_{a,a'} =
\U_{a',a}, \V_{b,b'} = \V_{b',b}$ and the projection identities
\begin{align}\label{proj-id}
  \sum_{a'} \U_{a,a'} \U_{a',a''} & = \left(1 - 2\rho\right) \U_{a,a''}  - 
\rho\left(1-\rho\right) 1_{a = a''},\\
  \sum_{b'} \V_{b,b'} \V_{b',b''} & = \left(1 - 2\rho\right) \V_{b,b''}  - 
\rho\left(1-\rho\right) 1_{b = b''};
\end{align}
the first identity follows from the matrix identity
$$\sum_{a'} \U_{a,a'} \U_{a',a''} = \<e_a, Q_U^2 e_{a'}\>$$
after one writes the projection identity $P_U^2 = P_U$ in terms of $Q_U$ using \eqref{prhoq}, and similarly for the second identity.

In a similar vein, we also have the
identities
\begin{equation}\label{proj-i2d}
  \sum_{a'} \U_{a,a'} E_{a',b} = \left(1-\rho\right) E_{a,b} =  
\sum_{b'} E_{a,b'} \V_{b',b},  
\end{equation}
which simply come from $Q_U E = P_U E - \rho E = (1-\rho)E$ together
with $E Q_V = E P_V - \rho E = (1-\rho)E$. Finally, we observe the two equalities
\begin{equation}\label{proj-i3d}
\sum_{b} E_{a,b} E_{a',b} = \U_{a,a'} + \rho 1_{a=a'}, \quad
\sum_{a} E_{a,b} E_{a,b'} = \V_{b,b'} + \rho 1_{b=b'}.
\end{equation}
The first identity follows from the fact that $\sum_{b} E_{a,b} E_{a',b}$ is the
$(a,a')^{th}$ element of $EE^* = P_U = Q_U + \rho I$, and the second one similarly follows from the identity $E^* E =
P_V = Q_V + \rho I$.

\subsection{Reduction to a summand bound}

Just as before, our goal is to estimate
\[
X := \E \tr(A^* A)^j, \quad  A = (\QO \QT)^k \QO E.
\]

We recall the bound \eqref{eq:formula}, and 
% he bound
% \begin{multline*}
%   X \le \sum_{(s,t) \text{ strongly admissible}} (1/p)^{2j(k+1) - |\Omega|} \\
%   |\sum_{\alpha,\beta} \prod_{i \in [j]} \prod_{\mu=0}^1 [ (\prod_{l
%     \in [k]} c_{\alpha(s_{i,\mu,l-1}) \beta(t_{i,\mu,l-1}),
%     \alpha(s_{i,\mu,l}) \beta(t_{i,\mu,l})}) E_{\alpha(s_{i,\mu,k})
%     \beta(t_{i,\mu,k})} ]|
% \end{multline*}
% where $\Omega := \{ (s_{i,\mu,l},t_{i,\mu,l}): (i,\mu,l) \in [j] \times \{0,1\} \times \{0,\ldots,l\} \}$ and
% the sum is over those $(s,t)$ that are \emph{strongly admissible}, i.e. they obey \eqref{compat2} and  each element of $\Omega$
% is visited at least twice by the sequence $(s_{i,\mu,l}, t_{i,\mu,l})$.
expand out each of the $c$ coefficients using \eqref{cab} into three
terms.  To describe the resulting expansion of the sum we need more
notation.  Define an \emph{admissible quadruplet} $(s,t,\L_U, \L_V)$
to be an admissible pair $(s,t)$, together with two sets $\L_U, \L_V$
with $\L_U \cup \L_V = [j] \times \{0,1\} \times [k]$, such that
$s_{i,\mu,l-1}=s_{i,\mu,l}$ whenever $(i,\mu,l) \in ([j] \times
\{0,1\} \times [k]) \backslash \L_U$, and $t_{i,\mu,l-1}=t_{i,\mu,l}$
whenever $(i,\mu,l) \in ([j] \times \{0,1\} \times [k]) \backslash
\L_V$.  If $(s,t)$ is also strongly admissible, we say that
$(s,t,\L_U,\L_V)$ is a \emph{strongly admissible quadruplet}.

The sets $\L_U \backslash \L_V$, $\L_V \backslash \L_U$, $\L_U \cap
\L_V$ will correspond to the three terms $1_{b = b'} \U_{a,a'}$, $1_{a
  = a'} \V_{b,b'}$, $\U_{a,a'} \V_{b,b'}$ appearing in
\eqref{cab}. With this notation, we expand the product
\[
\prod_{i \in [j]} \prod_{\mu=0}^1 \prod_{l \in [k]}
c_{\alpha(s_{i,\mu,l-1}) \beta(t_{i,\mu,l-1}), \alpha(s_{i,\mu,l})
  \beta(t_{i,\mu,l})}
\]
as 
\begin{multline*}
  \sum_{\L_U, \L_V} (1-\rho)^{|\L_U\backslash\L_V| +
    |\L_U\backslash\L_V|} (-1)^{|\L_U \cap \L_V|}
  \Bigl[\prod_{(i,\mu,l) \in \L_U\backslash \L_V}
  1_{\beta(t_{i,\mu,l-1}) = \beta(t_{i,\mu,l})}
  \U_{\alpha(s_{i,\mu,l-1}),\alpha(s_{i,\mu,l})}\Bigr] \\ \Bigl[
  \prod_{(i,\mu,l) \in \L_V\backslash \L_U}
  1_{\alpha(s_{i,\mu,l-1}),\alpha(s_{i,\mu,l})}
  \V_{\beta(t_{i,\mu,l-1}),\beta(t_{i,\mu,l})} \Bigr]
  \Bigl[\prod_{(i,\mu,l) \in \L_U\cap\L_V}
  U_{\alpha(s_{i,\mu,l-1}),\alpha(s_{i,\mu,l})}
  \V_{\beta(t_{i,\mu,l-1}),\beta(t_{i,\mu,l})} \Bigr],
\end{multline*}
where the sum is over all partitions as above, and which we can
rearrange as 
\[
\sum_{\L_U, \L_V} [-(1-\rho)]^{2j(k+1)-|\L_U \cap \L_V|}
\Bigl[\prod_{(i,\mu,l) \in \L_U}
\U_{\alpha(s_{i,\mu,l-1}),\alpha(s_{i,\mu,l})} \Bigr] \, \Bigl[
\prod_{(i,\mu,l) \in \L_V}
\V_{\beta(t_{i,\mu,l-1}),\beta(t_{i,\mu,l})} \Bigr].
\]
From this and the triangle inequality, we observe the bound
\[
X \le \left(1-\rho\right)^{2j(k+1) - |\L_U \cap \L_V|} \,
\sum_{(s,t,\L_U,\L_V)} (1/p)^{2j(k+1) - |\Omega|} |X_{s,t,\L_U,
  \L_V}|,
\] 
where the sum ranges over all strongly admissible quadruplets, and
\begin{multline*}
  X_{s,t,\L_U,\L_V} := \sum_{\alpha,\beta}
  \Bigl[\prod_{i \in [j]} \prod_{\mu=0}^1 E_{\alpha(s_{i,\mu,k}) \beta(t_{i,\mu,k})} \Bigr] \\
  \Bigl[\prod_{(i,\mu,l) \in \L_U}
  \U_{\alpha(s_{i,\mu,l-1}),\alpha(s_{i,\mu,l})} \Bigr] \, \Bigl[
  \prod_{(i,\mu,l) \in \L_V}
  \V_{\beta(t_{i,\mu,l-1}),\beta(t_{i,\mu,l})} \Bigr].
\end{multline*}

\emph{Remark.}  A strongly admissible quadruplet can be viewed as the
configuration of a ``spider'' with several additional constraints.
Firstly, the spider must visit each of its vertices at least twice
(strong admissibility).  When $(i,\mu,l) \in [j] \times \{0,1\} \times
[k]$ lies out of $\L_U$, then only horizontal rook moves are allowed
when reaching $(i,\mu,l)$ from $(i,\mu,l-1)$; similarly, when
$(i,\mu,l)$ lies out of $\L_V$, then only vertical rook moves are
allowed from $(i,\mu,l-1)$ to $(i,\mu,l)$.  In particular, non-rook
moves are only allowed inside $\L_U \cap \L_V$; in the notation of the
previous section, we have $Q \subset \L_U \cap \L_V$.  Note though
that while one has the \emph{right} to execute a non-rook move to
$\L_U \cap \L_V$, it is not mandatory; it could still be that
$(s_{i,\mu,l-1}, t_{i,\mu,l-1})$ shares a common row or column (or
even both) with $(s_{i,\mu,l},t_{i,\mu,l})$.

We claim the following fundamental bound on the summand
$|X_{s,t,\L_U,\L_V}|$:

\begin{proposition}[Summand bound]\label{main} Let $(s,t,\L_U,\L_V)$
  be a strongly admissible quadruplet.  Then we have
\[
|X_{s,t,\L_U,\L_V}| \le O(j(k+1))^{2j(k+1)} (r/n)^{2j(k+1) - |\Omega|}
n.
\]
\end{proposition}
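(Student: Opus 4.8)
\emph{Proof strategy.} The plan is to carry out the sums over the injections $\alpha,\beta$ in $X_{s,t,\L_U,\L_V}$ \emph{before} passing to absolute values, using the cancellation identities \eqref{proj-id}, \eqref{proj-i2d}, \eqref{proj-i3d} to contract the expression one index at a time. A term-by-term estimate as in Section \ref{sec:moment1} is too lossy here: each of the $\sim 2j(k+1)$ matrix coefficients $\U,\V,E$ has size only $O(\mu\sqrt r/n)$, and summing $n^{|J|+|K|}$ choices of $\alpha,\beta$ would leave powers of $\sqrt r$ (and of $\mu$) in place of the powers of $r$ the statement demands. Instead, regard a strongly admissible quadruplet as a \emph{decorated spider}: a connected multigraph on the vertex set $J\sqcup K$ carrying a $\U$-edge $\{s_{i,\mu,l-1},s_{i,\mu,l}\}$ for each $(i,\mu,l)\in\L_U$, a $\V$-edge $\{t_{i,\mu,l-1},t_{i,\mu,l}\}$ for each $(i,\mu,l)\in\L_V$, and a bipartite $E$-edge $\{s_{i,\mu,k},t_{i,\mu,k}\}$ for each of the $2j$ legs; then $X_{s,t,\L_U,\L_V}$ is the contraction of the associated product of $Q_U$-, $Q_V$-, and $E$-factors over injective labellings $\alpha$ of the $J$-side and $\beta$ of the $K$-side.

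The heart of the argument is a reduction procedure that processes the decorated spider one vertex at a time, working from the tips of the legs toward the body. Summing out a vertex incident (in the current diagram) to two $\U$-edges uses \eqref{proj-id}: the two edges fuse into a single $\U$-edge with the harmless constant $1-2\rho$, plus an alternative in which the two former neighbours are identified at the cost of a factor $\rho(1-\rho)=O(r/n)$. Summing out a vertex incident to one $\U$-edge and one $E$-edge uses \eqref{proj-i2d}, at cost $O(1)$; a vertex incident to two $E$-edges uses \eqref{proj-i3d}, fusing them into a $\V$- or $\U$-edge plus a $\rho$-term; degenerate cases — a $\U$- or $\V$-edge that happens to be a self-loop, or two $E$-edges already sharing both endpoints — are handled by the same identities. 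The sets $\L_U,\L_V$ constrain which moves are legal (outside $\L_U$ the step is a forced vertical rook move, producing only $\V$-edges, and dually for $\L_V$), so the procedure must respect them. At each of the $O(j(k+1))$ steps one chooses a local move and one of its $O(j(k+1))$ outcomes, which accounts for the prefactor $O(j(k+1))^{2j(k+1)}$.

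To extract the exponent $(r/n)^{2j(k+1)-|\Omega|}$ and the single residual factor $n$, I would track a potential in the spirit of Lemma \ref{teo:Q}, measuring (free summation coordinates remaining) minus (matrix edges remaining), which the procedure only decreases. Strong admissibility — every cell of the spider is visited at least twice, so $|\Omega|\le j(k+1)$ — together with connectivity of the spider is what forces the number of rows and columns touched to obey a spanning-tree bound of the form $|J|+|K|\le|\Omega|+1$; consequently exactly one coordinate survives unconstrained (the factor $n$), and each of the remaining $\sim 2j(k+1)-|\Omega|$ matrix factors contributes $O(r/n)$, with the few ``bare'' $\U$ or $E$ factors of size $O(\mu\sqrt r/n)$ that can survive being dominated by $O(r/n)$ up to constants absorbed into the prefactor (using $\mu\ge 1$). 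Combining these counts yields the asserted bound.

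The main obstacle is the design and termination of the reduction procedure: one must show that at every stage, and for every configuration of $\L_U,\L_V$, there is a vertex to which \eqref{proj-id}, \eqref{proj-i2d}, or \eqref{proj-i3d} (or a legal degenerate variant) applies, and that the index identifications coming from the $1_{a=a''}$-type outputs are compatible with the running record of which $J$- and $K$-labels have already been merged. This is the full-spider analogue of the ``recycled / totally recycled'' dichotomy from Section \ref{sec:moment1}, and carrying it out while keeping the potential book-keeping tight — so that no extra power of $r/n$ is conceded and the leftover is a single factor of $n$ — is where essentially all the work lies.
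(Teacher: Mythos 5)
Your graph-theoretic picture and your choice of tools are the right ones: the paper also recasts $X_{s,t,\L_U,\L_V}$ as a contraction over a decorated spider, also works by eliminating rows and columns one at a time via \eqref{proj-id}, \eqref{proj-i2d}, \eqref{proj-i3d}, and also uses a Lemma \ref{teo:Q}-style potential. But the step you flag as the main obstacle --- ``one must show that at every stage ... there is a vertex to which \eqref{proj-id}, \eqref{proj-i2d}, or \eqref{proj-i3d} applies'' --- is not in fact true, and that is the genuine gap. Define $\tau_\j$ as the degree of row $\j$ in your multigraph (number of $\U$- and $E$-edges incident to it). The projection identities only sum out a vertex of degree exactly $2$; if $\tau_\j \ge 4$ for every $\j\in J$ and $\tau^\k \ge 4$ for every $\k\in K$, none of \eqref{proj-id}--\eqref{proj-i3d} can be applied, and your reduction stalls. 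The paper handles this with a third, non-reductive case: once all multiplicities are $\ge 4$, it passes to absolute values, uses the entrywise bounds $|\U|,|\V|,|E|=O(\sqrt{\rmu}/n)$, and closes with the double-counting inequality $4(|J|+|K|) \le \sum_\j \tau_\j + \sum_\k \tau^\k = 2|\Gamma|+2|\L_U\cap\L_V|$ (plus Lemma \ref{teo:Q}). Without this terminal case the induction has nowhere to bottom out.

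Two smaller issues. First, your spanning-tree bound ``$|J|+|K|\le|\Omega|+1$'' is too strong: the correct consequence of Lemma \ref{teo:Q} is $|J|+|K|\le|\L_U\cap\L_V|+|\Omega|+1$, and the $|\L_U\cap\L_V|$ term is not negligible. Second, the claim that leftover $\U$ or $E$ factors of size $O(\mu\sqrt r/n)$ are ``dominated by $O(r/n)$ ... using $\mu\ge1$'' does not follow ($\mu\sqrt r/n \le r/n$ requires $\mu\le\sqrt r$, not $\mu\ge1$); the paper's bound is genuinely in terms of $\rmu=\mu^2 r$, and the two extreme test cases $\rmu=1$, $\rmu=n$ are checked separately, which is where the two exponent inequalities come in. You are also missing the ``stretching'' trick for unguarded non-rook moves, which lets the paper assume \eqref{som}--\eqref{som2} before any identity is applied; without it, collapsing a non-rook $\U\V$ edge need not shrink $|\Omega|$, and the potential you want to decrease can fail to do so.
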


Assuming this proposition, we have
\[
X \le  O(j(k+1))^{2j(k+1)} \sum_{(s,t,\L_U,\L_V)} (r/np)^{2j(k+1) - |\Omega|} n
\]
and since $|\Omega| \leq j(k+1)$ (by strong admissibility) and $r \leq
np$, and the number of $(s,t,\L_U,\L_V)$ can be crudely bounded by
$O(j(k+1))^{4j(k+1)}$,
$$ 
X \le  O(j(k+1))^{6j(k+1)} (r/np)^{j(k+1)} n.
$$
This gives \eqref{eq:moment2} as desired. The bound on the number of
quadruplets follows from the fact that there are at most
$j(k+1)^{4j(k+1)}$ strongly admissible pairs and that the number of
$(\L_U,\L_V)$ per pair is at most $O(1)^{j(k+1)}$.

{\em Remark.} It seems clear that the exponent $6$ can be lowered by a
finer analysis, for instance by using counting bounds such as Lemma
\ref{paircount}.  However, substantial effort seems to be required in
order to obtain the optimal exponent of $1$ here.

\subsection{Proof of Proposition \ref{main}}

To prove the proposition, it is convenient to generalise it by
allowing $k$ to depend on $i, \mu$.  More precisely, define a
\emph{configuration} ${\mathcal C} = (j, k, J, K, s, t, \L_U, \L_V)$
to be the following set of data:
\begin{itemize}
\item An integer $j \geq 1$, and a map $k: [j] \times \{0,1\} \to \{0,
  1, 2, \ldots \}$, generating a set $\Gamma := \{ (i,\mu,l): i \in
  [j], \mu \in \{0,1\}, 0 \leq l \leq k(i,\mu) \}$;
\item Finite sets $J, K$, and surjective maps $s: \Gamma \to J$ and
  $t: \Gamma \to K$ obeying \eqref{compat2};
\item Sets $\L_U, \L_V$ such that
$$\L_U \cup \L_V := \Gamma_+ := \{ (i,\mu,l) \in \Gamma:  l > 0 \}$$
and such that $s_{i,\mu,l-1}=s_{i,\mu,l}$ whenever $(i,\mu,l) \in
\Gamma_+ \backslash \L_U$, and $t_{i,\mu,l-1}=t_{i,\mu,l}$ whenever
$(i,\mu,l) \in \Gamma_+ \backslash \L_V$.
\end{itemize}

\emph{Remark.} Note we do not require configurations to be strongly
admissible, although for our application to Proposition \ref{main}
strong admissibility is required.  Similarly, we no longer require
that the segments \eqref{segment} be initial segments. This removal of
hypotheses will give us a convenient amount of flexibility in a
certain induction argument that we shall perform shortly. One can
think of a configuration as describing a ``generalized spider'' whose
legs are allowed to be of unequal length, but for which certain of the
segments (indicated by the sets $\L_U$, $\L_V$) are required to be
horizontal or vertical.  The freedom to extend or shorten the legs of
the spider separately will be of importance when we use the identities
\eqref{proj-id}, \eqref{proj-i2d}, \eqref{proj-i3d} to simplify the
expression $X_{s,t,\L_U,\L_V}$, see Figure \ref{fig:fig3p5}. 
\begin{figure}
 \centering
\includegraphics[scale=.6]{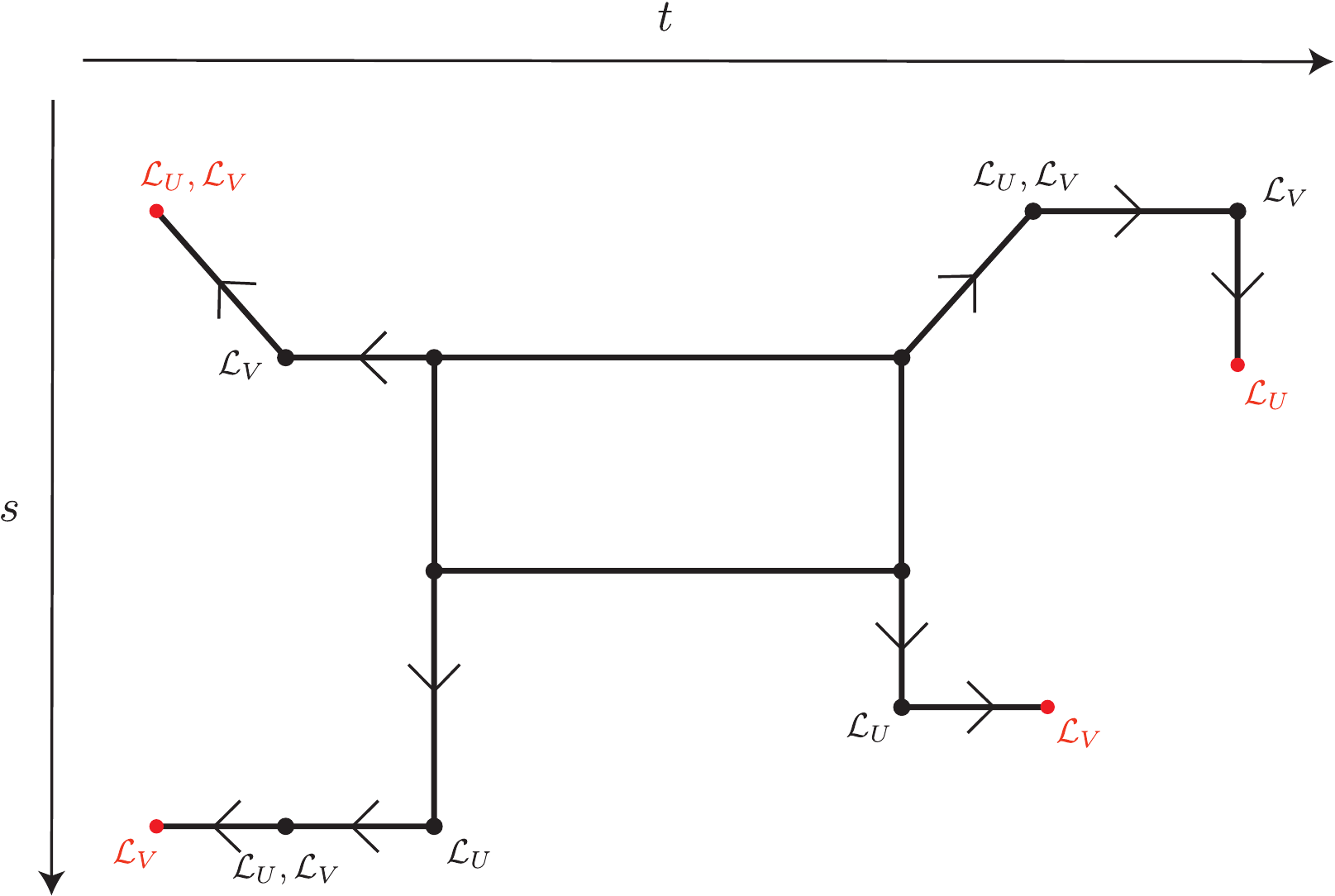}
\caption{\small A generalized spider (note the variable leg lengths).  A vertex labeled just by $\L_U$ must have been reached from its predecessor by a vertical rook move, while a vertex labeled just by $\L_V$ must have been reached by a horizontal rook move.  Vertices labeled by both $\L_U$ and $\L_V$ \emph{may} be reached from their predecessor by a non-rook move, but they are still allowed to lie on the same row or column as their predecessor, as is the case in the leg on the bottom left of this figure.  The sets $\L_U, \L_V$ indicate which $U$ and $V$ terms will show up in the expansion \eqref{xcdef}.}
\label{fig:fig3p5}
\end{figure}

Given a configuration ${\mathcal C}$, define the quantity $X_{\mathcal C}$ by
the formula
\begin{multline}\label{xcdef}
  X_{\mathcal C} := \sum_{\alpha,\beta}
  \Bigl[\prod_{i \in [j]} \prod_{\mu=0}^1 E_{\alpha(s(i,\mu,k(i,\mu))) \beta(t(i,\mu,k(i,\mu)))} \Bigr] \\
  \Bigl[ \prod_{(i,\mu,l) \in \L_U}
    \U_{\alpha(s(i,\mu,l-1)),\alpha(s(i,\mu,l))} \Bigr]  \Bigl[
    \prod_{(i,\mu,l) \in \L_V}
    \V_{\beta(t(i,\mu,l-1)),\beta(t(i,\mu,l))} \Bigr], 
\end{multline}
where $\alpha: J \to [n], \beta: K \to [n]$ range over all injections.
To prove Proposition \ref{main}, it then suffices to show that
\begin{equation}\label{xc}
 |X_{\mathcal C}| \leq (C_0 (1+|J|+|K|))^{|J|+|K|} (\rmu/n)^{|\Gamma| - |\Omega|} n
\end{equation}
for some absolute constant $C_0 > 0$, where
$$ \Omega := \{ (s(i,\mu,l), t(i,\mu,l)): (i,\mu,l) \in \Gamma \},$$
since Proposition \ref{main} then follows from the special case in which $k(i,\mu)=k$ is constant and $(s,t)$ is strongly admissible, in which case we have
$$ |J|+|K| \leq 2|\Omega| \leq |\Gamma| = 2j(k+1)$$
(by strong admissibility).

To prove the claim \eqref{xc} we will perform strong induction on the
quantity $|J|+|K|$; thus we assume that the claim has already been
proven for all configurations with a strictly smaller value of
$|J|+|K|$.  (This inductive hypothesis can be vacuous for very small
values of $|J|+|K|$.)  Then, for fixed $|J|+|K|$, we perform strong
induction on $|\L_U \cap \L_V|$, assuming that the claim has already
been proven for all configurations with the same value of $|J|+|K|$
and a strictly smaller value of $|\L_U \cap \L_V|$.

\emph{Remark.}  Roughly speaking, the inductive hypothesis is
asserting that the target estimate \eqref{xc} has already been proven
for all generalized spider configurations which are ``simpler'' than
the current configuration, either by using fewer rows and columns, or
by using the same number of rows and columns but by having fewer
opportunities for non-rook moves.

As we shall shortly see, whenever we invoke the inner induction
hypothesis (decreasing $|\L_U \cap \L_V|$, keeping $|J|+|K|$ fixed) we
are replacing the expression $X_{\mathcal C}$ with another expression
$X_{{\mathcal C}'}$ covered by this hypothesis; this causes no
degradation in the constant.  But when we invoke the outer induction
hypothesis (decreasing $|J|+|K|$), we will be splitting up
$X_{\mathcal C}$ into about $O(1+ |J|+|K|)$ terms $X_{{\mathcal C}'}$,
each of which is covered by this hypothesis; this causes a degradation
of $O(1+ |J|+|K|)$ in the constants and is thus responsible for the
loss of $(C_0(1+|J|+|K|))^{|J|+|K|}$ in \eqref{xc}.

For future reference we observe that we may take $\rmu \leq n$, as the
hypotheses of Theorem \ref{teo:main1} are vacuous otherwise ($m$
cannot exceed $n^2$).

To prove \eqref{xc} we divide into several cases.

\subsubsection{First case: an unguarded non-rook move}

Suppose first that $\L_U \cap \L_V$ contains an element
$(i_0,\mu_0,l_0)$ with the property that
\begin{equation}\label{ng}
 (s_{i_0,\mu_0,l_0-1},t_{i_0,\mu_0,l_0}) \not \in \Omega.
\end{equation}
Note that this forces the edge from
$(s_{i_0,\mu_0,l_0-1},t_{i_0,\mu_0,l_0-1})$ to
$(s_{i_0,\mu_0,l_0},t_{i_0,\mu_0,l_0})$ to be partially ``unguarded''
in the sense that one of the opposite vertices of the rectangle that
this edge is inscribed in is not visited by the $(s,t)$ pair.

When we have such an unguarded non-rook move, we can ``erase'' the element $(i_0,\mu_0,l_0)$ from $\L_U \cap \L_V$ by replacing
${\mathcal C} = (j, k, J, K, s, t, \L_U, \L_V)$ by the ``stretched''
variant ${\mathcal C}' = (j', k', J', K', s',$ $t', \L'_U, \L'_V)$,
defined as follows:
\begin{itemize}
\item $j' := j$, $J' := J$, and $K' := K$.
\item $k'(i,\mu) := k(i,\mu)$ for $(i,\mu) \neq (i_0,\mu_0)$, and $k'(i_0,\mu_0) := k(i_0,\mu_0)+1$.
\item $(s'_{i,\mu,l},t'_{i,\mu,l}) := (s_{i,\mu,l},t_{i,\mu,l})$ whenever $(i,\mu) \neq (i_0,\mu_0)$, or when $(i,\mu) = (i_0,\mu_0)$ and $l < l_0$.
\item $(s'_{i,\mu,l},t'_{i,\mu,l}) := (s_{i,\mu,l-1},t_{i,\mu,l-1})$ whenever $(i,\mu) = (i_0,\mu_0)$ and $l > l_0$.
\item $(s'_{i_0,\mu_0,l_0},t'_{i_0,\mu_0,l_0}) := (s_{i_0,\mu_0,l_0-1},t_{i_0,\mu_0,l_0})$.
\item We have
\begin{align*}
\L'_U := \{(i,\mu,l) \in \L_U & : (i,\mu) \neq (i_0,\mu_0) \} \\
& \cup \{(i_0,\mu_0,l) \in \L_U: l < l_0 \} \\
& \cup \{ (i_0,\mu_0,l+1): (i_0,\mu_0,l) \in \L_U; l > l_0+1 \} \\
& \cup \{ (i_0,\mu_0,l_0+1) \}
\end{align*}
and
\begin{align*}
\L'_V := \{(i,\mu,l) \in \L_V & : (i,\mu) \neq (i_0,\mu_0) \} \\
& \cup \{(i_0,\mu_0,l) \in \L_V: l < l_0 \} \\
& \cup \{ (i_0,\mu_0,l+1): (i_0,\mu_0,l) \in \L_V; l > l_0+1 \} \\
& \cup \{ (i_0,\mu_0,l_0) \}.
\end{align*}
\end{itemize}
All of this is illustrated in Figure \ref{fig:fig4}.
\begin{figure}
 \centering
\includegraphics[scale=.7]{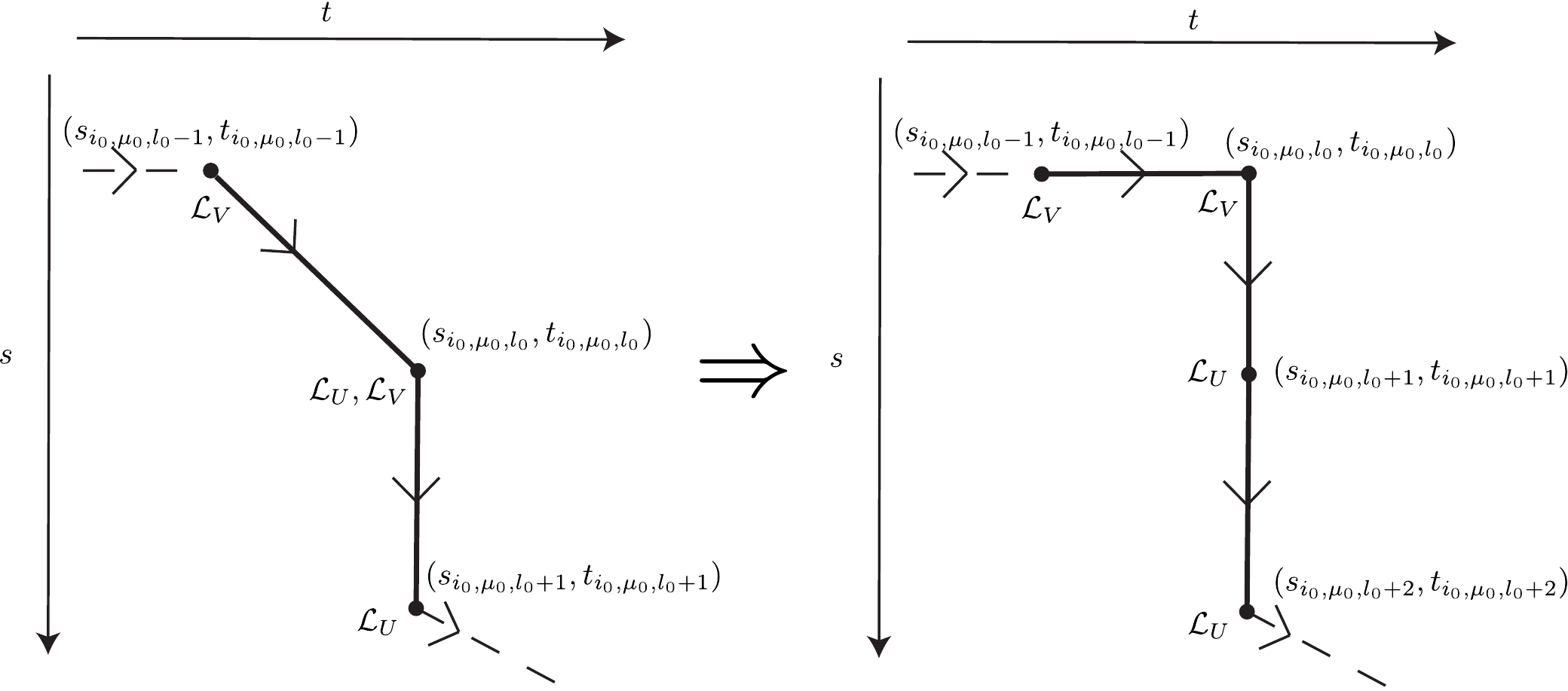}
\caption{\small A fragment of a leg showing an unguarded non-rook move from $(s_{i_0,\mu_0,l_0-1},t_{i_0,\mu_0,l_0-1})$ to $(s_{i_0,\mu_0,l_0},t_{i_0,\mu_0,l_0})$ is converted into two rook moves, thus decreasing $|\L_U \cap \L_V|$ by one.  Note that the labels further down the leg have to be incremented by one.}
\label{fig:fig4}
\end{figure}

One can check that ${\mathcal C}'$ is still a configuration, and $X_{{\mathcal C}'}$ is exactly equal to $X_{\mathcal C}$; informally what has happened here is that a single ``non-rook'' move (which contributed both a $\U_{a,a'}$ factor and a $\V_{b,b'}$ factor to the summand in $X_{\mathcal C}$) has been replaced with an equivalent pair of two rook moves (one of which contributes the $\U_{a,a'}$ factor, and the other contributes the $\V_{b,b'}$ factor).

Observe that, $|\Gamma'| = |\Gamma|+1$ and $|\Omega'| = |\Omega| + 1$ (here we use the non-guarded hypothesis \eqref{ng}), while $|J'|+|K'|=|J|+|K|$ and $|\L'_U \cap \L'_V|=|\L_U \cap \L_V|-1$.  Thus in this case we see that the claim follows from the (second) induction hypothesis.  We may thus eliminate this case and assume that
\begin{equation}\label{som}
 (s_{i_0,\mu_0,l_0-1},t_{i_0,\mu_0,l_0}) \in \Omega \hbox{ whenever } (i_0,\mu_0,l_0) \in \L_U \cap \L_V.
\end{equation}
For similar reasons we may assume
\begin{equation}\label{som2}
 (s_{i_0,\mu_0,l_0},t_{i_0,\mu_0,l_0-1}) \in \Omega \hbox{ whenever } (i_0,\mu_0,l_0) \in \L_U \cap L_V.
\end{equation}

\subsubsection{Second case: a low multiplicity row or column, no unguarded non-rook moves}

Next, given any $\j \in J$, define the \emph{row multiplicity} $\tau_\j$
to be
\begin{multline*}
  \tau_\j := |\{ (i,\mu,l) \in \L_U: s(i,\mu,l) = \j \}| \\
  + |\{ (i,\mu,l) \in \L_U: s(i,\mu,l-1) = \j \}| \\
  + |\{ (i,\mu) \in [j] \times \{0,1\}:  s(i,\mu,k(i,\mu)) = \j \}|
\end{multline*}
and similarly for any $\k \in K$, define the \emph{column multiplicity}
$\tau^\k$ to be
\begin{multline*} 
  \tau^\k := |\{ (i,\mu,l) \in \L_V: t(i,\mu,l) = \k \}| \\
  + |\{ (i,\mu,l) \in \L_V: t(i,\mu,l-1) = \k \}| \\
  + |\{ (i,\mu)  \in [j] \times \{0,1\}: t(i,\mu,k(i,\mu)) = \k \}|.
\end{multline*}

\emph{Remark.}  Informally, $\tau_\j$ measures the number of times $\alpha(\j)$ appears in \eqref{xcdef}, and similarly for $\tau^\k$ and $\beta(\k)$.  Alternatively, one can think of $\tau_\j$ as counting the number of times the spider has the opportunity to ``enter'' and ``exit'' the row $s=\j$, and similarly $\tau^\k$ measures the number of opportunities to enter or exit the column $t=\k$.

By surjectivity we know that $\tau_\j, \tau^\k$ are strictly positive
for each $\j \in J$, $\k \in K$.  We also observe that $\tau_\j, \tau^\k$
must be even.  To see this, write
\[ 
\tau_\j = \sum_{(i,\mu,l) \in \L_U} \bigl(1_{s(i,\mu,l) = \j} +
1_{s(i,\mu,l-1) = \j}\bigr) + \sum_{(i,\mu) \in [j] \times \{0,1\}}
1_{s(i,\mu,k(i,\mu)) = \j}.
\]
Now observe that if $(i,\mu,l) \in \Gamma_+ \backslash \L_U$, then
$1_{s(i,\mu,l) = \j} = 1_{s(i,\mu,l-1) = \j}$.  Thus we have
\[
\tau_\j \hbox{ mod } 2 = \sum_{(i,\mu,l) \in \Gamma_+}
\bigl(1_{s(i,\mu,l) = \j} + 1_{s(i,\mu,l-1) = \j}\bigr) + \sum_{i,\mu
  \in [j] \times \{0,1\}} 1_{s(i,\mu,k(i,\mu)) = \j} \hbox{ mod } 2.
\]
But we can telescope this to
\[ 
\tau_\j \hbox{ mod } 2 = \sum_{i,\mu \in [j] \times \{0,1\}}
1_{s(i,\mu,0)= \j} \hbox{ mod } 2,
\]
and the right-hand side vanishes by \eqref{compat2}, showing that
$\tau_\j$ is even, and similarly $\tau^\k$ is even.

In this subsection, we dispose of the case of a low-multiplicity row,
or more precisely when $\tau_\j = 2$ for some $\j \in J$.  By
symmetry, the argument will also dispose of the case of a
low-multiplicity column, when $\tau^\k = 2$ for some $\k \in K$.

Suppose that $\tau_\j = 2$ for some $\j \in J$.  We first remark
that this implies that there does not exist $(i,\mu,l) \in \L_U$ with
$s(i,\mu,l) = s(i,\mu,l-1)=\j$.  We argue by contradiction and define
$l^\star$ to be the first integer larger than $l$ for which
$(i,\mu,l^\star) \in \L_U$. First, suppose that $l^\star$ does not
exist (which, for instance, happens when $l = k(i,\mu)$). Then in this
case it is not hard to see that $s(i,\mu,k(i,\mu))=\j$ since for
$(i,\mu,l') \notin \L_U$, we have $s(i,\mu,l') = s(i,\mu,l'-1)$. In
this case, $\tau_\j$ exceeds $2$.  Else, $l^\star$ does exist but then
$s(i,\mu,l^\star-1) = \j$ since $s(i,\mu,l') = s(i,\mu,l'-1)$ for $l <
l' < l^\star$. Again, $\tau_\j$ exceeds $2$ and this is a
contradiction.  Thus, if $(i,\mu,l) \in \L_U$ and $s(i,\mu,l) = \j$,
then $s(i,\mu,l-1) \neq \j$, and similarly if $(i,\mu,l) \in \L_U$ and
$s(i,\mu,l-1)=\j$, then $s(i,\mu,l) \neq \j$.

Now let us look at the terms in \eqref{xcdef} which involve
$\alpha(\j)$.  Since $\tau_\j = 2$, there are only two such terms, and
each of the terms are either of the form $\U_{\alpha(\j), \alpha(\j')}$
or $E_{\alpha(\j), \beta(\k)}$ for some $\k \in K$ or $\j' \in J
\backslash \{\j\}$.  We now have to divide into three subcases.

{\bf Subcase 1: \eqref{xcdef} contains two terms $\U_{\alpha(\j),
    \alpha(\j')}$, $\U_{\alpha(\j), \alpha(\j'')}$.} Figure
\ref{fig:fig6}(a) for a typical configuration in which this is the
case.
\begin{figure}
\centering
 \begin{tabular}[t]{cc}
\raisebox{1.31cm}{\includegraphics[width=8cm]{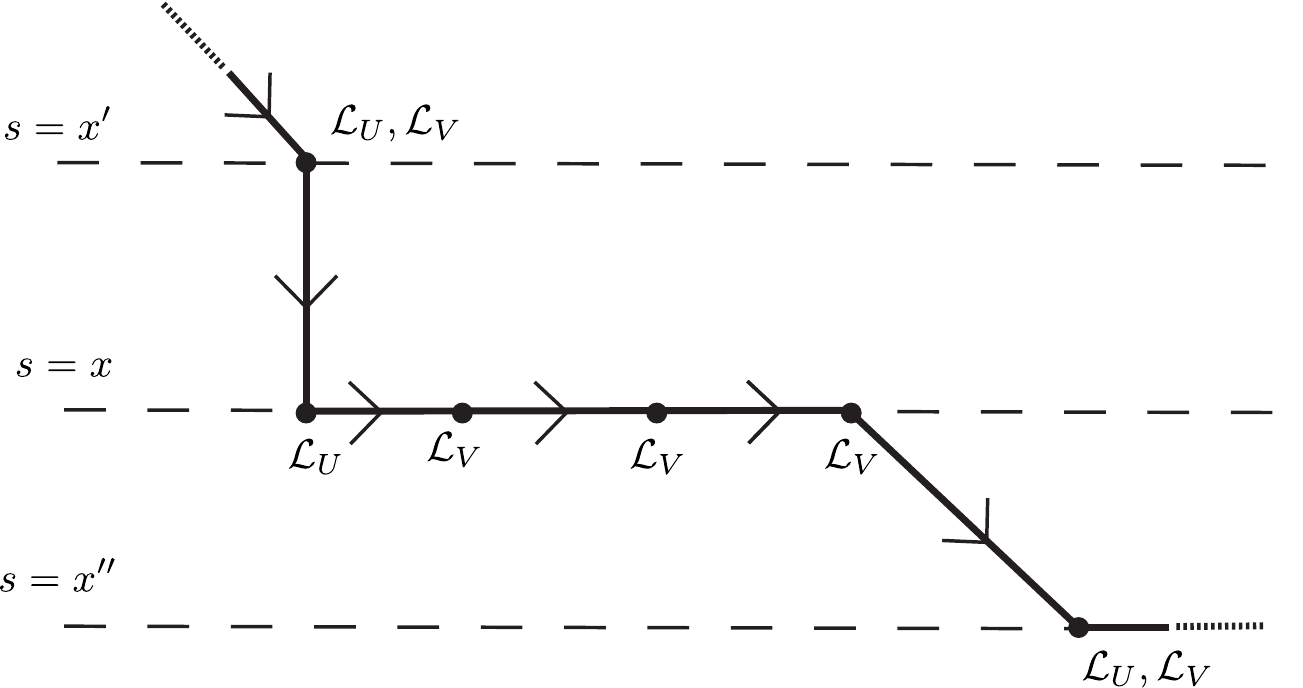}} &
   \includegraphics[width=8cm]{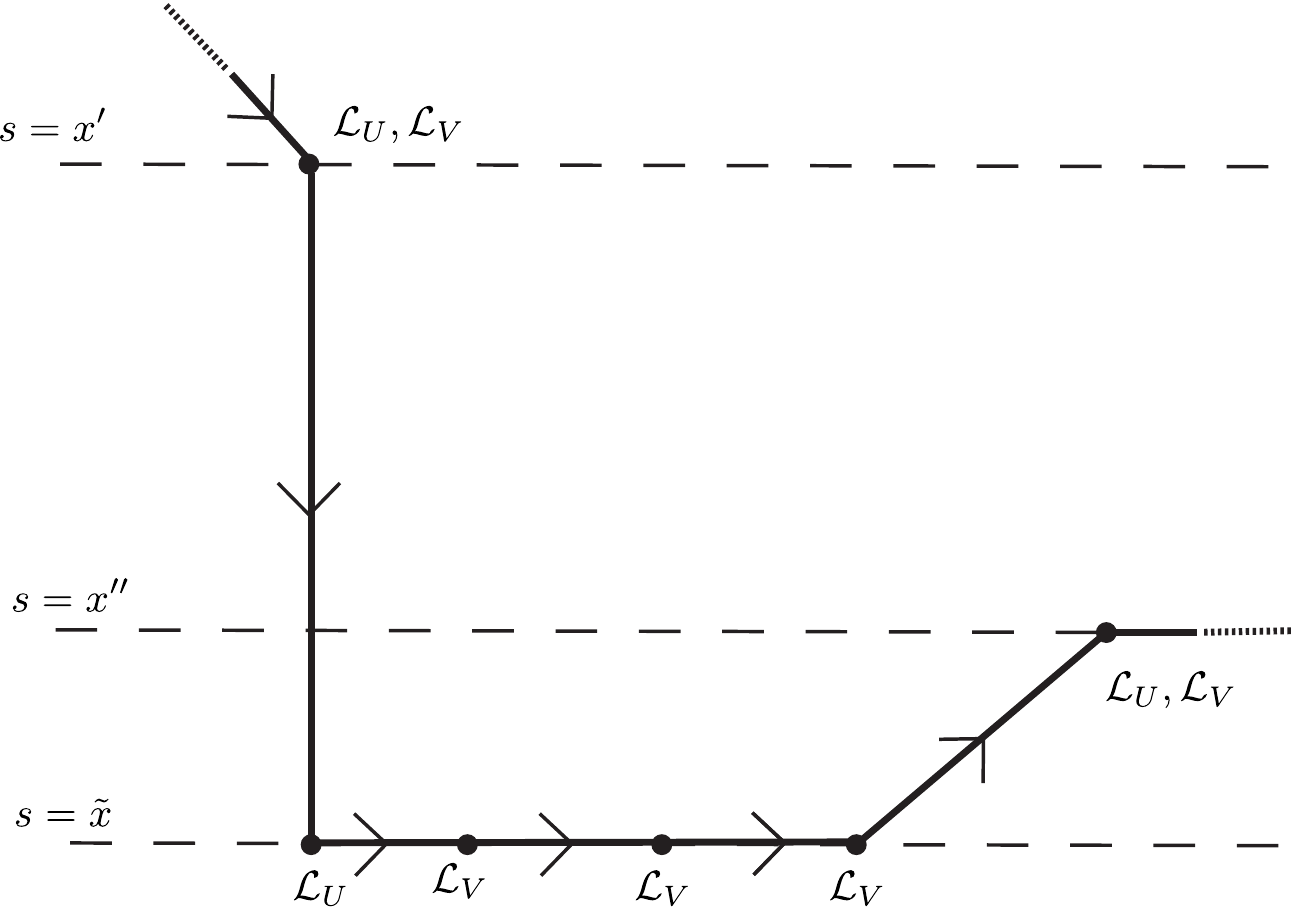}\\ (a) & (b)
     \end{tabular}
     \caption{\small In (a), a multiplicity 2 row is shown.  After
       using the identity \eqref{proj-id}, the contribution of this
       configuration is replaced with a number of terms one of which
       is shown in (b), in which the $\j$ row is deleted and replaced
       with another existing row $\tilde \j$.}
\label{fig:fig6}
\end{figure}

The idea is to use the identity \eqref{proj-id} to ``delete'' the row $\j$, thus reducing $|J|+|K|$ and allowing us to use an induction hypothesis.  Accordingly, let us define $\tilde J := J \backslash \{j\}$, and let $\tilde \alpha: \tilde J \to [n]$ be the restriction of $\alpha$ to $\tilde J$.  We also write $a := \alpha(\j)$ for the deleted row $a$.

We now isolate the two terms $\U_{\alpha(\j), \alpha(\j')}$, $\U_{\alpha(\j), \alpha(\j'')}$ from the rest of \eqref{xcdef}, expressing this sum as
$$
\sum_{\tilde \alpha,\beta} \ldots \Bigl[ \sum_{a \in [n] \backslash
    \tilde \alpha(\tilde J)} \U_{a, \tilde \alpha(\j')} \U_{a,
    \tilde \alpha(\j'')} \Bigr] $$ 
where the $\ldots$ denotes the product of all the terms in \eqref{xcdef} other than $\U_{\alpha(\j), \alpha(\j')}$
and $\U_{\alpha(\j), \alpha(\j'')}$, but with $\alpha$ replaced by
$\tilde \alpha$, and $\tilde \alpha, \beta$ ranging over injections from $\tilde J$ and $K$ to $[n]$ respectively.

From \eqref{proj-id} we have
$$ \sum_{a \in [n]} \U_{a, \tilde \alpha(\j')} \U_{a, \tilde \alpha(\j'')} = 
\left(1 - 2\rho\right) \U_{\tilde \alpha(\j'), \tilde \alpha(\j'')} -
\rho\left(1-\rho\right) 1_{\j'=\j''}$$ and thus
\begin{multline}\label{eq:term} 
\sum_{a \in [n] \backslash \tilde \alpha(\tilde J)} 
\U_{a, \tilde \alpha(\j')} \U_{a, \tilde \alpha(\j'')} 
= \\
\left(1 - 2\rho\right) \U_{\tilde \alpha(\j'), \tilde \alpha(\j'')}  -
\rho\left(1-\rho\right) 
1_{\j'=\j''} - \sum_{\tilde \j \in \tilde J}
\U_{\tilde \alpha(\tilde \j), \tilde \alpha(\j')} \U_{\tilde \alpha(\tilde \j),
  \tilde \alpha(\j'')}.
\end{multline} 
Consider the contribution of one of the final terms $\U_{\tilde
  \alpha(\tilde \j), \tilde \alpha(\j')} \U_{\tilde \alpha(\tilde \j),
  \tilde \alpha(\j'')}$ of \eqref{eq:term}.  This contribution is
equal to $X_{{\mathcal C}'}$, where ${\mathcal C}'$ is formed from
${\mathcal C}$ by replacing $J$ with $\tilde J$, and replacing every
occurrence of $\j$ in the range of $\alpha$ with $\tilde \j$, but
leaving all other components of ${\mathcal C}$ unchanged (see Figure
\ref{fig:fig6}(b)).  Observe that $|\Gamma'| = |\Gamma|$, $|\Omega'|
\leq |\Omega|$, $|J'|+|K'| < |J|+|K|$, so the contribution of these
terms is acceptable by the (first) induction hypothesis (for $C_0$
large enough).

Next, we consider the contribution of the term $\U_{\tilde
  \alpha(\j'), \tilde \alpha(\j'')}$ of \eqref{eq:term}.  This
contribution is equal to $X_{{\mathcal C}''}$, where ${\mathcal C}''$
is formed from ${\mathcal C}$ by replacing $J$ with $\tilde J$,
replacing every occurrence of $\j$ in the range of $\alpha$ with
$\j'$, and also deleting the one element $(i_0,\mu_0,l_0)$ in $\L_U$
from $\Gamma_+$ (relabeling the remaining triples $(i_0,\mu_0,l)$ for
$l_0 < l \leq k(i_0,\mu_0)$ by decrementing $l$ by $1$) that gave rise
to $\U_{\alpha(\j), \alpha(\j')}$, unless this element
$(i_0,\mu_0,l_0)$ also lies in $\L_V$, in which case one removes
$(i_0,\mu_0,l_0)$ from $\L_U$ but leaves it in $\L_V$ (and does not
relabel any further triples) (see Figure \ref{fig:fig8} for an example of the
former case, and \ref{fig:fig9} for the latter case).  One observes
that $|\Gamma''| \geq |\Gamma| - 1$, $|\Omega''| \leq |\Omega|-1$
(here we use \eqref{som}, \eqref{som2}), $|J''|+|K''| < |J| + |K|$,
and so this term also is controlled by the (first) induction
hypothesis (for $C_0$ large enough).
\begin{figure}
 \centering
\includegraphics[scale=.8]{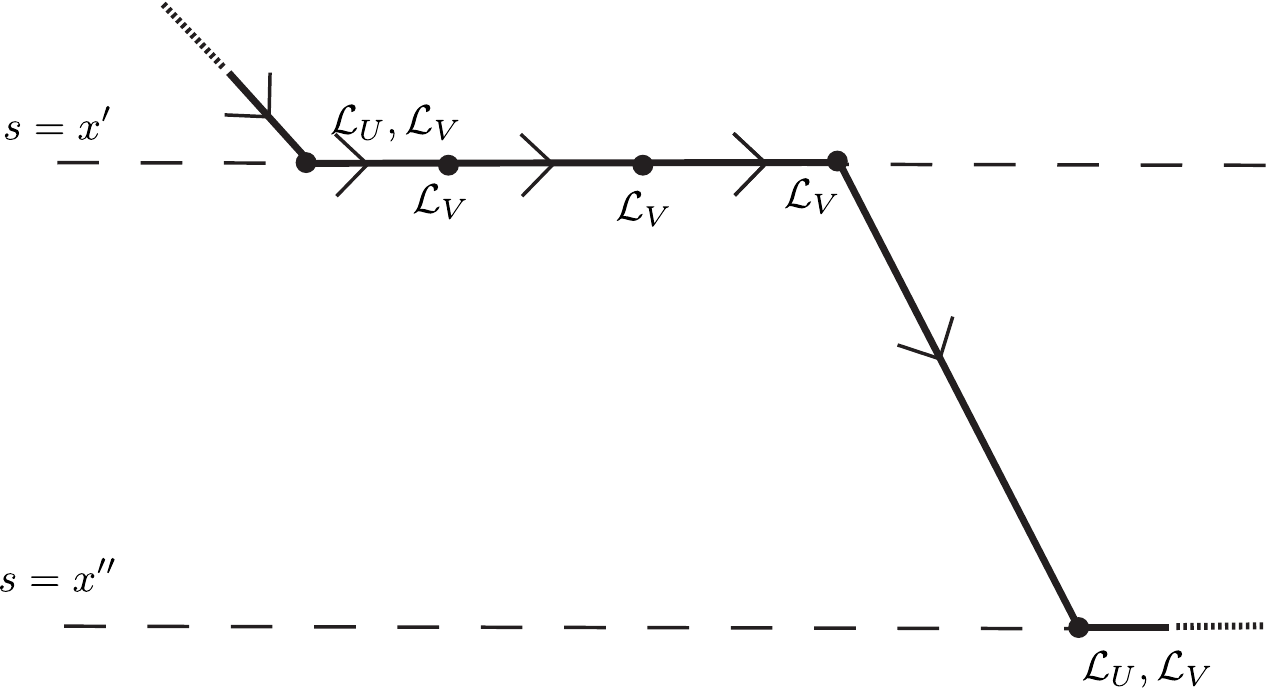}
\caption{\small Another term arising from the configuration in Figure 6(a), in which two $U$ factors have been collapsed into one.  Note the reduction in length of the configuration by one.}
\label{fig:fig8}
\end{figure}

\begin{figure}
 \centering
\includegraphics[scale=.8]{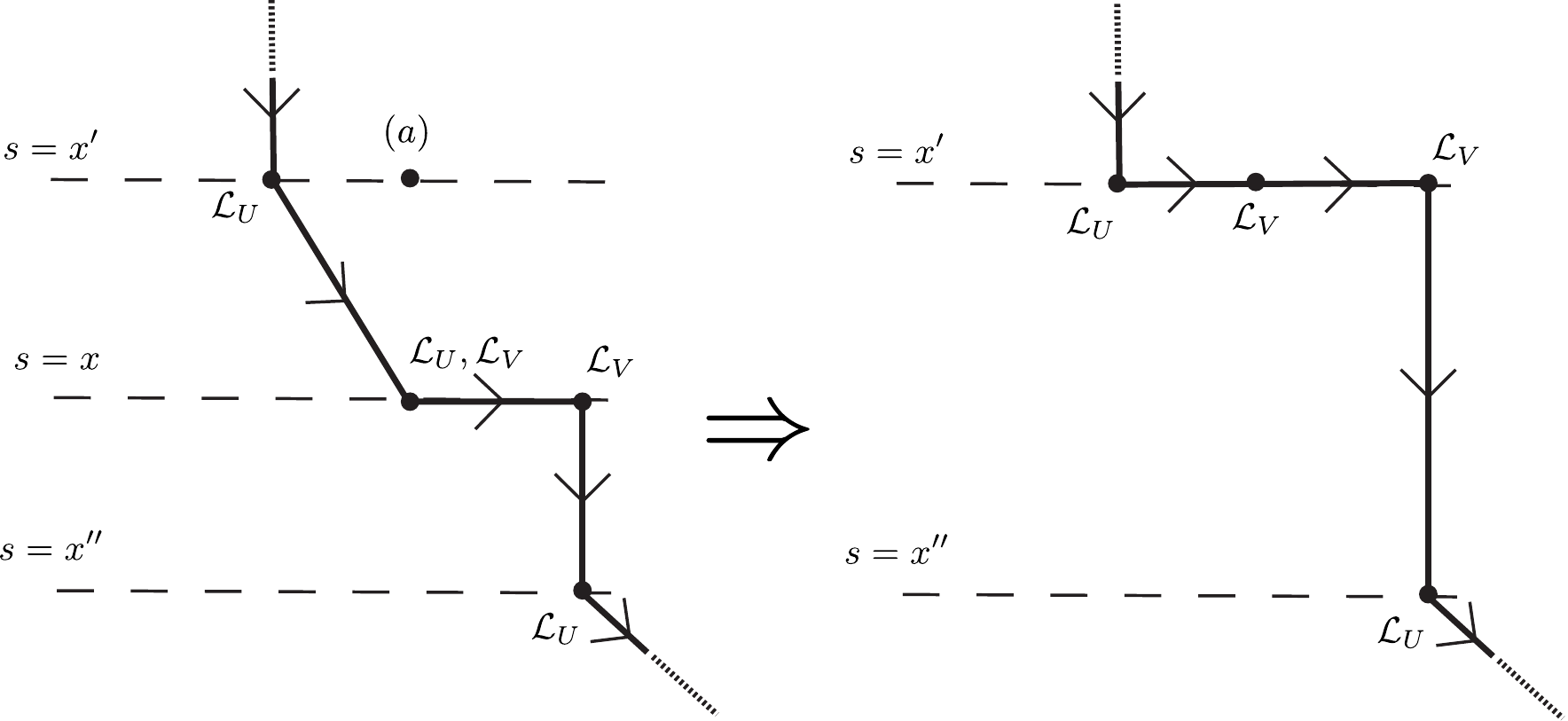}
\caption{\small Another collapse of two $U$ factors into one.  This time, the presence of the $\L_V$ label means that the length of the configuration remains unchanged; but the guarded nature of the collapsed non-rook move (evidenced here by the point (a)) ensures that the support $\Omega$ of the configuration shrinks by at least one instead.}
\label{fig:fig9}
\end{figure}

Finally, we consider the contribution of the term $\rho 1_{\j'=\j''}$
of \eqref{eq:term}, which of course is only non-trivial when
$\j'=\j''$.  This contribution is equal to $\rho X_{{\mathcal C}'''}$,
where ${\mathcal C}'''$ is formed from ${\mathcal C}$ by deleting $\j$
from $J$, replacing every occurrence of $\j$ in the range of $\alpha$
with $\j'=\j''$, and also deleting the two elements $(i_0,\mu_0,l_0)$,
$(i_1, \mu_1, l_1)$ of $\L_U$ from $\Gamma_+$ that gave rise to the
factors $\U_{\alpha(\j), \alpha(\j')}$, $\U_{\alpha(\j),
  \alpha(\j'')}$ in \eqref{xcdef}, unless these elements also lie in
$\L_V$, in which case one deletes them just from $\L_U$ but leaves
them in $\L_V$ and $\Gamma_+$; one also decrements the labels of any
subsequent $(i_0,\mu_0,l)$, $(i_1,\mu_1,l)$ accordingly (see Figure
\ref{fig:fig10}).  One observes that $|\Gamma'''|-|\Omega'''| \geq |\Gamma|-|\Omega| - 1$,
$|J'''|+|K'''| < |J| + |K|$, and
$|J'''| + |K'''| + |\L'''_U \cap \L'''_V| < |J| + |K| + |\L_U \cap
\L_V|$, and so this term also is controlled by the induction
hypothesis.  (Note we need to use the additional $\rho$ factor (which
is less than $\rmu/n$) in order to make up for a possible decrease in
$|\Gamma|-|\Omega|$ by $1$.)
\begin{figure}
 \centering
\includegraphics[scale=.7]{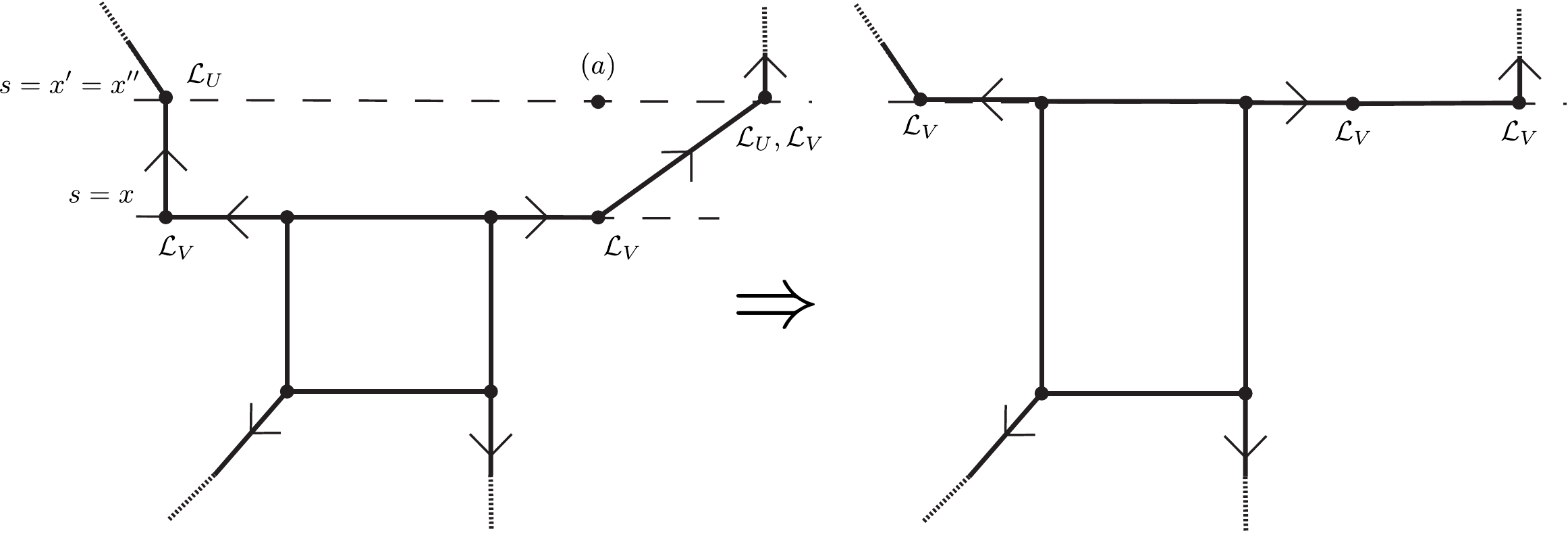}
\caption{\small A collapse of two $U$ factors (with identical indices) to a $\rho 1_{\j'=\j''}$ factor.  The point marked (a) indicates the guarded nature of the non-rook move on the right.  Note that $|\Gamma|-|\Omega|$ can decrease by at most $1$ (and will often stay constant or even increase).}
\label{fig:fig10}
\end{figure}

This deals with the case when there are two $\U$ terms involving
$\alpha(\j)$.  

{\bf Subcase 2: \eqref{xcdef} contains a term $\U_{\alpha(\j), \alpha(\j')}$ and a term $E_{\alpha(\j), \beta(\k)}$.}

A typical case here is depicted in Figure \ref{fig:fig11}.
\begin{figure}
 \centering
\includegraphics[scale=.8]{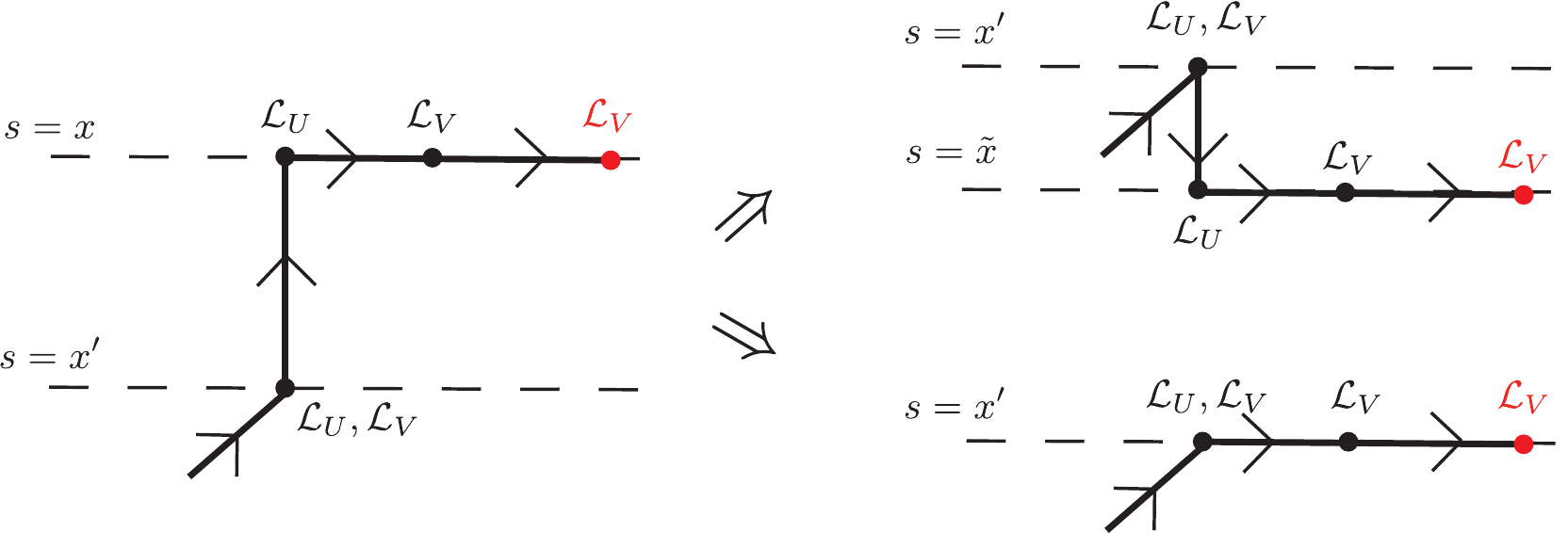}
\caption{\small A configuration involving a $U$ and $E$ factor on the left.  After applying \eqref{proj-i2d}, one gets some terms associated to configuations such as those in the upper right, in which the $\j$ row has been deleted and replaced with another existing row $\tilde \j$, plus a term coming from a configuration in the lower right, in which the $UE$ terms have been collapsed to a single $E$ term.}
\label{fig:fig11}
\end{figure}

The strategy here is similar to Subcase 1, except that one uses
\eqref{proj-i2d} instead of \eqref{proj-id}.  Letting $\tilde J,
\tilde \alpha, a$ be as before, we can express \eqref{xcdef} as
$$
\sum_{\tilde \alpha,\beta} \ldots \Bigl[ \sum_{a \in [n] \backslash
    \tilde \alpha(\tilde J)} \U_{a, \tilde \alpha(\j')} E_{a,
    \beta(\k)} \Bigr]$$ where the $\ldots$ denotes the product of all
the terms in \eqref{xcdef} other than $\U_{\alpha(\j), \alpha(\j')}$
and $E_{\alpha(\j), \beta(\k)}$, but with $\alpha$ replaced by $\tilde
\alpha$, and $\tilde \alpha, \beta$ ranging over injections from
$\tilde J$ and $K$ to $[n]$ respectively.

From \eqref{proj-i2d} we have
$$
  \sum_{a \in [n]} \U_{a,\tilde \alpha(\j')} E_{a,\beta(\k)} = \left(1-\rho\right) E_{\tilde \alpha(\j'),\beta(\k)} $$
and hence  
\begin{equation}\label{eq:term2}
  \sum_{a \in [n] \backslash \tilde \alpha(\tilde J)} \U_{a,\tilde \alpha(\j')} E_{a,\beta(\k)} = 
  \left(1-\rho\right) E_{\tilde \alpha(\j'),\beta(\k)} 
  - \sum_{\tilde \j \in \tilde J} \U_{\tilde \alpha(\tilde j),\tilde \alpha(\j')} E_{\tilde \alpha(\tilde j),\beta(\k)} 
\end{equation}
The contribution of the final terms in \eqref{eq:term2} are treated in
exactly the same way as the final terms in \eqref{eq:term}, and the
main term $E_{\tilde \alpha(\j'),\beta(\k)}$ is treated in exactly the
same way as the term $\U_{\tilde \alpha(\j'), \tilde \alpha(\j'')}$ in
\eqref{eq:term}.  This concludes the treatment of the case when there
is one $\U$ term and one $E$ term involving $\alpha(\j)$.

{\bf Subcase 3: \eqref{xcdef} contains two terms $E_{\alpha(\j),
    \beta(\k)}$, $E_{\alpha(\j), \beta(\k')}$.}

A typical case here is depicted in \ref{fig:fig12}.  The strategy here
is similar to that in the previous two subcases, but now one uses
\eqref{proj-i3d} rather than \eqref{proj-id}.  The combinatorics of
the situation are, however, slightly different.
\begin{figure}
 \centering
\includegraphics[scale=.8]{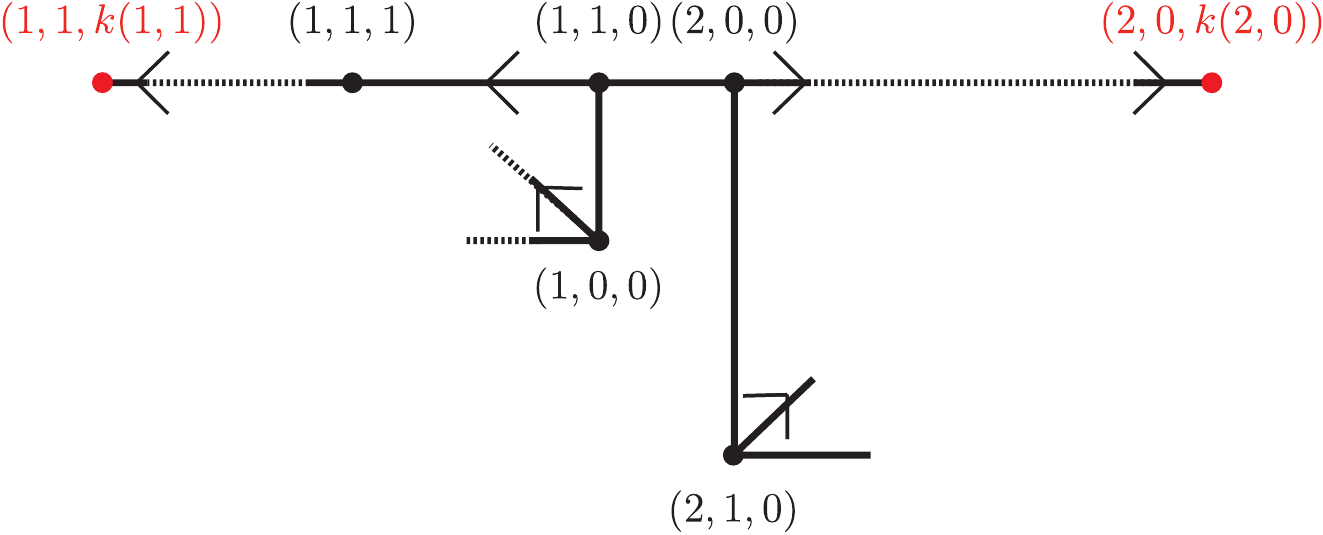}
\caption{\small A multiplicity 2 row with two Es, which are necessarily at the ends of two adjacent legs of the spider.  Here we use $(i,\mu,l)$ as shorthand for $(s_{i,\mu,l},t_{i,\mu,l})$.}
\label{fig:fig12}
\end{figure}

By considering the path from $E_{\alpha(\j), \beta(\k)}$ to
$E_{\alpha(\j), \beta(\k')}$ along the spider, we see (from the
hypothesis $\tau_\j=2$) that this path must be completely horizontal
(with no elements of $\L_U$ present), and the two legs of the spider
that give rise to $E_{\alpha(\j), \beta(\k)}$, $E_{\alpha(\j),
  \beta(\k')}$ at their tips must be adjacent, with their bases
connected by a horizontal line segment.  In other words, up to
interchange of $\k$ and $\k'$, and cyclic permutation of the $[j]$
indices, we may assume that
$$ (\j, \k) = (s(1,1,k(i,1)), t(1,1,k(i,1))); \quad (\j, \k') = (s(2,0,k(2,0)), t(2,0,k(2,0)))$$
with
$$ s(1,1,l) = s(2,0,l') = \j$$
for all $0 \leq l \leq k(1,1)$ and $0 \leq l' \leq k(2,0)$, where the
index $2$ is understood to be identified with $1$ in the degenerate
case $j=1$.  Also, $\L_U$ cannot contain any triple of the form
$(1,1,l)$ for $l \in [k(1,1)]$ or $(2,0,l')$ for $l' \in [k(2,0)]$
(and so all these triples lie in $\L_V$ instead).

For technical reasons we need to deal with the degenerate case $j=1$
separately.  In this case, $s$ is identically equal to $\j$, and so
\eqref{xcdef} simplifies to
$$ \sum_\beta \Bigl[\sum_{a \in [n]}  E_{a,\beta(\k)} E_{a,\beta(\k')}\Bigr]
\prod_{\mu=0}^1 \prod_{l=0}^{k(1,\mu)}
\V_{\beta(t(i,\mu,l-1)),\beta(t(i,\mu,l))}.$$ In the extreme
degenerate case when $k(1,0)=k(1,1)=0$, the sum is just $\sum_{a,b \in
  [n]} E_{ab}^2 = r$, which is acceptable, so we may assume that
$k(1,0)+k(1,1) > 0$.  We may assume that the column multiplicity
$\tau^{\tilde \k} \geq 4$ for every $\tilde \k \in K$, since otherwise
we could use (the reflected form of) one of the previous two subcases
to conclude \eqref{xc} from the induction hypothesis.  (Note when
$\k=\k'$, it is not possible for $\tau^{\k}$ to equal $2$ since
$k(1,0)+k(1,1) > 0$.)

Using \eqref{proj-i3d} followed by \eqref{eq:block} we have
\[
 \Bigl|\sum_{a \in [n]}  E_{a,\beta(\k)} E_{a,\beta(\k')}\Bigr| \lesssim \sqrt{\rmu}/n + 1_{\k=\k'} r/n \lesssim \rmu/n
\]
and so by \eqref{eq:block2} we can bound
$$ |X_{\mathcal C}| \lesssim \sum_\beta (\rmu/n) (\sqrt{\rmu}/n)^{k(1,0)+k(1,1)}.$$
The number of possible $\beta$ is at most $n^{|K|}$, so to establish
\eqref{xc} in this case it suffices to show that
$$ n^{|K|} (\rmu/n) (\sqrt{\rmu}/n)^{k(1,0)+k(1,1)} \lesssim
 (\rmu/n)^{|\Gamma| - |\Omega|} n.$$
Observe that in this degenerate case $j=1$, we have $|\Omega|=|K|$ and $|\Gamma| = k(1,0)+k(1,1)+2$.  One then checks that the claim is true when $\rmu=1$, so it suffices to check that the other extreme case $\rmu=n$, i.e.
$$ |K| - \frac{1}{2} (k(1,0)+k(1,1)) \leq 1.$$
But as $\tau^\k \geq 4$ for all $k$, every element in $K$ must be visited at least twice, and the claim follows.

Now we deal with the non-degenerate case $j>1$.  Letting $\tilde J, \tilde \alpha, a$ be as in previous subcases, we can express \eqref{xcdef} as
\begin{equation}\label{xcdef2}
\sum_{\tilde \alpha,\beta} \ldots \Bigl[ \sum_{a \in [n] \backslash
    \tilde \alpha(\tilde J)} E_{a, \beta(\k)} E_{a,
    \beta(\k')} \Bigr]
\end{equation}
where the $\ldots$ denotes the product of all the terms in \eqref{xcdef} other than $E_{\alpha(\j), \beta(\k)}$
and $E_{\alpha(\j), \beta(\k')}$, but with $\alpha$ replaced by $\tilde \alpha$, and $\tilde \alpha, \beta$ ranging over injections from $\tilde J$ and $K$ to $[n]$ respectively.

From \eqref{proj-i3d}, we have
$$
\sum_{a \in [n]} E_{a, \beta(\k)} E_{a,
    \beta(\k')} = \V_{\beta(\k), \beta(\k')} + \rho 1_{\k=\k'}$$
and hence 
\begin{equation}\label{eq:term3}
\sum_{a \in [n] \backslash \tilde \alpha(\tilde J)} E_{a, \beta(\k)} E_{a,
    \beta(\k')} = \V_{\beta(\k), \beta(\k')} + \rho 1_{\k=\k'} - 
       \sum_{\tilde \j \in \tilde J} E_{\tilde \alpha(\tilde j),\beta(\k)} E_{\tilde \alpha(\tilde j),\beta(\k')}.
\end{equation}
The final terms are treated here in exactly the same way as the final
terms in \eqref{eq:term} or \eqref{eq:term2}.  Now we consider the
main term $\V_{\beta(\k), \beta(\k')}$.  The contribution of this term
will be of the form $X_{{\mathcal C}'}$, where the configuration
${\mathcal C}'$ is formed from ${\mathcal C}$ by ``detaching'' the
two legs $(i,\mu) = (1,1), (2,0)$ from the spider, ``gluing them
together'' at the tips using the $\V_{\beta(\k),\beta(\k')}$ term, and
then ``inserting'' those two legs into the base of the $(i,\mu) =
(1,0)$ leg.  To explain this procedure more formally, observe that the
$\ldots$ term in \eqref{xcdef2} can be expanded further (isolating out
the terms coming from $(i,\mu) = (1,1), (2,0)$) as
\[
\Bigl[\prod_{l=1}^{k(2,0)}
\V_{\beta(t(2,0,l-1)),\beta(t(2,0,l))}\Bigr] \,
\Bigl[\prod_{l=k(1,1)}^1 \V_{\beta(s(1,1,l-1)),\beta(s(1,1,l))}\Bigr]
\ldots
\]
where the $\ldots$ now denote all the terms that do not come from
$(i,\mu)=(1,1)$ or $(i,\mu)=(2,0)$, and we have reversed the order of
the second product for reasons that will be clearer later.  Recalling
that $\k = t(1,1,k(1,1))$ and $\k' = t(2,0,k(2,0))$, we see that the
contribution of the first term of \eqref{eq:term3} to \eqref{xcdef2}
is now of the form
\[
\sum_{\tilde \alpha, \beta} \Bigl[\prod_{l=1}^{k(2,0)}
\V_{\beta(t(2,0,l-1)),\beta(t(2,0,l))}\Bigr]
V_{\beta(t(2,0,k(2,0))),\beta(t(1,1,k(1,1)))} \Bigl[\prod_{l=k(1,1)}^1
\V_{\beta(s(1,1,l-1)),\beta(s(1,1,l))}\Bigr] \ldots.
\]
But this expression is simply $X_{{\mathcal C}'}$, where the
configuration of ${\mathcal C}'$ is formed from ${\mathcal C}$ in the
following fashion:
\begin{itemize}
\item $j'$ is equal to $j-1$, $J'$ is equal to $\tilde J$, and $K'$ is
  equal to $K$.
\item $k'(1,0) := k(2,0) + 1 + k(1,1) + k(1,0)$, and $k'(i,\mu) :=
  k(i+1,\mu)$ for $(i,\mu) \neq (1,0)$.
\item The path $\{(s'(1,0,l),t'(1,0,l)) : l = 0, \ldots, k'(1,0)\}$ is
  formed by concatenating the path $\{(s(1,0,0), t(2,0,l)) : l = 0,
  \ldots, k(2,0)\}$, with an edge from $(s(1,0,0),t(2,0,k(2,0)))$ to
  $(s(1,0,0),t(1,1,k(1,1)))$, 
% EJC: we need to be sure that we want to traverse the path in reverse order
  with the path $\{(s(1,0,0), t(1,1,l)) : l = k(1,1), \ldots, 0\}$,
  with the path $\{(s(1,0,l),t(1,0,l)) : l =0, \ldots, k(1,0)\}$.
\item For any $(i,\mu) \neq (i,0)$, the path
  $\{(s'(i,\mu,l),t'(i,\mu,l)) : l=0, \ldots k'(i,\mu)\}$ is equal to
  the path $\{(s(i,\mu,l),t(i+1,\mu,l)) : l=0, \ldots, k(i+1,\mu)\}$.
\item We have
\begin{align*}
\L'_U &:= \{ (1,0,k(2,0)+1+k(1,1)+l): (1,0,l) \in \L_U \} \\
&\quad \cup \{ (i,\mu,l): (i+1,\mu,l) \in \L_U \} 
\end{align*}
and
\begin{align*}
\L'_V &:= \{ (1,0,k(2,0)+1+k(1,1)+l): (1,0,l) \in \L_V \} \\
&\quad \cup \{ (i,\mu,l): (i+1,\mu,l) \in \L_V \}  \\
&\quad \cup \{ (1,0,1),\ldots,(1,0,k(2,0)+1+k(1,1)) \}.
\end{align*}
\end{itemize}
This construction is represented in Figure \ref{fig:fig13}.
\begin{figure}
 \centering
\includegraphics[scale=.8]{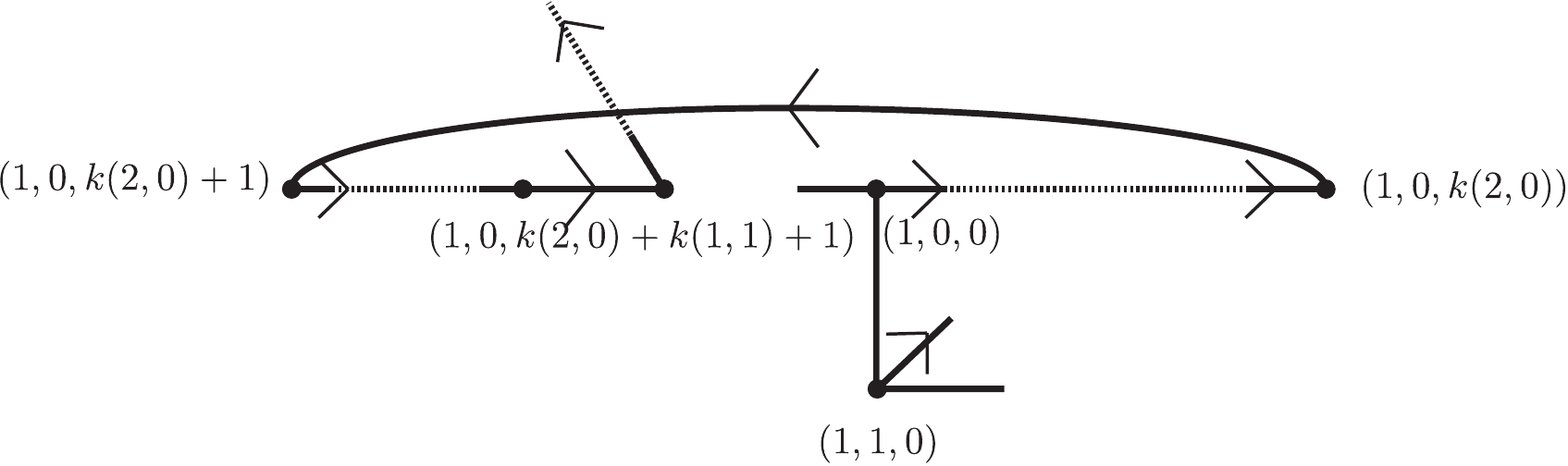}
\caption{\small The configuation from Figure 11 after collapsing the two $E$'s to a $V$, which is represented by a long curved line rather than a straight line for clarity.  Note the substantial relabeling of vertices.}
\label{fig:fig13}
\end{figure}

One can check that this is indeed a configuration.  One has $|J'|+|K'| < |J| + |K|$, $|\Gamma'| = |\Gamma|-1$, and $|\Omega'| \leq |\Omega|-1$, and so this contribution to \eqref{xc} is acceptable from the (first) induction hypothesis.

This handles the contribution of the $\V_{\beta(\k), \beta(\k')}$ term.  The $\rho 1_{\k=\k'}$ term is treated similarly, except that there is no edge between the points $(s(1,0,0),t(2,0,k(2,0)))$ and $(s(1,0,0),t(1,1,k(1,1)))$ (which are now equal, since $\k=\k'$).  This reduces the analogue of $|\Gamma'|$ to $|\Gamma|-2$, but the additional factor of $\rho$ (which is at most $\rmu/n$) compensates for this.  We omit the details.  This concludes the treatment of the third subcase.

\subsubsection{Third case: High multiplicity rows and columns}

After eliminating all of the previous cases, we may now may assume (since $\tau_\j$ is
even) that
\begin{equation}\label{rhojer}
 \tau_\j \geq 4 \hbox{ for all } \j \in J
\end{equation}
and similarly we may assume that
\begin{equation}\label{rhoker}
 \tau^\k \geq 4 \hbox{ for all } \k \in K.
\end{equation}

We have now made the maximum use we can of the cancellation identities \eqref{proj-id}, \eqref{proj-i2d}, \eqref{proj-i3d}, and have no further use for them. Instead, we shall now place absolute values everywhere and estimate $X_{\mathcal C}$ using \eqref{eab}, \eqref{eq:block}, \eqref{eq:block2}, obtaining the bound 
$$ |X_{\mathcal C}| \leq n^{|J| + |K|} O(\sqrt{\rmu}/n)^{|\Gamma| + |\L_U \cap \L_V|}.$$
Comparing this with \eqref{xc}, we see that it will suffice (by taking $C_0$ large enough) to show that
$$
n^{|J| + |K|} (\sqrt{\rmu}/n)^{|\Gamma| + |\L_U \cap \L_V|} \leq (\rmu/n)^{|\Gamma| - |\Omega|} n.
$$
Using the extreme cases $\rmu=1$ and $\rmu=n$ as test cases, we see that our task is to show that
\begin{equation}\label{joke}
|J| + |K| \leq |\L_U \cap \L_V| + |\Omega| + 1 
\end{equation}
and
\begin{equation}\label{joke-2}
|J| + |K| \leq \frac{1}{2} (|\Gamma| + |\L_U \cap \L_V|) + 1.
\end{equation}
The first inequality \eqref{joke} is proven by Lemma \ref{teo:Q}. The
second is a consequence of the double counting identity
$$ 4(|J| + |K|) \le \sum_{\j \in J} \tau_\j + \sum_{\k \in K} \tau^\k = 2 |\Gamma| + 2 |\L_U \cap \L_V|$$
where the inequality follows from \eqref{rhojer}--\eqref{rhoker} (and
we don't even need the $+1$ in this case).

\section{Discussion}
\label{sec:discussion}

Interestingly, there is an emerging literature on the development of
efficient algorithms for solving the nuclear-norm minimization problem
\eqref{eq:cvx2} \cite{Cai08,Ma08}. For instance, in \cite{Cai08}, the
authors show that the singular-value thresholding algorithm can solve
certain problem instances in which the matrix has close to a billion
unknown entries in a matter of minutes on a personal computer. Hence,
the near-optimal sampling results introduced in this paper are
practical and, therefore, should be of consequence to practitioners
interested in recovering low-rank matrices from just a few entries.

To be broadly applicable, however, the matrix completion problem needs
to be robust vis a vis noise. That is, if one is given a few entries
of a low-rank matrix contaminated with a small amount of noise, one
would like to be able to guess the missing entries, perhaps not
exactly, but accurately. We actually believe that the methods and
results developed in this paper are amenable to the study of ``the
noisy matrix completion problem'' and hope to report on our progress
in a later paper.

\section{Appendix}

\subsection{Equivalence between the uniform and Bernoulli models}
\label{sec:Ber}

\subsubsection{Lower bounds}

For the sake of completeness, we explain how Theorem \ref{teo:lower}
implies nearly identical results for the uniform model. We have
established the lower bound by showing that there are two fixed
matrices $M \neq M'$ for which $\PO(M) = \PO(M')$ with probability
greater than $\delta$ unless $m$ obeys the bound \eqref{eq:lower}.
Suppose that $\Omega$ is sampled according to the Bernoulli model with
$p' \ge m/n^2$ and let $F$ be the event $\{\PO(M) = \PO(M')\}$. Then
\begin{align*}
  \P(F) & = \sum_{k = 0}^{n^2} \P(F \, | \, |\Omega| = k) \P(|\Omega| = k)\\
  & \le \sum_{k = 0}^{m-1} \P(|\Omega| = k) + \sum_{k = m}^{n^2}
  \P(F \, | \, |\Omega| = k) \P(|\Omega| = k)\\ 
& \le \P(|\Omega| < m)
  + \P(F \, | \, |\Omega| = m), 
\end{align*}
where we have used the fact that for $k \ge m$, $\P(F \, | \, |\Omega|
= m) \ge \P(F \, | \, |\Omega| = k)$. The conditional distribution of
$\Omega$ given its cardinality is uniform and, therefore,
\[
\P_{\text{Unif}(m)} (F) \ge \P_{\text{Ber}(p')}(F) -
\P_{\text{Ber}(p')}(|\Omega| < m),
\]
in which $\P_{\text{Unif}(m)}$ and $\P_{\text{Ber}(p')}$ are
probabilities calculated under the uniform and Bernoulli models.  If
we choose $p' = 2m/n^2$, we have that $\P_{\text{Ber}(p')}(|\Omega| <
m) \le \delta/2$ provided $\delta$ is not ridiculously small. Thus if
$\P_{\text{Ber}(p')}(F) \ge \delta$, we have
\[
\P_{\text{Unif}(m)} (F) \ge \delta/2.
\]
In short, we get a lower bound for the uniform model by applying the
bound for the Bernoulli model with a value of $p = 2m^2/n$ and a
probability of failure equal to $2\delta$.

\subsubsection{Upper bounds}

We prove the claim stated at the onset of Section \ref{sec:strategy}
which states that the probability of failure under the uniform model
is at most twice that under the Bernoulli model. Let $F$ be the event
that the recovery via \eqref{eq:cvx2} is not exact. With our earlier
notations,
\begin{align*}
  \P_{\text{Ber}(p)}(F) & = \sum_{k = 0}^{n^2} \P_{\text{Ber}(p)}(F \, | \, |\Omega| = k) \P_{\text{Ber}(p)}(|\Omega| = k)\\
  & \ge \sum_{k = 0}^{m}
  \P_{\text{Ber}(p)}(F \, | \, |\Omega| = k) \P_{\text{Ber}(p)}(|\Omega| = k)\\
  & \ge \P_{\text{Ber}(p)}(F \, | \, |\Omega| = m)  \sum_{k = 0}^{m} \P_{\text{Ber}(p)}(|\Omega| = k)\\
  & \ge \frac{1}{2} \P_{\text{Unif}(m)}(F),
\end{align*}
where we have used $\P_{\text{Ber}(p)}(F \, | \, |\Omega| = k) \ge
\P_{\text{Ber}(p)}(F \, | \, |\Omega| = m)$ for $k \le m$ (the
probability of failure is nonincreasing in the size of the observed
set), and $\P_{\text{Ber}(p)}(|\Omega| \le m) \ge 1/2$.

\subsection{Proof of Lemma \ref{teo:equiv}}
\label{sec:QT}

In this section, we will make frequent use of \eqref{eq:QOsq} and of
the similar identity
\begin{equation}
  \label{eq:QTsq}
Q^2_T = (1-2\rho') \QT + \rho'(1-\rho') \OpId, 
\end{equation}
which is obtained by squaring both sides of \eqref{eq:QT} together
with $\PT^2 = \PT$. We begin with two lemmas. 

\begin{lemma}
  \label{teo:QTPT}
For each $k \ge 0$, we have 
\begin{multline}
  \label{eq:QTPT}
  (\QO \PT)^k \QO =  \sum_{j = 0}^{k} \alpha^{(k)}_j  (\QO \QT)^j \QO +  \sum_{j = 0}^{k-1}  \beta^{(k)}_j  (\QO \QT)^j\\
  + \sum_{j = 0}^{k-2} \gamma^{(k)}_j \QT (\QO \QT)^j \QO +
  \sum_{j = 0}^{k-3} \delta^{(k)}_j \QT (\QO \QT)^j,
\end{multline}
where starting from $\alpha^{(0)}_0 = 1$, the sequences
$\{\alpha^{(k)}\}$, $\{\beta^{(k)}\}$, $\{\gamma^{(k)}\}$ and
$\{\delta^{(k)}\}$ are inductively defined via
\begin{align*}
  \alpha_{j}^{(k+1)} & = [\alpha_{j-1}^{(k)} + (1-\rho') \gamma_{j-1}^{(k)}] + \frac{\rho'(1-2p)}{p} [\alpha_{j}^{(k)} + (1-\rho')\gamma_{j}^{(k)}] + 1_{j = 0} \rho'[\beta_{0}^{(k)} + (1-\rho')\delta_{0}^{(k)}]\\
 \beta_{j}^{(k+1)} & = [\beta_{j-1}^{(k)} + (1-\rho') \delta_{j-1}^{(k)}]
 + \frac{\rho'(1-2p)}{p} [\beta_{j}^{(k)} +
 (1-\rho')\delta_{j}^{(k)}] 1_{j > 0} + 1_{j = 0} \rho' \frac{1-p}{p} [\alpha_{0}^{(k)} +
 (1-\rho')\gamma_{0}^{(k)}]
\end{align*}
and
\begin{align*}
  \gamma_{j}^{(k+1)} & = \frac{\rho'(1-p)}{p} [\alpha_{j+1}^{(k)} + (1-\rho')\gamma_{j+1}^{(k)}]\\
  \delta_{j}^{(k+1)} & = \frac{\rho'(1-p)}{p} [\beta_{j+1}^{(k)} +
  (1-\rho')\delta_{j+1}^{(k)}].
\end{align*}
In the above recurrence relations, we adopt the convention that
$\alpha_{j}^{(k)} = 0$ whenever $j$ is not in the range specified by
\eqref{eq:QTPT}, and similarly for $\beta_{j}^{(k)}$,
$\gamma_{j}^{(k)}$ and $\delta_{j}^{(k)}$.
\end{lemma}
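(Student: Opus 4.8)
The plan is to prove the identity \eqref{eq:QTPT} by induction on $k$. The base case $k=0$ is trivial: $(\QO\PT)^0\QO = \QO = \alpha^{(0)}_0(\QO\QT)^0\QO$ with $\alpha^{(0)}_0 = 1$, all other coefficients being zero because their index ranges are empty. For the inductive step I would write $(\QO\PT)^{k+1}\QO = \QO\PT\,[(\QO\PT)^k\QO]$, substitute the inductive hypothesis \eqref{eq:QTPT}, reduce the result back into the same four canonical word forms $(\QO\QT)^j\QO$, $(\QO\QT)^j$, $\QT(\QO\QT)^j\QO$, $\QT(\QO\QT)^j$, and then read off the coefficients.

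The key simplification I would exploit first is the identity $\PT\QT = (1-\rho')\PT$, which follows from $\PT^2 = \PT$ together with \eqref{eq:QT} (equivalently, from \eqref{eq:QTsq}): indeed $\PT\QT = \QT^2 + \rho'\QT = (1-\rho')\QT + \rho'(1-\rho')\OpId = (1-\rho')\PT$. Since $\QO\PT\cdot\QT(\QO\QT)^j\QO = (1-\rho')\,\QO\PT(\QO\QT)^j\QO$ (and similarly with the trailing $\QO$ deleted), this lets me fold the $\QT$-headed terms into the others. Setting $\tilde\alpha^{(k)}_j := \alpha^{(k)}_j + (1-\rho')\gamma^{(k)}_j$ and $\tilde\beta^{(k)}_j := \beta^{(k)}_j + (1-\rho')\delta^{(k)}_j$, the inductive hypothesis collapses to
\[
(\QO\PT)^{k+1}\QO = \sum_j \tilde\alpha^{(k)}_j\,\QO\PT(\QO\QT)^j\QO \;+\; \sum_j \tilde\beta^{(k)}_j\,\QO\PT(\QO\QT)^j .
\]
That these combinations $\alpha + (1-\rho')\gamma$ and $\beta + (1-\rho')\delta$ are exactly the bracketed expressions appearing in the stated recurrences is a reassuring sign that this is the right manipulation.

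Next I would re-expand $\QO\PT = \QO\QT + \rho'\QO$ in each summand. The $\QO\QT$ piece merely raises the exponent ($\QO\QT(\QO\QT)^j\QO = (\QO\QT)^{j+1}\QO$, and so on), while the $\rho'\QO$ piece produces the non-alternating words $\QO(\QO\QT)^j\QO$ and $\QO(\QO\QT)^j$. For $j\ge 1$ I would isolate the leading $\QO^2$ and rewrite it by \eqref{eq:QOsq}, so that $\QO(\QO\QT)^j\QO = \frac{1-2p}{p}(\QO\QT)^j\QO + \frac{1-p}{p}\QT(\QO\QT)^{j-1}\QO$ (and similarly without the trailing $\QO$); for $j=0$ one has the boundary cases $\QO\cdot\QO = \frac{1-2p}{p}\QO + \frac{1-p}{p}\OpId$ and $\QO\cdot\OpId = \QO$. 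Collecting the coefficient of each of the four word types — with $\tilde\alpha^{(k)}$, $\tilde\beta^{(k)}$ playing the role of the brackets — reproduces precisely the four displayed recurrences; in particular the $1_{j=0}$ terms in the $\alpha$- and $\beta$-recurrences arise exactly from the two $j=0$ boundary cases (the $\rho'\QO$ part of a $\tilde\beta$-word at $j=0$ landing back on an $\alpha$-word at $j=0$, and the $\frac{1-p}{p}\OpId$ piece of the $\QO^2$ reduction). A final routine check that the index ranges propagate ($\alpha^{(k)}$ over $0\le j\le k$, $\beta^{(k)}$ over $0\le j\le k-1$, $\gamma^{(k)}$ over $0\le j\le k-2$, $\delta^{(k)}$ over $0\le j\le k-3$) closes the induction.

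I expect the main obstacle to be purely combinatorial bookkeeping: there are many terms, several of which collapse onto the same canonical word, and one has to be scrupulous about the $j=0$ endpoints (which generate the $1_{j=0}$ terms) and about verifying that no term drifts outside its advertised range. The one genuinely useful idea is the collapse $\PT\QT = (1-\rho')\PT$, which is what keeps the number of term types bounded at four rather than letting it grow with $k$, and thereby keeps the recursion closed on just the four families $\alpha,\beta,\gamma,\delta$.
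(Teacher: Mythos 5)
Your proof is correct and follows essentially the same inductive strategy as the paper: expand $(\QO\PT)^{k+1}\QO$, substitute the inductive hypothesis, and reduce back to the four canonical word types using \eqref{eq:QOsq} together with the projection property of $\QT$ (the paper invokes it as \eqref{eq:QTsq}, you repackage it as $\PT\QT = (1-\rho')\PT$, which are equivalent). Your observation that this packaging collapses the four word families into the two combinations $\tilde\alpha_j = \alpha_j + (1-\rho')\gamma_j$ and $\tilde\beta_j = \beta_j + (1-\rho')\delta_j$ before re-expanding $\QO\PT$ is a small but genuine tidying of the bookkeeping, and it makes immediately visible why exactly these bracketed combinations appear throughout the stated recurrences.
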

\begin{proof}
  The proof operates by induction. The claim for $k = 0$ is
  straightforward. To compute the coefficient sequences of $(\QO
  \PT)^{k+1} \QO$ from those of $(\QO \PT)^{k} \QO$,
  use the identity $\PT = \QT + \rho' \OpId$ to decompose $(\QO
  \PT)^{k+1} \QO$ as follows:
\[
(\QO \PT)^{k+1}\QO = \QO \QT (\QO \PT)^k \QO + \rho' \QO (\QO \PT)^k \QO.  
\]
Then expanding $(\QO \PT)^k \QO$ as in \eqref{eq:QTPT}, and
using the two identities  
\[
\QO (\QO \QT)^j \QO = \begin{cases}  \frac{1-2p}{p} \QO + \frac{(1-p)}{p} \OpId, & j = 0,\\
  \frac{1-2p}{p} (\QO \QT)^j \QO + \frac{(1-p)}{p} \QT
  (\QO \QT)^{j-1} \QO, & j > 0,
\end{cases}
\]
and 
\[
\QO (\QO \QT)^j = \begin{cases} \QO, & j = 0,\\
  \frac{1-2p}{p} (\QO \QT)^j + \frac{(1-p)}{p} \QT
  (\QO \QT)^{j-1}, & j > 0,
\end{cases}
\]
which both follow from \eqref{eq:QOsq}, gives the desired recurrence
relation. The calculation is rather straightforward and omitted.
\end{proof}

We note that the recurrence relations give $\alpha^{(k)}_k = 1$ for
all $k \ge 0$,  
\[
\beta^{(k)}_{k-1} = \beta^{(k-1)}_{k-2} = \ldots = \beta_0^{(1)} =
\frac{\rho'(1-p)}{p}
\]
for all $k \ge 1$, and
\begin{align*}
  \gamma^{(k)}_{k-2} & = \frac{\rho'(1-p)}{p} \alpha^{(k-1)}_{k-1} = 
  \frac{\rho'(1-p)}{p}, \\
  \delta^{(k)}_{k-3} & = \frac{\rho'(1-p)}{p} \beta^{(k-1)}_{k-2} = 
 \Bigl(\frac{\rho'(1-p)}{p}\Bigr)^2, \\
\end{align*}
for all $k \ge 2$ and $k \ge 3$ respectively. 

\begin{lemma}
  \label{teo:alpha-bound}
  Put $\lambda = \rho'/p$ and observe that by assumption \eqref{rnp},
  $\lambda < 1$. Then for all $j, k \ge 0$, we have
\begin{equation}
\label{eq:alpha-bound}
\max\bigl(|\alpha_{j}^{(k)}|, |\beta_{j}^{(k)}| , |\gamma_{j}^{(k)}|,  |\delta_{j}^{(k)}|\bigr) \le \lambda^{\lceil\frac{k-j}{2}\rceil} 4^{k}. 
\end{equation}
\end{lemma}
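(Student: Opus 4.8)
The plan is to prove \eqref{eq:alpha-bound} by induction on $k$, feeding off the recurrence relations of Lemma \ref{teo:QTPT}. The crucial preliminary observation is that every coefficient that appears in those recurrences is bounded in absolute value by either $1$ or $\lambda := \rho'/p$. Indeed, since $p = m/n^2 \le 1$ one has $\rho' \le \rho'/p = \lambda$; also $|1-2p| \le 1$ and $0 \le 1-p \le 1$, so $\bigl|\tfrac{\rho'(1-2p)}{p}\bigr| = \lambda|1-2p| \le \lambda$ and $\rho'\tfrac{1-p}{p} = \tfrac{\rho'(1-p)}{p} = \lambda(1-p) \le \lambda$; finally $0 \le 1-\rho' = (1-\rho)^2 \le 1$, and $\lambda < 1$ by \eqref{rnp}. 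In particular each bracketed factor $\alpha + (1-\rho')\gamma$ appearing in the recurrences is bounded by $|\alpha| + |\gamma|$, and similarly for the other brackets.

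The base case $k = 0$ is immediate, as the only nonzero coefficient is $\alpha_0^{(0)} = 1$, which matches $\lambda^{\lceil 0 \rceil}4^0 = 1$; for all other $(j,k)$ outside the ranges of \eqref{eq:QTPT} we adopt the stated convention that the coefficient is $0$, for which \eqref{eq:alpha-bound} holds trivially since its right-hand side is nonnegative. For the inductive step I would assume \eqref{eq:alpha-bound} at level $k$ (for all $j \ge 0$) and bound each of $\alpha^{(k+1)}_j, \beta^{(k+1)}_j, \gamma^{(k+1)}_j, \delta^{(k+1)}_j$ term by term, using the recurrence, the coefficient bounds above, and the inductive hypothesis. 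The exponent bookkeeping comes down to elementary ceiling identities: $\lceil (k-j+1)/2 \rceil = \lceil ((k+1)-j)/2 \rceil$, so a term carrying coefficient $1$ together with an index shift $j \mapsto j-1$ preserves exactly the target $\lambda$-exponent $\lceil ((k+1)-j)/2 \rceil$; and $\lceil ((k+1)-j)/2 \rceil \le \lceil (k-j)/2 \rceil + 1$ (likewise $\lceil ((k+1)-j)/2 \rceil = \lceil (k-(j+1))/2\rceil + 1$), so a term carrying a $\lambda$-coefficient and either no index shift or an index shift $j \mapsto j+1$ contributes a $\lambda$-exponent at least the target. A short parity check on $k-j$ then shows that each contributing term is at most $2\lambda^{e}4^k$ with $e \ge \lceil ((k+1)-j)/2 \rceil$, and that the ``diagonal'' $\tfrac{\rho'(1-2p)}{p}$-term picks up an additional factor of $\lambda$ exactly when $k-j$ is odd. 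Since $\lambda < 1$, the sum of the (at most two) contributing terms is in every case at most $(2 + 2\lambda)\,\lambda^{\lceil ((k+1)-j)/2 \rceil}4^k \le 4\,\lambda^{\lceil ((k+1)-j)/2 \rceil}4^k = \lambda^{\lceil ((k+1)-j)/2 \rceil}4^{k+1}$, which closes the induction.

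The only place this count needs care is $j = 0$ in the $\alpha$ and $\beta$ recurrences: there the leading term (which would carry coefficient $1$ and index shift $j \mapsto j-1$) involves the out-of-range index $-1$ and hence vanishes, leaving only terms whose coefficients are $\le \lambda$ (using $\rho' \le \lambda$ and $\rho'\tfrac{1-p}{p} \le \lambda$ for the remaining ``$1_{j=0}$'' terms). This extra factor of $\lambda$ is precisely what is needed to reach the target exponent $\lceil (k+1)/2 \rceil$ instead of the weaker $\lceil k/2 \rceil$ one would otherwise get when $k$ is even; the $\gamma$ and $\delta$ recurrences each contain a single $\lambda$-coefficient term and are handled identically, with room to spare.

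The main obstacle to watch is the arithmetic of the constant, not anything conceptual: because $4^{k+1} = 4\cdot 4^k$, the induction has a budget of exactly a factor $4$, so a crude bound of each of three terms by $2\lambda^{\cdots}4^k$ would overshoot. Making the count come out at exactly $4$ relies on the two structural features isolated above — the vanishing of the leading term when $j = 0$, and the extra power of $\lambda$ produced by the diagonal term (or by the index shift present in the $\gamma,\delta$ recurrences) — together with the inequality $\rho' \le \lambda$. Once these are in place the verification is a routine, if slightly tedious, case analysis over the four coefficient families and the two parity classes of $k-j$.
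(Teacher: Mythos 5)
Your proposal is correct in approach and matches the paper's proof: strong induction on $k$ using the recurrences of Lemma~\ref{teo:QTPT}, with each occurrence of $\rho'$, $\rho'(1-2p)/p$, or $\rho'(1-p)/p$ bounded by $\lambda$, each occurrence of $1-\rho'$ bounded by $1$, and the ceiling-exponent bookkeeping $\lceil(k-j+1)/2\rceil = \lceil((k+1)-j)/2\rceil$ and $\lceil(k-j)/2\rceil+1 \ge \lceil((k+1)-j)/2\rceil$. One small slip: your claim that the sum of the two contributing brackets is ``in every case at most $(2+2\lambda)\lambda^{\lceil((k+1)-j)/2\rceil}4^k$'' is false when $k-j$ is even (there both brackets land exactly on the target exponent, so the sum is $4\lambda^{\lceil((k+1)-j)/2\rceil}4^k$, not $(2+2\lambda)\lambda^{\cdots}4^k$); the final inequality $\le 4\lambda^{\lceil((k+1)-j)/2\rceil}4^k$ you conclude with is still what holds in all cases, so the induction closes correctly — this is exactly what the paper does, bounding the sum directly by $4 \cdot 4^k = 4^{k+1}$ without the intermediate $(2+2\lambda)$ claim.
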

\begin{proof}
  We prove the lemma by induction on $k$. The claim is true for $k =
  0$. Suppose it is true up to $k$, we then use the recurrence
  relations given by Lemma \ref{teo:QTPT} to establish the claim up to
  $k+1$. In details, since $|1 -\rho'| < 1$, $\rho' < \lambda$ and
  $|1-2p| < 1$, the recurrence relation for $\alpha^{(k+1)}$ gives
\begin{align*}
  |\alpha_{j}^{(k+1)}| & \le |\alpha_{j-1}^{(k)}| + |\gamma_{j-1}^{(k)}| + \lambda [|\alpha_{j}^{(k)}| + |\gamma_{j}^{(k)}|] + 1_{j = 0} \lambda [|\beta_{0}^{(k)}| 
+ |\delta_{0}^{(k)}|]\\
  & \le 2 \, \lambda^{\lceil\frac{k+1-j}{2}\rceil} 4^{k} 1_{j > 0}
  + 2 \lambda^{\lceil\frac{k-j}{2}\rceil+1} 4^{k} + 2
 \lambda^{\lceil\frac{k}{2}\rceil+1} 4^{k} 1_{j = 0}\\
  & \le 2 \, \lambda^{\lceil\frac{k+1-j}{2}\rceil} 4^{k} 1_{j > 0}
  + 2 \, \lambda^{\lceil\frac{k+1-j}{2}\rceil} 4^{k} +
  2 \, \lambda^{\lceil\frac{k+1}{2}\rceil} 4^{k} 1_{j = 0}\\
  & \le \lambda^{\lceil\frac{k+1-j}{2}\rceil} 4^{k+1},
\end{align*}
which proves the claim for the sequence $\{\alpha^{(k)}\}$. We bound
$|\beta_{j}^{(k+1)}|$ in exactly the same way and omit the
details.  Now the recurrence relation for $\gamma^{(k+1)}$ gives
\begin{align*}
  |\gamma_{j}^{(k+1)}| & \le \lambda [|\alpha_{j+1}^{(k)}| +
  |\gamma_{j+1}^{(k)}|]\\
& \le 2  \lambda^{\lceil\frac{k-j-1}{2}\rceil+1} 4^k \\
& \le 4^{k+1}   \lambda^{\lceil\frac{k+1-j}{2}\rceil},
\end{align*}
which proves the claim for the sequence $\{\gamma^{(k)}\}$. The quantity
$|\delta_{j}^{(k+1)}|$ is bounded in exactly the same way, which
concludes the proof of the lemma.
\end{proof}

We are now well positioned to prove Lemma \ref{teo:equiv} and begin by
recording a useful fact. Since for any $X$, $\|\PTp(X)\| \le \|X\|$,
and
\[
\QT = \PT - \rho' \OpId = (I - \PTp) - \rho' \OpId = (1-\rho') \OpId - \PTp,
\]
the triangular inequality gives that for all $X$,
\begin{equation}
  \label{eq:naivebound}
  \|\QT(X)\| \le 2 \|X\|.
\end{equation}

Now
\begin{multline*}
  \|(\QO \PT)^k \QO(E)\| \le  
\sum_{j = 0}^{k} |\alpha^{(k)}_j|  \|(\QO \QT)^j \QO(E)\| +  
\sum_{j = 0}^{k-1}  |\beta^{(k)}_j|  \|(\QO \QT)^j(E)\|\\
  + \sum_{j = 0}^{k-2} |\gamma^{(k)}_j| \|\QT (\QO \QT)^j
  \QO (E)\| + \sum_{j = 0}^{k-3} |\delta^{(k)}_j| \|\QT (\QO
  \QT)^j(E)\|,
\end{multline*}
and it follows from \eqref{eq:naivebound} that 
\[
\|(\QO \PT)^k \QO(E)\| \le \sum_{j = 0}^{k}
(|\alpha^{(k)}_j| + 2 |\gamma^{(k)}_j|) \|(\QO \QT)^j
\QO(E)\| + \sum_{j = 0}^{k-1} (|\beta^{(k)}_j| + 2
|\delta^{(k)}_j|) \|(\QO \QT)^j(E)\|.
\]
For $j = 0$, we have $\|(\QO \QT)^j(E)\| = \|E\| = 1$ while for $j > 0$
\[
\|(\QO \QT)^j(E)\| = \|(\QO \QT)^{j-1} \QO \QT (E)\| =
(1-\rho') \|(\QO \QT)^{j-1} \QO (E)\|
\] 
since $\QT(E) = (1-\rho')(E)$. By using the size estimates given by
Lemma \ref{teo:alpha-bound} on the coefficients, we have 
\begin{align*}
  \frac{1}{3} \|(\QO \PT)^k \QO(E)\| & \le \frac{1}{3}
  \sigma^{\frac{k+1}{2}} + 4^k \sum_{j = 0}^{k-1}
  \lambda^{\lceil\frac{k-j}{2}\rceil} \sigma^{\frac{j+1}{2}} + 4^{k}
  \sum_{j = 0}^{k-1}
  \lambda^{\lceil\frac{k-j}{2}\rceil} \sigma^{\frac{j}{2}}\\
  & \le \frac{1}{3} \sigma^{\frac{k+1}{2}} + 4^k \sigma^{\frac{k+1}{2}} \sum_{j
    = 0}^{k-1} \lambda^{\lceil\frac{k-j}{2}\rceil} \sigma^{-\frac{k-j}{2}}
  + 4^{k} \sigma^{\frac{k}{2}} \sum_{j = 0}^{k-1}
  \lambda^{\lceil\frac{k-j}{2}\rceil} \sigma^{-\frac{k-j}{2}}\\
  & \le \frac{1}{3} \sigma^{\frac{k+1}{2}} + 4^k \left(\sigma^{\frac{k+1}{2}} +
    \sigma^{\frac{k}{2}}\right) \sum_{j = 0}^{k-1}
  \lambda^{\lceil\frac{k-j}{2}\rceil} \sigma^{-\frac{k-j}{2}}.
\end{align*}
Now, 
\[
\sum_{j = 0}^{k-1} \lambda^{\lceil\frac{k-j}{2}\rceil}
\sigma^{-\frac{k-j}{2}} \le \left(\frac{\lambda}{\sqrt{\sigma}} +
  \frac{\lambda}{\sigma}\right)\frac{1}{1-\frac{\lambda}{\sigma}} \le \frac{2}{3} \sqrt{\sigma}
\]
where the last inequality holds provided that $4\lambda \le
\sigma^{3/2}$. The conclusion is
\[
\|(\QO \PT)^k \QO(E)\| \le (1+4^{k+1}) \sigma^{\frac{k+1}{2}},
\]
which is what we needed to establish.

\subsection*{Acknowledgements}
E.~C. is supported by ONR grants N00014-09-1-0469 and N00014-08-1-0749
and by the Waterman Award from NSF. E.~C. would like to thank Xiaodong
Li and Chiara Sabatti for helpful conversations related to this
project.  T.~T. is supported by a grant from the MacArthur Foundation,
by NSF grant DMS-0649473, and by the NSF Waterman award.

\small
\bibliographystyle{plain}
\bibliography{OptimalCompletion}

\end{document}